
\documentclass[12 pt, journal, onecolumn]{IEEEtran}

\usepackage{cite}
\usepackage{enumitem}
\usepackage{bm}
\usepackage{tikz}
\usepackage{soul}
\usetikzlibrary{automata,positioning}
\usepackage[colorinlistoftodos]{todonotes}
\usepackage{float}
\usepackage{subfig}
\usepackage{cancel}
\usepackage{amsmath,amssymb,amsthm}
\usepackage{graphicx,color}
\usepackage[figuresright]{rotating}
\usepackage[utf8]{inputenc} 
\usepackage[T1]{fontenc}    
\usepackage{url}            
\usepackage{booktabs}       
\usepackage{enumitem}
\usepackage{amsfonts}       
\usepackage{amssymb,mathtools}
\usepackage{amsfonts}
\usepackage{nicefrac}       
\usepackage{microtype}      
\usepackage{amsmath,amssymb,amsthm}

\usepackage{pbox}
\usepackage{tkz-euclide}
\usetkzobj{all}
\usepackage{tikz}
\usepackage{url}
\usepackage{todonotes}

\usepackage{bm}

\usepackage{times}


\usepackage{amsmath}
\usepackage{amsfonts}
\usepackage{amssymb}
\usepackage{graphicx}
\usepackage{algorithm}
\usepackage[noend]{algpseudocode}

\usepackage[bookmarks=false,colorlinks=true,urlcolor=blue,citecolor=blue,linkcolor=blue]{hyperref}
\usepackage{cleveref}

\newtheorem{lem}{Lemma}
\newtheorem{thm}{Theorem}
\newtheorem{defn}{Definition}

\newtheorem{clm}{Claim}

\medmuskip=0mu
\thinmuskip=0mu
\thickmuskip=0mu









\title{CodeNet: Training Large Scale Neural Networks in Presence of Soft-Errors} 

\author{Sanghamitra~Dutta, Ziqian~Bai,   Tze Meng Low and Pulkit Grover\\ 
\thanks{ S. Dutta, T. M. Low and P. Grover are with Department of ECE at Carnegie Mellon University, PA, United States. Z. Bai is with Simon Fraser University, BC, Canada. [Corresponding Author Contact: sanghamd@andrew.cmu.edu]}
}






\begin{document} 
\bstctlcite{IEEEexample:BSTcontrol}

\baselineskip24pt


\maketitle



\begin{abstract}
This work proposes the first strategy to make distributed training of neural networks resilient to computing errors, a problem that has remained unsolved despite being first posed in 1956 by von Neumann. He also speculated that the efficiency and reliability of the human brain is obtained by allowing for low power but error-prone components with redundancy for error-resilience. It is surprising that this  problem remains open, even as massive artificial neural networks are being trained on increasingly low-cost and unreliable processing units. Our coding-theory-inspired strategy, ``CodeNet,'' solves this problem by addressing three  challenges in the science of reliable computing: (i) Providing the first strategy for error-resilient neural network training by encoding each layer separately; (ii) Keeping the overheads of coding (encoding/error-detection/decoding) low by obviating the need to re-encode the updated parameter matrices after each iteration from scratch. (iii) Providing a completely decentralized implementation with no central node (which is a single point of failure), allowing all primary computational steps to be error-prone. We theoretically demonstrate that CodeNet has higher error tolerance than replication, which we leverage to speedup computation time. Simultaneously, CodeNet requires lower redundancy than replication, and equal computational and communication costs in scaling sense. We first theoretically demonstrate the benefits of using CodeNet over replication in reducing the expected computation time accounting for checkpointing. Our experiments show that CodeNet achieves the best accuracy-runtime tradeoff compared to both replication and uncoded strategies. CodeNet is a significant step towards biologically plausible neural network training, that could hold the key to orders of magnitude efficiency improvements.

\end{abstract}


\section{Introduction}

Inspired by the success of Shannon's theory of information~\cite{shannon1948} in addressing errors in communication, and the remarkable efficiency and speed of the human brain in processing information with seemingly error-prone components, von Neumann began the study of computing in presence of noisy and erroneous computational elements in 1956~\cite{von1956probabilistic}, as is also evident from the influence of the McCulloch-Pitts model of a neuron~\cite{mcculloch1943logical} in his work. It is often speculated~\cite{sreenivasan2011error} that the error-prone nature of brain's hardware helps it be more efficient: rather than paying the cost at a component level, it may be more efficient to accept component-level errors, and utilize sophisticated error-correction mechanisms for overall reliability of the computation\footnote{See also the ``Efficient Coding Hypothesis'' of Barlow~\cite{barlow1961possible} and a recent validation of this concept in computing linear transforms~\cite{yang2017ITTrans}.}. The  brain operates at a surprisingly low power of about $15$ W~\cite{yanushkevich2013introduction}, and attains high accuracy and speeds, despite individual neurons in the brain being slow and error-prone~\cite{sreenivasan2011error,olshausen1996emergence,barlow1961possible,whatmough2018dnn,wang2018training}. Even today, the brain's system-level energy requirement is orders of magnitude smaller than the most efficient computers, despite substantial efforts in imitating the brain, going as far as using spiking neural networks~\cite{truenorth}. While there is growing interest in training using low-cost and unreliable hardware~\cite{binaryconnect,sakr2017analytical,sakr2019accumulation,wang2019deep, yang2018bit}, they still use significantly more power-consuming and reliable components than that used in the brain. Thus, von Neumann's original motivation of training neural networks in presence of noise and errors still remains open today, and might hold the key to orders of magnitude improvements in efficiency while maintaining high speed in training neural networks. Towards addressing this important intellectual question, this work provides the first unified strategy that addresses error-resilience in every operation during the training of Deep Neural Networks (DNNs), which are a form of Artificial Neural Networks (ANNs). 



ANNs, proposed in mid 1900s~\cite{rosenblatt1958perceptron,rumelhart1986learning}, have revolutionized modern machine learning and data mining. 
Training large-scale neural networks with millions of parameters~\cite{krizhevsky2012imagenet,taigman2014deepface,he2016deep} often requires large training time exceeding a few days. 
The ever-increasing size of neural networks creates a pressing demand for resources and power for fast and reliable training. In our experiments that appear later, we demonstrate that ignoring errors entirely during neural network training can severely degrade the performance\footnote{Interestingly, errors can sometimes be useful in non-convex minimization because they can help the computation to exit local minima and saddle points. However, this occurs only when the error-magnitude is small. Soft-errors can cause bit-flips in most-significant bits, and thus the corresponding error-magnitudes can be quite large.}, even with a probability of error as low as $3\times 10^{-4}$ (see Fig.~\ref{fig:accuracy_results}). Instead, by embracing errors in computing, one may be able to reduce the power budget of each individual computational node. And, use of system-level error-correction can provide high reliability on error-prone components as long as the overhead is kept small.

In this paper, we propose \textit{CodeNet}, a novel strategy that enables fast and reliable training of Deep Neural Networks (DNNs) in distributed and parallelized architectures that use unreliable processing components. We advance on ideas from information and coding theory to design novel error-correction mechanisms that use redundancy to compute reliably in presence of ``soft-errors,'' \textit{i.e.}, undetected errors that can corrupt the computation of a node, producing garbage outputs that are far from the true (noiseless) output. CodeNet is completely decentralized, allowing for all primary computational operations to be error-prone, including the nonlinear step which is an obstacle for coding in computing because most coding techniques are linear. Further, even error-detection and decoding are allowed to be erroneous and are accomplished in a decentralized manner, by introducing some very low-complexity verification steps that are assumed to be error-free. Error-prone detection and decoding is important because of two reasons: (a) it avoids having a single point-of-failure in the system (if central node fails, so does the algorithm) and allows for all computations to be error-prone including encoding/decoding/nonlinear activation/diagonal matrix post-multiplication; (b) it is also an important requirement for \textit{biological plausibility} of any neural computation algorithm (see e.g.~\cite{olshausen1996emergence}). Lastly, CodeNet also ensures that the additional overheads due to coding are kept low and comparable to replication, since matrices are not required to be encoded afresh even though they update at each iteration, and only vectors are encoded at each iteration which is much cheaper computationally.

\textbf{Summary of Theoretical and Experimental Results:} We first analytically characterize the worst-case error tolerance of CodeNet in Theorem~\ref{thm:recovery_threshold}. Then, we show in Theorem~\ref{thm:time} that, accounting for errors and checkpointing time, CodeNet can offer unboundedly large reductions in the expected training time under fixed storage per node over classical replication, while also requiring fewer redundant nodes. We also compare the computational and communication complexity of CodeNet with classical replication in Theorem~\ref{thm:complexity}, showing that the additional overhead due to coding is negligible using an efficient decentralized implementation with standard communication protocols~\cite{chan2007collective} across the nodes. 
Finally, our experiments on Amazon EC2 clusters (with artificially introduced errors) reveal that even seemingly small error probabilities can severely degrade the performance of classical non-error-resilient training. In specific instances, we observe that CodeNet offers $6$x speedup over replication under equal storage per node (see Fig.~\ref{fig:accuracy_results}) to complete the same number of iterations (and achieve the same accuracy) even while using fewer redundant nodes.
\begin{figure}
\centering
\includegraphics[height=8cm]{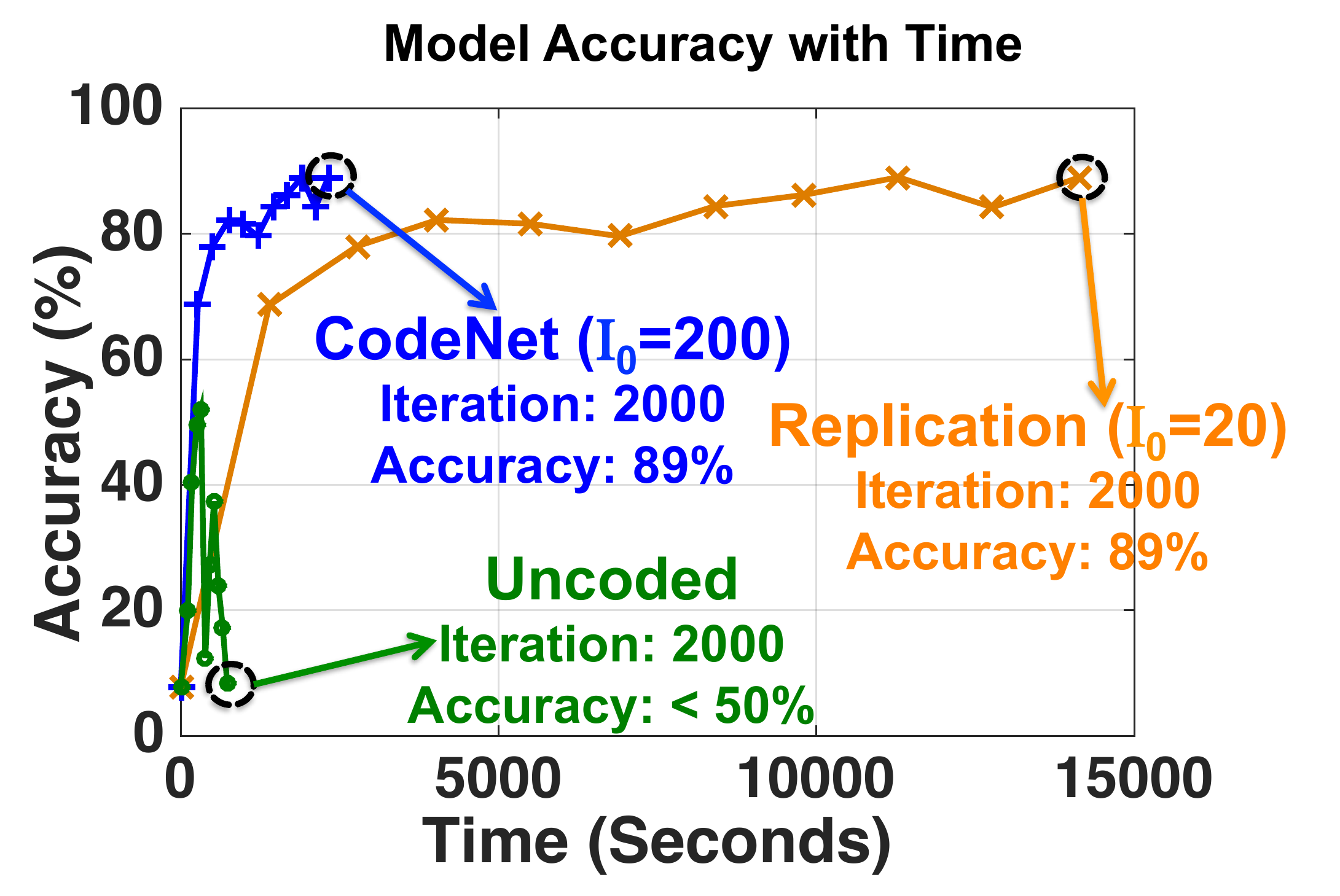}
\caption{Experimental results on Amazon EC2: we train a $3$ layer DNN of configuration $[784 \ 10^4 \ 10^4 \ 10]$ on MNIST dataset using backpropagation with Stochastic Gradient Descent (batch size $1$). We assume that each node can be affected by soft-errors with probability $0.0003$ independently, either during the matrix-vector product in feedforward or backpropagation stage, or during the rank-$1$ update in update stage at any layer, which happen to be the three most computationally intensive operations at any layer during training. CodeNet uses $38$ nodes and completes $2000$ iterations of training in $2322$ seconds, achieving an accuracy of $89\%$. A comparable replication strategy, with equal storage per node, uses more nodes ($40$) but takes much longer ($14140$ seconds) to complete $2000$ iterations and achieve the same accuracy. However, an Uncoded strategy with no error-correction has very poor accuracy. CodeNet thus improves the wall-clock runtime over comparable strategies while also providing acceptable performance. It enables forward computing when an error occurs and reduces the need to roll-back to the iteration of the last checkpoint. }
\label{fig:accuracy_results}
\end{figure}

\subsection{Related Works}
Because of their potential to go undetected, soft-errors are receiving increasing attention (sometimes even regarded as ``the scariest aspect of supercomputing’s monster in the closet'' \cite{geist2016supercomputing}). Common causes for soft-errors include: (i) Exposure of chips to cosmic rays from outer space causing unpredictable bit-flips; (ii) Manufacturing and soldering defects; and (iii) Memory and storage faults etc.~\cite{geist2016supercomputing,ziegler1996terrestrial}. Even for specialized nanoscale circuits, the \textit{International Technology Roadmap} for semiconductors predicts that as devices become smaller, thermal noise itself will cause systems to fail catastrophically during normal operation even without supply voltage scaling \cite{yanushkevich2013introduction}, thus increasing the need for fault-tolerant training.

Fault tolerance has been actively studied since von Neumann's  work~\cite{von1956probabilistic} (see~\cite{Tay_Bel_68,hadjicostis2005coding,Pip_TIT_91,Spi_FCS_96,yang2017ITTrans}). Existing techniques fall into two categories: roll-backward and roll-forward error correction. Roll-backward error correction refers to different forms of \textit{checkpointing}~\cite{faultbook}, where the computation-state is transmitted to, and stored in, a disk at regular programmer-defined intervals. When errors are detected, the last stored state is retrieved from the disk, and the computation is resumed from the previous checkpoint. However checkpointing comes with immense communication costs~\cite{faultbook}. Retrieving the state of the system from the disk is extremely time-intensive, and can significantly slow down the computation if the errors are frequent.

An alternative (and often complementary) approach is roll-forward error correction where redundancy is introduced into the computation itself, and detected errors are corrected prior to proceeding. Use of sophisticated (i.e., non-replication) based error-correcting codes in roll-forward error correction dates at least as far back as 1984, when Algorithm-Based-Fault-Tolerance (ABFT) was proposed by Huang and Abraham~\cite{ABFT1984} for certain linear algebraic operations. ABFT techniques \cite{ABFT1984,faultbook} mainly use parity checks to detect and sometimes correct faults, so that computation can proceed without the need to roll back, when the number of errors are limited. Here, we are interested in soft-errors in a completely decentralized setup, which impose further difficulties because \textit{the error correction mechanisms are themselves required to be {decentralized} and error-prone}.

Recently, ``Coded Computation'' \cite{NewsletterPaper,gauri2014delay,gauri2015straggler,gauri2014efficient,lee2016speeding,lee2018speeding,dutta2016short,dutta2017coded,lee2017matrix,dutta2018optimal,dutta2018unified,yu2017polynomial,GC1,GC2,dutta2018optimal,yu2018entangled,li2015coded,GC3,yang2016logistic,yang2017encoded,yang2016convolution,yang2016encoded,Salman1,Salman2,Salman3,Salman4,Emina1,Emina2,Virtualization,heterogeneousclusters,GC4,Suhas1,Suhas2,yang2017NIPS,Ramtin1,lee2017multicore,jeongFFT,baharav2018straggler,suh2017matrix,mallick2018rateless, wang2018coded,aliasgari2019coded, wang2018fundamental,severinson2018block,ye2018communication,haddadpour2018codes,haddadpour2018straggler,yang2018ISIT,ferdinand2016anytime,ferdinand2018hierarchical,song2017pliable,kosaian2018learning,seth2018bigdata,jeong2018locally} has emerged as an evolution on ABFT techniques to address the problem of straggling nodes, \textit{i.e.}, when a few delay-prone nodes delay the entire computation~\cite{straggler_tail} as the master node has to wait for all to finish. These works provide strategies that are, in some cases~\cite{dutta2016short,yu2017polynomial,dutta2018optimal,GC2}, scaling sense improvements over ABFT, and also obtain fundamental limits on resilience under given storage or communication constraints. Please refer to \cite{dutta2018DNN2,dutta2018unified} for our follow-up work on coded neural network training.

\begin{figure*}
\centering
\subfloat[Feedforward stage: Generates an estimated label.]{\includegraphics[height=2cm]{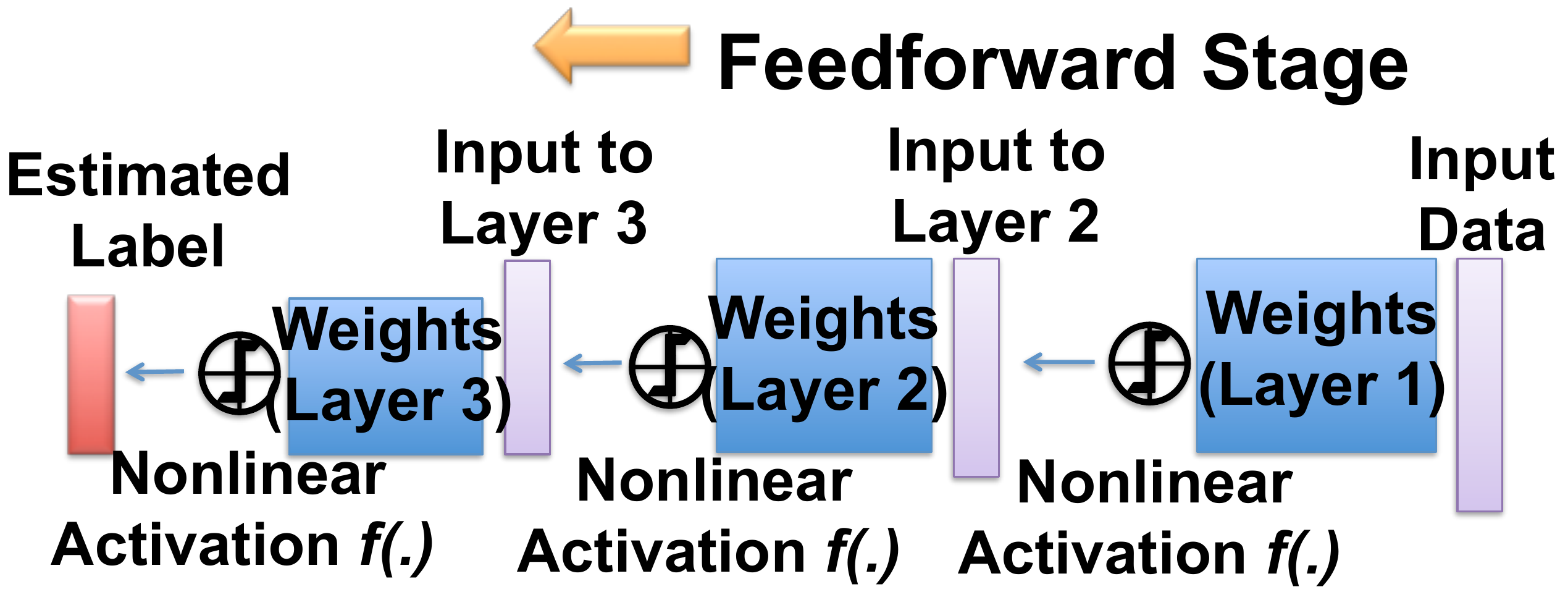}%
\label{fig:DNN_step1}}
\hspace{1.4cm}
\subfloat[Transition from feedforward to backpropagation stage at last layer.]{\includegraphics[height=2cm]{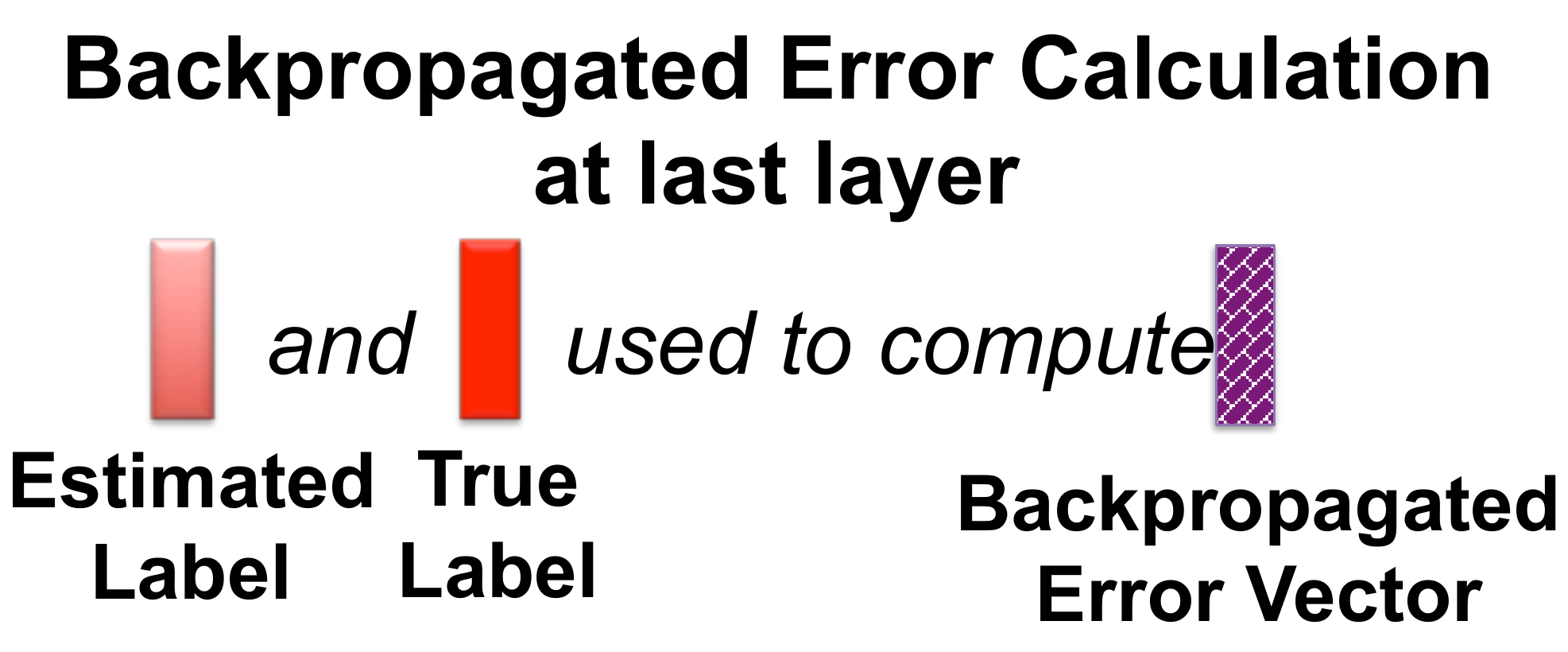}%
\label{fig:DNN_step2}}\\
\centering
\subfloat[Backpropagation stage: Error vector propagates backward.]{\includegraphics[height=2cm]{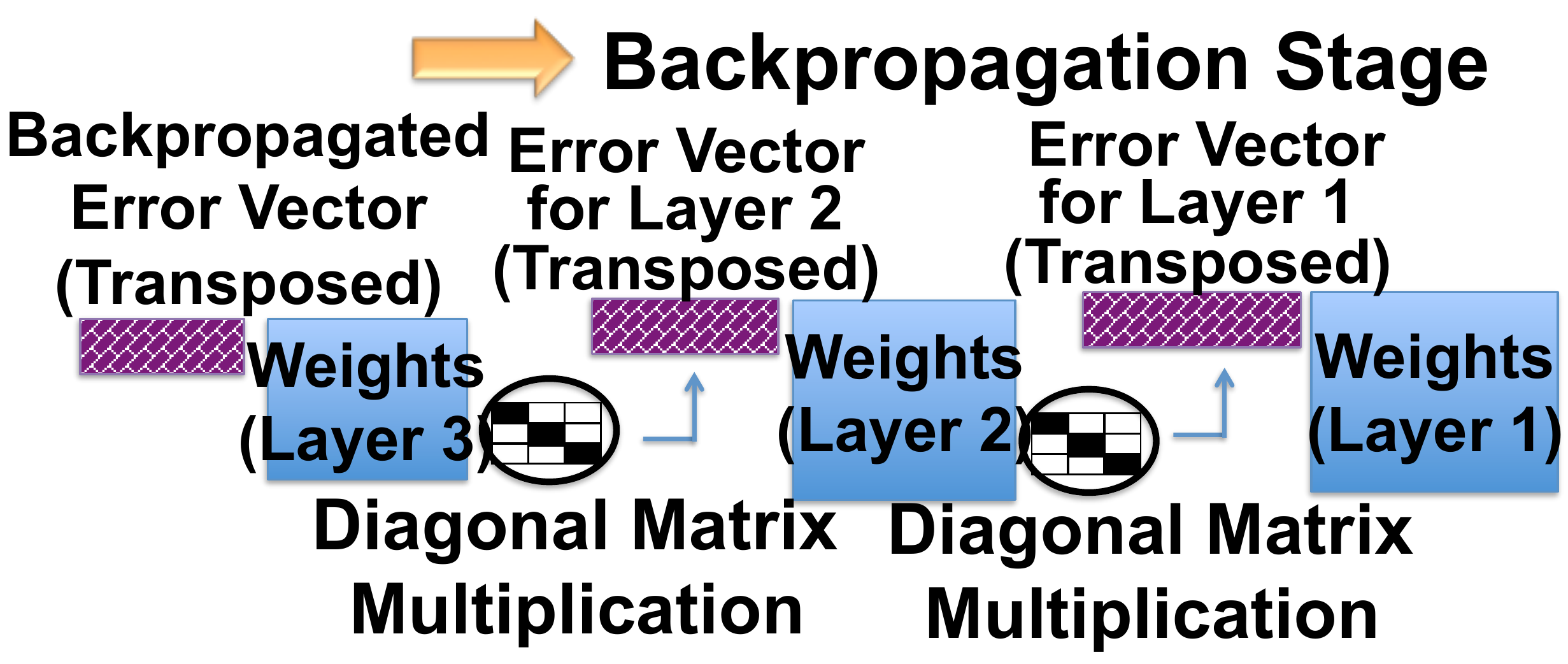}%
\label{fig:DNN_step3}}
\hspace{1.8cm}
\subfloat[Update stage: Rank $1$ updates on weight matrix. ]{\includegraphics[height=2cm]{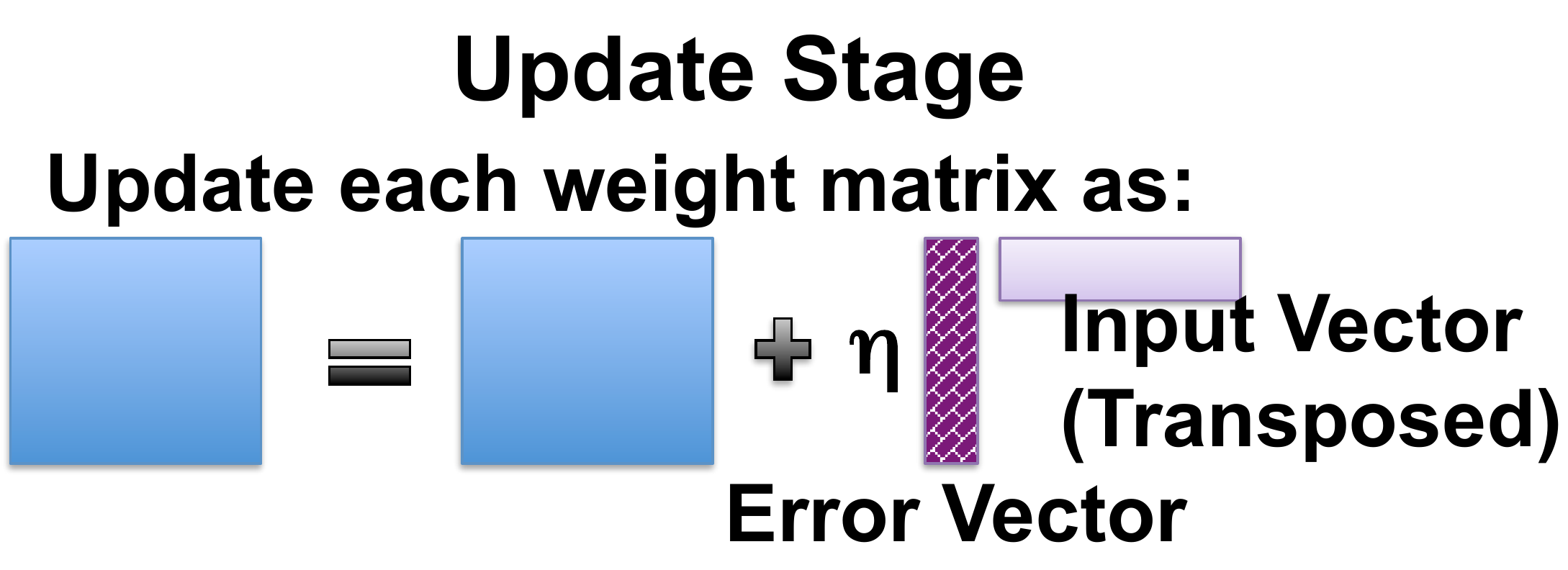}%
\label{fig:DNN_step4}}
\caption{DNN training: (a) Feedforward stage: the data vector is passed forward through all the layers (a matrix-vector product followed by an activation at each layer) producing an estimated label vector. (b) The backpropagated error vector is generated at the last layer comparing the estimated label with the true label. (c) Next, in the Backpropagation stage, the backpropagated error vector propagates backward across the layers (a matrix-vector product followed by a multiplication with a diagonal matrix) to generate the backpropagated error vector for every layer. (d) Finally, in the Update stage, each layer updates itself using its backpropagated error vector and its own input vector. \label{fig:DNN_training} }
\end{figure*}
\subsection{Key Novelties of CodeNet}
We are required to make all the stages of DNN training, namely, feedforward, backpropagation and update (see next section), resilient to errors. We note that for each of these stages, there are similarities that our strategy shares with the existing ABFT and coded computing literature (e.g.~\cite{ProductCodes}). The strategies for feedforward and backpropagation resilience are similar to those for matrix-vector products, and that for the update stage is similar to rank-$1$ updates that arise in some matrix multiplication algorithms (e.g., see~\cite{bouteiller2015algorithm,summa}). Our key contribution is in observing that a unified coding strategy, that advances on these existing strategies, can be weaved into the architecture of DNN training, that automatically makes all three stages error-resilient, without significant overhead. Furthermore, our strategy incorporates three key advances over the existing literature in coded computing and ABFT that are critical to achieving error-resilience in DNN training in practical systems:\\
(i) \textbf{Dealing with nonlinear activations:} An important conceptual difficulty is the nonlinearity of the element-wise activation function after matrix-vector computation in the feedforward stage. Because the most commonly used techniques of error-correcting codes are linear, nonlinear operations can prevent them from preserving codeword structures. Here, we first circumvent this problem by encoding the most computationally intensive operations~\cite{dally2015} (matrix-vector and matrix-update steps) separately at each layer. We demonstrate the utility of CodeNet under two error models, one where only the longer, most computationally intensive steps are vulnerable to errors and one where all the primary steps \textit{including the nonlinear activations} are allowed to be error-prone, by introducing some extremely low-complexity verification steps in the latter, which are assumed to be error-free.\\
(ii) \textbf{Decentralized architecture allowing for all primary operations to be error-prone:} Unlike much of the literature in coded computing (with few exceptions, e.g.~\cite{yang2017ITTrans}), we assume that there is \textit{no single master node} that is available to distribute tasks and collect outputs from parallel nodes. In centralized systems, the master node is a single point-of-failure: if it is {erroneous}, the entire computation can fail. Even in ABFT literature, where decentralization is common, it is assumed that encoding, error-detection and decoding are reliable. In this work, our implementation is completely decentralized, and allows all the primary steps, including the less computationally intensive steps, such as, encoding, error-detection, decoding, as well as, the  nonlinear activation and diagonal matrix post-multiplication, to incur errors. To achieve error-resilient detection and decoding, we only introduce two verification steps of very low-complexity as compared to the primary steps, that are assumed to be error-free.\\
(iii) \textbf{Reducing communication and computing overhead of coding:} Our careful choice of unified coding strategy enables low overhead of coding in two ways: (a) the same initial encoding on the weight matrices introduces resilience in the feedforward, backpropagation and update stages; (b) the overhead of this encoding, which can be prohibitively large for matrices that evolve across iterations, is kept low by performing \textit{encoded updates} that preserve the coded structure of weight matrices while only requiring encoding of \textit{vectors} used in the update stage\footnote{Straightfoward use of existing works on ABFT \cite{bouteiller2015algorithm,ABFT1984} and coded computing~\cite{dutta2016short,lee2018speeding,GC2} for a matrix-vector product (e.g.~ $\bm{W}_{N\times N} \bm{x}_{N \times 1}$) requires encoding $\bm{W}$ from scratch in each iteration. 
Because encoding, in scaling sense, is of the same computational complexity as the matrix-vector product itself, \textit{i.e.}, $\Theta(N^2)$, it can introduce significant overhead and unreliability.}. 

We compare with two natural competing strategies: (i) an \textit{uncoded} strategy that uses no redundancy. This strategy performs no error detection, and simply ignores soft-errors at every iteration. This can cause errors to keep accumulating, which can severely degrade the performance of the neural network, as demonstrated by our experimental results. (ii)  \textit{replication} of the entire network, followed by looking for mismatch between outputs of nodes that perform identical computations at every iteration to detect errors. While this provides fault-tolerance: (a) it requires high redundancy (a factor of 2); and (b) it can increase the training time substantially because the training has to return to the last checkpoint every time an error occurs and restart from that state. 
CodeNet also uses checkpointing when the the number of errors exceeds its error tolerance, but it requires much less frequent checkpointing as compared to replication. This is because CodeNet
detects and corrects errors without the expensive roll-back to previous checkpoint (when the number of errors are limited), and is thus able to proceed forward with the computation.

\section{Background and Problem Formulation}
In this section, we first discuss the primary computational operations that are required to be performed at each layer during error-free DNN training and then move on to the problem formulation and goals. The background we provide is limited; we refer to \Cref{appendix:DNN_background} for more details. Throughout the paper, we use bold-face letters to denote matrices and vectors. 

\subsection{Background: Primary Operations of DNN Training}
Assume that we are training a DNN with $l=1,2,\ldots,L$ layers (excluding the input layer that can be thought of as layer $0$) using backpropagation algorithm \cite{rumelhart1986learning} with Stochastic Gradient Descent (SGD) with a mini-batch size of $1$. The DNN thus consists of $L$ parameter matrices (also called weight matrices), one for each layer, as illustrated in Fig.~\ref{fig:DNN_training}. Let $N_{l}$ denote the number of neurons in the $l$-th layer. Thus, for layer $l$, the weight matrix to be trained is of dimension $N_{l} \times N_{l-1}$ as it represents the connections between the neurons of layer $l$ and $l-1$. Note that for $l=1$, the weight matrix is of dimension $N_1\times N_0$, where $N_0$ is the dimension of the input data vector. 
In every iteration, the DNN (\textit{i.e.} the $L$ weight matrices) is trained based on a single data point and its true label through three stages, namely, feedforward, backpropagation and update, as shown in Fig.~\ref{fig:DNN_training}. At the beginning of every iteration, the first layer (layer $1$) accesses the data vector (input to layer $1$) from memory, and starts the feedforward stage which propagates recursively across layers, from $l=1$ to $L$. For a generic layer, we denote the weight matrix and feedforward input to the layer as $\bm{W}^l$ and $\bm{x}^l$ respectively. The operations performed at layer $l$ during feedforward stage (see Fig.~\ref{fig:DNN_step1}) are:

\noindent \textbf{[Step O1]} Compute matrix-vector product: $\bm{s}^l=\bm{W}^l\bm{x}^l$. \\
\noindent \textbf{[Step C1]} Compute input for $(l+1)$-th layer as: $\bm{x}^{(l+1)}=f(\bm{s}^l)$, by applying nonlinear activation function $f(\cdot)$ element-wise.

At the last layer (layer $l=L$), the backpropagated error vector is
generated by accessing the true label from memory and the estimated label (see Fig.~\ref{fig:DNN_step2}), which is the final feedforward output of the last layer. Then, the backpropagation stage propagates recursively across layers, from $l=L$ to $1$, generating backpropagated error vectors for the next layer. Let the backpropagated error vector for a layer be denoted by $\bm{\delta}^l$. The operations in the backpropagation stage (see Fig.~\ref{fig:DNN_step3}) for $l$-th layer are:

\noindent \textbf{[Step O2]} Compute matrix-vector product: $(\bm{c}^l)^T=(\bm{\delta}^l)^T\bm{W}^l$. \\
\noindent \textbf{[Step C2]} Compute backpropagated error vector for $(l-1)$th layer as: $(\bm{\delta}^{(l-1)})^T=(\bm{c}^l)^T\bm{D}^l$, where $\bm{D}^l$ is a diagonal matrix whose $i$-th diagonal element is a function $g(\cdot)$ of the $i$-th element of $\bm{x}^l$, such that $g(f(u))=f'(u)$ for the chosen nonlinear activation function $f(\cdot)$ in the feedforward stage.

Finally, every weight matrix moves on to the update stage (see Fig.~\ref{fig:DNN_step2}) as follows:

\noindent \textbf{[Step O3]} Perform a rank-$1$ update: $\bm{W}^l \leftarrow \bm{W}^l + \eta \bm{\delta}^l(\bm{x}^l)^T$ where $\eta$ is the learning rate.

\subsection{Desirable Parallelization Schemes} We are interested in completely decentralized, model-parallel architectures where each layer is parallelized across multiple nodes (that can be reused across layers) because the nodes cannot store the entire matrix $\bm{W}^l$ for each layer. Because the steps O1, O2 and O3 are the most computationally intensive steps at each layer, we restrict ourselves to schemes where these three steps for each layer are parallelized across the $P$ nodes. A small number of additional nodes can be used as redundant nodes for error resilience.

\noindent \textbf{Remark:} We consider model-parallel strategies where during an iteration at any layer, error detection or correction will happen only twice, \textit{i.e.}, at step C1 (after O1) and step C2 (after O2). As there is no communication immediately after O3, all errors in step O3 will be detected and corrected in the next iteration at that layer, when that erroneous node produces an output again for the first time for that layer, either after step O1 or O2. Thus, in the worst case, we should be able to correct $(t_1+t_3)$ errors after step O1 and $(t_2+t_3)$ errors after step O2 under Error Models $1$ or $2$. Moreover, under Error Model $2$, we should be able to detect if more errors have occurred even if we cannot correct them.

\subsection{Assumptions}
\paragraph{Data and Label Access} We assume that there is a source or shared memory from where all nodes can access the data (for the first layer, during feedforward stage) and its label (for the last layer, while transitioning from feedforward stage to backpropagation stage). 

\paragraph{Error Models} We use two error models:

\noindent \textbf{Error Model $1$ (Worst-case model):} \textit{Any} node can be affected by soft-errors but \textit{only} during the steps O1, O2 and O3, which are the most computationally intensive operations in DNN training\cite{dally2015}. There are no errors in encoding, error-detection, decoding, nonlinear activation or diagonal matrix post-multiplication, which require negligible time and number of operations\footnote{In general, the shorter the computation, the lower is the probability of soft-errors. E.g., the occurrence of soft-errors is assumed to be a Poisson process in~\cite{li2007memory}, so that the number of soft-errors in a time interval becomes a Poisson random variable with mean proportional to the interval length.} because most of the time and resources are spent on steps O1, O2 and O3. The total number of nodes in error at any layer during O1, O2 and O3 are bounded by $t_1$, $t_2$ and $t_3$ respectively. There is no assumption on the distribution of the errors for this model.\\
\noindent \textbf{Error Model $2$ (Probabilistic model):} \textit{Any} node can be affected by soft-errors during any primary operation (including encoding/error-detection/decoding/nonlinear activation/diagonal matrix post-multiplication), and there is no upper bound on the number of errors. For conceptual simplicity, the output of an erroneous node is assumed to be the correct output corrupted by an additive continuous valued random noise. However, we introduce two verification steps under Error Model $2$ to check for errors during error-detection and decoding, that have very low complexity as compared to any other primary step, and hence, are assumed to be error-free.\\
\noindent \textbf{Remark:} Error Model $1$ is a ``worst-case'' abstraction which is used when it is difficult to place probabilistic priors on errors. Error Model $2$, while allowing for errors in all of these operations, also makes simplifying assumptions. In particular, the continuous distribution of noise simplifies our analyses by avoiding complicated probability distributions that arise in finite number of bits representations. This simplification can  lead to somewhat optimistic conclusions, e.g., it allows us to detect the occurrence of errors (``garbage outputs'') in a coded computation with probability $1$ even if they are too many errors to be corrected (because the noise takes any specific value with probability zero; see [Lemma~\ref{lem:error_tolerance},  \Cref{appendix:error_detection}]). This model is increasingly accurate in the limit of large number of bits of precision. In practical implementations, our results that hold with probability $1$ should be interpreted as holding with high probability (e.g. it is unlikely, but possible, that two erroneous nodes produce the same garbage output). Further, we note that, because both replication and coding are able to exploit Error Model $2$ for error-detection, it does not bias our results towards CodeNet relative to replication.

\paragraph{Error-free checkpoint} We allow for checkpointing the entire DNN (\textit{i.e.}, saving the weight matrices for every layer) at a disk periodically, so that we can revert to the last checkpoint when errors can be detected but not corrected, under Error Model $2$. As in most existing literature, we assume this process of checkpointing is error-free. However, the cost of checkpointing is very high as compared to the computation or communication costs of the algorithm otherwise, as one has to access the disk. Thus, we aim to employ checkpointing less frequently.

\subsection{Goal} Our goal is to design a unified, model-parallel  DNN training strategy resilient to errors, that uses $P$ ``base'' nodes, along with few redundant nodes, and checkpointing, such that: every node can store only a $\frac{1}{P}$ fraction of the number of elements of \textit{each} of the $L$ weight matrices. Thus, the storage per node for layer $l$ is $\frac{N_l N_{l-1}}{P}$, with negligible extra storage of $o\big(\frac{N_l N_{l-1}}{P} \big)$ to store vectors, e.g., $\bm{x}^l$, $\bm{\delta}^l$.

Essentially, we are required to perform all the operations of DNN training given $P$ ``base'' nodes (i.e., the minimal number of nodes needed by the zero-redundancy ``uncoded'' counterpart) with minimum number of additional redundant nodes to be able to correct errors at each layer. Additionally, it is desirable that the additional communication complexities as well computational overheads,~e.g., encoding, error-detection, decoding etc. are kept as low as possible. It is preferable that both the computational and communication costs are comparable to classical replication in scaling sense.

\section{Main Results}

In this section, we first provide our theoretical results that highlight the three benefits of our proposed strategy CodeNet over classical replication strategy with error-detection and checkpointing, namely, (A) Error-tolerance with lower resource overhead; (B) Speedup in computation time; and, (C) Comparable computational and communication costs. Our findings are complemented with experimental results.

\subsection{Error-tolerance with Lower Resource Overhead}
The error-tolerance of our proposed strategy CodeNet is as follows:

\begin{thm}[Error Tolerance] 
\label{thm:recovery_threshold} Let $m$ and $n$ be two integers such that $mn=P$. The CodeNet strategy uses a total of  $\hat{P}=P+2n(t_1+t_3)+2m(t_2+t_3)$ nodes to detect and correct \textit{any} $t_1+t_3$ erroneous nodes after step O1, and \textit{any} $t_2+t_3$ erroneous nodes after step O2, at a single layer during an iteration, under both the Error Models $1$ and $2$. Moreover, under Error Model $2$, if there are more errors, it is able to detect the occurrence of errors with probability $1$ even though it cannot correct them.  
\end{thm}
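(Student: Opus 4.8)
The plan is to realize the weight matrix $\bm{W}^l$ of the layer as a two-dimensional systematic MDS product code laid out on the node grid, and then to show that each of the three primary operations inherits the resilience of a one-dimensional MDS code along the appropriate axis. First I would partition $\bm{W}^l$ into an $m\times n$ array of equal sub-blocks $\bm{W}_{ij}$ held by the $P=mn$ base nodes, and fix two systematic MDS generators: a \emph{vertical} code that appends $2(t_1+t_3)$ parity block-rows, spanning the $n$ systematic columns (hence $2n(t_1+t_3)$ extra nodes), and a \emph{horizontal} code that appends $2(t_2+t_3)$ parity block-columns, spanning the $m$ systematic rows (hence $2m(t_2+t_3)$ extra nodes). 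Since no node sits at the intersection of a parity row and a parity column, the node count is exactly $\hat{P}=P+2n(t_1+t_3)+2m(t_2+t_3)$.

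I would then treat O1 and O2 in turn. For O1, feeding $\bm{x}_j$ to every node in column $j$ and reducing the partial products $\tilde{\bm{W}}_{ij}\bm{x}_j$ along each grid row produces, for $i=1,\dots,m+2(t_1+t_3)$, quantities $\tilde{\bm{s}}_i$ which, by linearity of the encoding, form a codeword of the vertical MDS code applied to $\bm{s}^l=\bm{W}^l\bm{x}^l$. The crux is that a faulty node corrupts exactly one of these symbols (the one indexed by its row), so $t_1+t_3$ faulty nodes corrupt at most $t_1+t_3$ symbols; as the vertical code has minimum distance $2(t_1+t_3)+1$, standard MDS decoding at step C1 locates and corrects them and returns the true $\bm{s}^l$. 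The argument for O2 is the mirror image: $(\bm{\delta}^l)^T\bm{W}^l$ is reduced along grid columns, the horizontal parities make the $\tilde{\bm{c}}_j$ a codeword, each faulty node corrupts a single column-symbol, and the distance-$(2(t_2+t_3)+1)$ horizontal code corrects $t_2+t_3$ errors at step C2. Note that the same encoding of $\bm{W}^l$ serves both axes, with O1 using only the vertical arm and O2 only the horizontal arm of the cross, which is precisely why corner nodes are unnecessary.

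The remaining ingredient, which I expect to be the main obstacle, is accounting for the additive $t_3$, i.e.\ faults committed during the rank-$1$ update O3, after which no detection occurs. I would show that if each parity node updates its block using the \emph{encoded} vectors $\tilde{\bm{\delta}}^l$ and $\tilde{\bm{x}}^l$ (only vectors need encoding, which is cheap), then $\bm{W}^l+\eta\,\bm{\delta}^l(\bm{x}^l)^T$ is again a valid two-dimensional codeword; consequently the only deviation from codeword structure at the next visit to this layer is confined to the at most $t_3$ nodes that erred in O3. These resurface as ordinary node errors the next time the layer computes, adding to the $t_1$ (resp.\ $t_2$) fresh faults of O1 (resp.\ O2), which is exactly why both correction radii carry a $+t_3$ term. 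The delicate points are verifying that the product-code structure is invariant under the encoded rank-$1$ update, and that one corrupted base block maps to a single symbol error along each axis simultaneously, consistent with counting it in both $t_1+t_3$ and $t_2+t_3$.

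Finally, for the detection-only guarantee under Error Model $2$, I would invoke the continuous-noise argument of \Cref{lem:error_tolerance} (see \Cref{appendix:error_detection}): when more nodes err than the code can correct, the received vector fails the MDS parity checks with probability $1$, because a continuously distributed perturbation lands on the lower-dimensional affine set of valid codewords with probability zero. Hence the occurrence of errors is detected even when they cannot be corrected.
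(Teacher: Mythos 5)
Your proposal is correct and follows essentially the same route as the paper: the same $m\times n$ product-code layout with $2(t_1+t_3)$ MDS parity block-rows and $2(t_2+t_3)$ MDS parity block-columns (no corner nodes), the same observation that row-wise (resp.\ column-wise) reduction of the partial products yields a codeword of the vertical (resp.\ horizontal) MDS code with each faulty node corrupting a single symbol, the same treatment of O3 errors as deferred symbol errors surfacing at the next visit to the layer thanks to the codeword-preserving encoded rank-$1$ update, and the same measure-zero continuous-noise argument for detection beyond the correction radius. The only difference is one of detail rather than substance: where you invoke "standard MDS decoding" via minimum distance, the paper's Lemma~\ref{lem:error_tolerance} spells this out with a spark/null-space argument for the parity-check matrix $\bm{H}$.
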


\noindent When $t_1+t_3=t_2+t_3=t$, the total number of nodes required to be able to correct \textit{any} $t$ errors after steps O1 or O2 is thus $\hat{P}=P+2(m+n)t$, that again reduces to $\hat{P}=P+4\sqrt{P}t$ for $m=n=\sqrt{P}$. We include a proof of this result in \Cref{appendix:error_detection}. Throughout the rest of the paper, we will primarily restrict ourselves to the simpler (and symmetric) case of $t_1+t_3=t_2+t_3=t$, though the strategy easily generalizes. In the Materials and Methods section, we describe in detail the proposed CodeNet strategy (with toy examples for $t=1$), which will also provide intuition on the proof of Theorem~\ref{thm:recovery_threshold}. 

\noindent \textbf{The Comparable Replication Strategy:} In the replication strategy, we assume that the system uses a total of $2P$ nodes, where each node has the same memory limitations, \textit{i.e.}, it can store only a fraction $\frac{1}{P}$ of each of the weight matrices. For every layer, the matrix $\bm{W}^l$ is divided across a grid of $m\times n$ base nodes (such that $mn=P$) and a replica of this entire grid is created using the other set of $P$ nodes. After every matrix-vector product, the replicas performing the same computation exchange their computational outputs and compare. If the outputs do not match, the system detects an error and resumes computation from the iteration of its last checkpoint, retrieving the saved DNN from disk. 

\noindent \textbf{Comparison with CodeNet:} First observe that the replication strategy requires $2P$ nodes as compared to CodeNet which uses $P+4\sqrt{P}$ nodes (for the case of $t=1$). And more importantly, in spite of using more resources, the replication strategy is only able to detect and not correct any errors, while CodeNet can detect as well as correct errors. For error correction in the replication strategy, we would need to have more than $2$ replicas to be able to do a majority voting, but this would require even higher resource overhead.

\subsection{Speedup in Computation Time}
\label{sec:main results}

\begin{figure}[!ht]
\centering
\includegraphics[height=6cm]{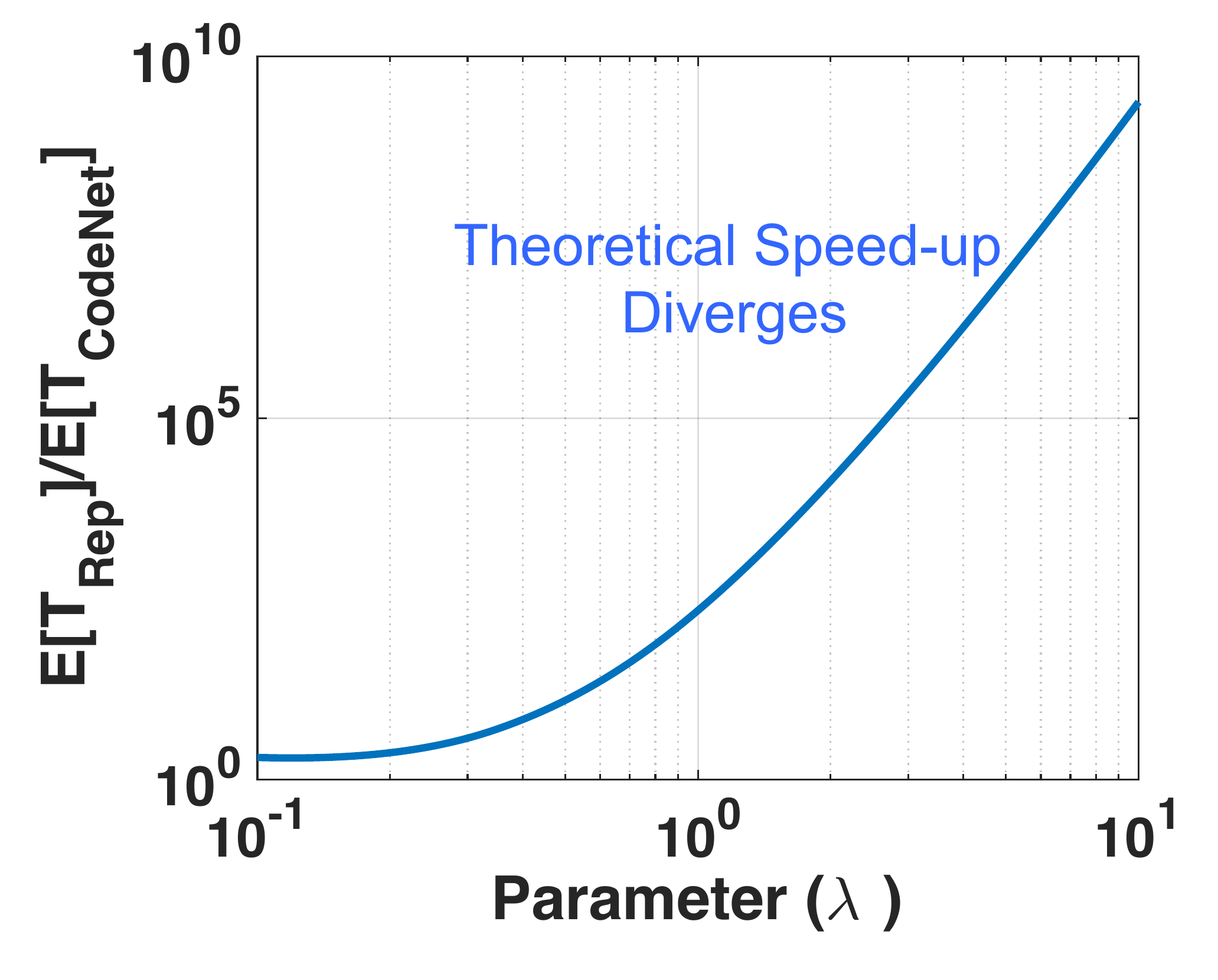}
\caption{Theoretical Plot: We assume the total number of soft-errors in each iteration to follow a Poisson distribution \cite{li2007memory} with parameter $\lambda$. For $t=1$, thus $p_0=e^{-\lambda}$ and $p_1 \geq \lambda e^{-\lambda}$. We also fix $\tau_{cpt}=\tau_{b}=1000$ and $\tau_{f}=1$ and vary $\lambda$ from $0.1$ to $10$. For each case, we also vary the period of checkpointing, \textit{i.e.}, $I_0$ and choose the $I_0$ that minimizes the expected time. The plot shows that as $\lambda$ scales, the ratio of the expected time of replication with CodeNet diverges to infinity. }
\label{fig:tradeoff}
\end{figure}

Here we show that CodeNet has significant speedups in expected computation time over replication that scales as the error probability scales. This is because every time a single error occurs under Error Model $2$, the replication based strategy rolls backward to the iteration of its last checkpoint while CodeNet is able to correct errors and proceed forward to the next iteration. 

For this theorem, we make some assumptions: 

\noindent (i) Errors (under Error Model $2$) at any node may or may not have dependence on other nodes. Error events in an iteration are simply divided into three disjoint sets: (1) zero errors (probability $p_0$); (2) error patterns correctable by CodeNet (probability $p_1$); and
(3) error patterns not correctable by CodeNet (probability $p_2=1-p_0-p_1$). E.g., if CodeNet can correct any $t$ errors, then $p_1$ includes the probability of all error-patterns with at most $t$ errors. Observe that, for the errors captured by $p_1$, CodeNet proceeds forward to the next iteration after regenerating the corrupt sub-matrices, but the replication strategy simply reverts to the iteration of its last checkpoint. For error patterns captured by $p_2$, both strategies have to revert to their last checkpoint. \\ 
\noindent (ii)  Period of checkpointing (number of iterations after which we checkpoint), $I_0$ is fixed, but can vary for different strategies. \\
\noindent (iii)  Time taken by both replication and CodeNet for an error-free iteration is $\tau_f$. We justify this in Theorem~\ref{thm:complexity} by showing that the communication and computation complexities of replication and CodeNet are comparable. Denoting the time to resume from previous checkpoint (by reading from disc) by $\tau_b$, we assume that $\tau_b \gg \tau_f $ as fetching data from the disk is extremely time-intensive. We also pessimistically assume that the time to run an iteration with error correction and regeneration by CodeNet be $\tau_b$. When an error occurs, CodeNet proceeds forward in computation, but it also regenerates some of the sub-matrices of $\bm{W}^l$. This regeneration comes with extra communication cost, increasing the time of the iteration. But because it does not require reading from the disk, letting it be as high as $\tau_b$ is a actually a pessimistic assumption in evaluation of the performance of our strategy. \\
\noindent (iv) Time to save the entire state using checkpointing is $\tau_{cpt}$.

\begin{thm}
\label{thm:time}
The ratio of the expected time taken to complete $M$ iterations by replication to CodeNet scales as 
\vspace{-0.3cm}
\begin{align*}\frac{E[T_{Rep}]}{E[T_{CodeNet}]}=\frac{ \min_{I_0} \frac{M}{I_0}\tau_{cpt} + \frac{M}{I_0}(\tau_f p_0 + \tau_b(1-p_0)) \frac{\frac{1}{(p_0)^{I_0}}-1}{\frac{1}{(p_0)}-1}}{ \min_{I_0} \frac{M}{I_0}\tau_{cpt} + \frac{M}{I_0}(\tau_f p_0 + \tau_b(1-p_0)) \frac{\frac{1}{(p_0+p_1)^{I_0}}-1}{\frac{1}{(p_0+p_1)}-1}}. 
\end{align*}
\end{thm}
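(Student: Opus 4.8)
The plan is to treat the entire run as a sequence of independent and identically distributed \emph{checkpoint intervals} and argue by a renewal-reward decomposition. Since we checkpoint every $I_0$ successfully-completed iterations, the $M$ iterations split into $M/I_0$ statistically identical intervals, so it suffices to compute the expected time $E[T_{\mathrm{int}}]$ to clear one interval and then set $E[T]=\frac{M}{I_0}\big(\tau_{cpt}+E[T_{\mathrm{int}}]\big)$. The single observation that separates the two strategies is the \emph{forward-progress probability} per attempted iteration. Replication advances only when no error occurs (probability $p_0$) and rolls back to the last checkpoint on \emph{any} error (probability $1-p_0$); CodeNet advances whenever errors are absent or correctable (probability $p_0+p_1$) and rolls back only on an uncorrectable pattern (probability $p_2=1-p_0-p_1$). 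I would therefore run the derivation once for a generic advance probability $a$, and substitute $a=p_0$ for replication and $a=p_0+p_1$ for CodeNet at the very end.

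Next I would compute $E[T_{\mathrm{int}}]$ by first-step analysis on the state ``number $k$ of consecutive successful iterations since the last checkpoint.'' Letting $V_k$ be the expected remaining interval time from state $k$, we have $V_{I_0}=0$ and, for $0\le k<I_0$,
\begin{equation*}
V_k \;=\; a\,(\tau_f + V_{k+1}) \;+\; (1-a)\,(\tau_b + V_0),
\end{equation*}
since an advancing attempt costs an error-free iteration $\tau_f$ and moves to $k+1$, while a roll-back costs $\tau_b$ and returns to state $0$. A key modeling point (assumption (iii)) is that a correctable-error iteration for CodeNet also costs $\tau_b$, yet still advances; this makes the \emph{per-attempt} expected cost $C=\tau_f p_0+\tau_b(1-p_0)$ identical for both strategies, so that only the advance probability $a$ differs between them. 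Unrolling the recursion, $V_{I_0-j}=\big(C+(1-a)V_0\big)\sum_{i=0}^{j-1}a^{i}$, and closing it at $j=I_0$ yields a closed-form geometric expression $V_0 \propto \frac{1/a^{\,I_0}-1}{1/a-1}$ of exactly the type appearing in the theorem statement, scaling like $a^{-I_0}$. Equivalently this is a Wald-type identity: $E[T_{\mathrm{int}}]$ equals the expected number of attempts needed to obtain a run of $I_0$ consecutive successes (a standard run-length expectation) times the per-attempt cost $C$, which is legitimate because $C$ is independent of the stopping-time attempt count.

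Finally I would assemble. Summing over the $M/I_0$ intervals and charging $\tau_{cpt}$ per interval gives, for each strategy, $E[T]=\frac{M}{I_0}\tau_{cpt}+\frac{M}{I_0}\,C\cdot\frac{1/a^{\,I_0}-1}{1/a-1}$ with its own $a$, after which each strategy is free to pick the checkpoint period minimizing its own cost (hence the separate $\min_{I_0}$ in numerator and denominator). Substituting $a=p_0$ and $a=p_0+p_1$ produces the stated ratio verbatim. I would then note, as the qualitative takeaway, that because $p_0+p_1>p_0$ and $\tau_b\gg\tau_f$, the replication numerator carries a $1/p_0^{\,I_0}$ blow-up that the CodeNet denominator tempers to $1/(p_0+p_1)^{\,I_0}$, so the ratio grows without bound as the error rate increases, matching Fig.~\ref{fig:tradeoff}; the equal-$\tau_f$ premise is justified by Theorem~\ref{thm:complexity}.

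The main obstacle is the second step: correctly setting up and solving the restart-on-failure recursion so that the run-length/geometric structure emerges cleanly, while carefully bookkeeping the fact that correctable errors cost the full $\tau_b$ but still count as forward progress for CodeNet — this is precisely what keeps $C$ common to both strategies and isolates the entire advantage into the advance probability $a$. The outer minimization over $I_0$ is a genuine tradeoff (larger $I_0$ amortizes $\tau_{cpt}$ but inflates the $a^{-I_0}$ recomputation term), so I would leave it in the $\min_{I_0}$ form rather than forcing a closed-form optimizer, which is consistent with the ``scales as'' phrasing of the statement.
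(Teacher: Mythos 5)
Your proposal follows essentially the same route as the paper: a Markov chain on the number of iterations completed since the last checkpoint, a restart-on-failure recursion whose solution is a geometric sum in the inverse advance probability, and assembly via $\frac{M}{I_0}(\tau_{cpt}+E[T_{\mathrm{int}}])$ with a separate $\min_{I_0}$ per strategy; your backward recursion in $V_k$ versus the paper's forward recursion in $\mathbb{E}[T_k]$, and your single parametrization by $a\in\{p_0,\,p_0+p_1\}$, are only cosmetic differences. One caveat: your recursion does not reproduce the stated formula verbatim, as you claim — solving $V_0=(C+(1-a)V_0)\frac{1-a^{I_0}}{1-a}$ gives $V_0=C\sum_{i=1}^{I_0}a^{-i}=C\,\frac{a^{-I_0}-1}{1-a}$, which exceeds the theorem's $C\,\frac{a^{-I_0}-1}{a^{-1}-1}$ by a factor of $1/a$ (sanity check at $I_0=1$: the expected number of attempts to clear one iteration is $1/a$, not $1$). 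Your closed form is in fact the correct solution; the paper's Lemma~\ref{lem:markov_chain} contains a slip (its own recursion forces $\mathbb{E}[T_1]=C/(p_0+p_1)$, not $C$ as asserted in the base case), but since the discrepancy is a bounded constant factor in both numerator and denominator it does not affect the ``scales as'' conclusion or Fig.~\ref{fig:tradeoff}.
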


The proof is provided in \Cref{appendix:runtime}. In Fig.~\ref{fig:tradeoff} we show how this ratio of theoretical expected times can diverge to infinity as $\frac{p_1}{p_0}$ scales. In particular, if the number of errors in an iteration follow Poisson distribution\cite{li2007memory} with parameter $\lambda$, then the ratio of expected time scales with $\lambda$. Again, if any node fails independently with probability $p$ whenever it is used for one of the $3$ most complexity-intensive ($\Theta(N_lN_{l-1}/P)$) computations (O1, O2, or O3), in each layer, in an iteration, then $p_0=(1-p)^3\hat{P}L$ and $p_1 \geq\binom{3\hat{P}L}{1}p^1(1-p)^{3\hat{P}L-1}$ where $\hat{P}$ is the total number of nodes used and $3L$ is the number of times a node is used in an iteration, \textit{i.e.}, thrice for each layer.

\subsection{Comparable Computational and Communication Overheads}
Next, we compare the computational and communication complexity of CodeNet and replication in an error-free iteration. The result has two major implications: firstly, it shows that the computational and communication complexity of both CodeNet and replication are comparable in a scaling sense and hence, justifies the assumption that $\tau_f$ might be assumed to be the same for both. Secondly, it also shows that all other computational steps in an error-free iteration like addition of partial results (all-reduce), consistency check contribute negligible overhead when compared with the per-node computational complexity.

\begin{thm}
For $m=n=\sqrt{P}$, the following hold for each layer in a single error-free iteration. $(i)$ The ratio of the communication complexity of CodeNet to replication is less than:
 
$$ \frac{6\alpha\log{(\hat{P})} + \beta(6t+3)\frac{(N_l + N_{l-1})}{\sqrt{P}} +2\beta\hat{P}t}{2\alpha\log{(P)} + 2\beta\frac{(N_l + N_{l-1})}{\sqrt{P}}} $$ \text{ which scales as } $\mathcal{O}(3t)$ as $P,\hat{P}, N_l, N_{l-1}\to \infty$ in the regime $\hat{P}^{3/2}=o(\min\{N_l, N_{l-1} \} ) $.\\ $(ii)$ The ratio of the computational complexity of CodeNet to replication is less than:
 
$$ \frac{6 \frac{N_l N_{l-1} }{P} + (1+4t)\left(\frac{N_l}{\sqrt{P} }+\frac{N_{l-1}}{\sqrt{P}}\right) + 2\hat{P}t }{6 \frac{N_l N_{l-1} }{P} +  \left(\frac{N_l}{\sqrt{P} }+\frac{N_{l-1}}{\sqrt{P}}\right) } $$
which is $\Theta(1)$ as $P, N_l, N_{l-1}\to \infty$ with $\hat{P}^{3/2}=o( \min \{N_l,\ N_{l-1} \})$.
\label{thm:complexity}
\end{thm}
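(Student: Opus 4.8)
The plan is to work in the standard $(\alpha,\beta)$ cost model for collective communication used in \cite{chan2007collective}, where a point-to-point message of $w$ words costs $\alpha+\beta w$, with $\alpha$ the per-message latency and $\beta$ the per-word (inverse-bandwidth) cost. For a single error-free iteration at a fixed layer $l$, I would separately account for (a) the per-node arithmetic (flop) count and (b) the communication cost of each of the three stages O1 (feedforward), O2 (backpropagation) and O3 (update), for both CodeNet on its $\sqrt P\times\sqrt P$ grid (augmented by the $4\sqrt P\,t$ redundant nodes, so $\hat P=P+4\sqrt P\,t$) and for replication on its two $\sqrt P\times\sqrt P$ grids. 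Upper-bounding the CodeNet totals and computing the replication totals then yields the two ``less than'' ratios of parts $(i)$ and $(ii)$, after which I carry out the asymptotics in the stated regime.

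First I would tabulate the collective primitives. In the matrix-vector products O1 and O2, node $(i,j)$ forms a length-$\tfrac{N_l}{\sqrt P}$ (resp.\ $\tfrac{N_{l-1}}{\sqrt P}$) partial product, and these are combined by a reduce along a grid dimension of size $\sqrt P$; using recursive-halving/doubling this costs $\Theta(\alpha\log P)$ latency and $\Theta\big(\beta\,\tfrac{N_l+N_{l-1}}{\sqrt P}\big)$ bandwidth, together with a broadcast of the (encoded) input vector of the same scaling. Summed over the three stages these shared operations produce the $\alpha\log(\cdot)$ and $\beta\,\tfrac{N_l+N_{l-1}}{\sqrt P}$ contributions appearing in both numerator and denominator, and on the arithmetic side the three $\Theta\big(\tfrac{N_lN_{l-1}}{P}\big)$ local multiply-adds give the common $6\,\tfrac{N_lN_{l-1}}{P}$ term.

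Next I would layer on the coding-specific costs that distinguish the numerators from the denominators. For CodeNet, each of O1 and O2 additionally encodes the relevant vector into $\sqrt P+2t$ coded pieces and runs a decentralized error-detection/consistency check across the $2t$ redundant nodes, and O3 encodes two vectors for the coded rank-$1$ update; this is what inflates the bandwidth coefficient to $(6t+3)$, injects the $(1+4t)\big(\tfrac{N_l}{\sqrt P}+\tfrac{N_{l-1}}{\sqrt P}\big)$ encoding term in the flop count, and produces the parity-collection terms $2\beta\hat P t$ (communication) and $2\hat P t$ (arithmetic) used in detection/decoding. For replication the only extra cost is the pairwise exchange-and-compare between each base node and its replica, giving the clean denominators $2\alpha\log P+2\beta\,\tfrac{N_l+N_{l-1}}{\sqrt P}$ and $6\,\tfrac{N_lN_{l-1}}{P}+\big(\tfrac{N_l}{\sqrt P}+\tfrac{N_{l-1}}{\sqrt P}\big)$.

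Finally I would simplify in the regime $\hat P^{3/2}=o(\min\{N_l,N_{l-1}\})$ with $t$ held constant. Since $\hat P=P+4\sqrt P\,t=\Theta(P)$, this regime forces both $\log\hat P$ and $\hat P t=\Theta(P t)$ to be $o\big(\tfrac{N_l+N_{l-1}}{\sqrt P}\big)$, so in $(i)$ the bandwidth terms dominate and the ratio tends to $\tfrac{6t+3}{2}$, i.e.\ $\mathcal O(3t)$; for $(ii)$ the same regime makes $\tfrac{N_lN_{l-1}}{P}$ dominate both $\tfrac{N_l+N_{l-1}}{\sqrt P}$ and $\hat P t$ (as $N_lN_{l-1}\gg P^{3}\gg P^{2}t$), so numerator and denominator both collapse to $6\,\tfrac{N_lN_{l-1}}{P}$ and the ratio is $\Theta(1)$. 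The main obstacle will be the exact bookkeeping of the coding steps: verifying that the redundancy-$2t$ encoding, the decentralized detection check, and the regeneration-free decoding each cost only the claimed $\Theta(t)$ bandwidth factor and additive $2\hat P t$ arithmetic, with no hidden $\Theta(\hat P)$ all-to-all traffic. This hinges on pinning down which collective (reduce vs.\ all-reduce vs.\ gather) each sub-step invokes and on confirming that decoding at most $t$ corrupted sub-blocks can be localized rather than triggering a global recompute, which is precisely where CodeNet's encoded-update design is invoked.
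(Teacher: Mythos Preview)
Your approach is essentially the paper's: itemize the collective-communication primitives invoked in each stage of a single error-free iteration (All-Reduce for partial sums, the $2t$ All-Reduce parity checks, the Reduce for the additional vector encoding, the Broadcast of the next layer's input), sum their $(\alpha,\beta)$ costs, upper-bound, and compare to the replication baseline whose only communication is the Reduce/Broadcast pair; the computational side is the analogous flop count. Your asymptotic argument for both parts is correct.

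One correction in your bookkeeping and framing. The theorem is explicitly about an \emph{error-free} iteration, so no decoding or regeneration ever occurs; your final paragraph's worry about ``decoding at most $t$ corrupted sub-blocks'' and ``regeneration-free decoding'' is misplaced. The $2\beta\hat P t$ and $2\hat P t$ terms do not arise from parity collection or decoding: they come from the \emph{verification step} under Error Model~2, in which every active node exchanges its length-$2t$ vector of consistency-check outcomes with the others (an all-to-all comparison costing $\alpha\log\hat P+\beta\hat P t$ in communication and $\hat P t$ in arithmetic, done once in feedforward and once in backpropagation). So the cost you need to pin down is not the decoding but this lightweight agreement step; once you see that, the ``main obstacle'' you flag dissolves, since the detection check itself is just $2t$ All-Reduces of length-$N_l/\sqrt P$ vectors along a column (or row), which is already the $(6t+3)$ bandwidth factor.
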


The proof is provided in \Cref{appendix:complexity}.

\subsection{Experimental Results}
\label{sec:experiments}

\begin{table}[!ht]
\centering
\caption{Comparison of Accuracy and Runtime for different strategies after $2000$ iterations}
\begin{tabular}{p{2.2cm}p{6cm}p{2.4cm}p{2.2cm}}
Strategy & Checkpointing Period $I_0$ (No. of Iterations)  & Accuracy $(\%)$ & Runtime (Seconds) \\
\midrule
CodeNet & 200 & 89 &  2322\\
Replication & 10 & 89 & 18453 \\
Replication & 20 & 89 & 14140\\
Replication & 30 & 89 & 19238\\
Uncoded & N.A. & <50 & 730\\
\bottomrule
\end{tabular}
\label{table:experiments}
\end{table}
We perform experiments on Amazon EC2 clusters (m3.medium instances) using Starcluster to test the performance of CodeNet on a three layer DNN. The dimensions of the weight matrices are $10^4 \times 784$, $10^4 \times 10^4$ and $10 \times 10^4$ respectively. We train this DNN on the MNIST dataset\cite{lecun1998mnist}. We divide each of these matrices individually into smaller sub-matrices over a $5 \times 4$ grid. For CodeNet, we add two more rows of $5$ nodes each and two more columns of $4$ nodes each for error tolerance, resulting in a total of $20+2\times(5+4)=38$ nodes. For comparison, we also implemented  the competing replication strategy using $40$ nodes (greater than CodeNet), with equal storage per node. For the replication strategy, all the three matrices of dimensions $10^4 \times 784$, $10^4 \times 10^4$ and $10 \times 10^4$  are also divided across a $5 \times 4$  grid, \textit{i.e.}, across $20$ nodes and another $20$ nodes are used to store a replica. We also implemented an uncoded strategy, which simply ignores soft-errors entirely. To compare uncoded and CodeNet assuming equal number of nodes, we allow the uncoded strategy to use either $20$ nodes or all $40$ nodes (while utilizing less storage per node in the latter case, and hence less computational complexity per node). However, for both  uncoded strategies, the experimental behavior is similar. It finishes its iterations faster, but the accuracy is very poor as it ignores soft-errors. Thus, uncoded has a poor performance in accuracy-time tradeoff. {In Fig.~\ref{fig:experiments}, we only show the case using equal nodes, \textit{i.e.}, $40$ nodes. The matrices are divided across these 40 nodes across a $5 \times 8$  grid of nodes. } 

We also assume that every node is affected by a soft-error with probability $3 \times 10^{-4}$ during the three most computationally intensive steps: matrix-vector products of the feedforward/backpropagation stage (O1/O2) and the rank-$1$ update (O3). When a soft-error occurs, we add a random sparse matrix (density $0.005$) to the stored sub-matrix of $\bm{W}^l$ at any node whose non-zero values are drawn from a uniform distribution between $-5$ and $5$. Unlike our theoretical model, we choose a sparse matrix-error here so that only some elements of $\bm{W}^l$ are corrupted by soft-errors. In the event of error, this reduces the extent to which convergence of the training is affected for the uncoded strategy (\textit{i.e.}, this assumption is more favorable to the uncoded strategy in comparison with our ``garbage-output'' model). The performance of coded and replication strategies remains the same because they correct any errors before proceeding. While pessimistic for our results, this different model helps us better understand the effect of small errors in convergence rate of the uncoded strategy. However, even with this favorable modeling for uncoded, the convergence of the strategy is substantially affected to the degree that it does not seem to converge at all.

\begin{figure}[ht]
\centering
\includegraphics[height=6cm]{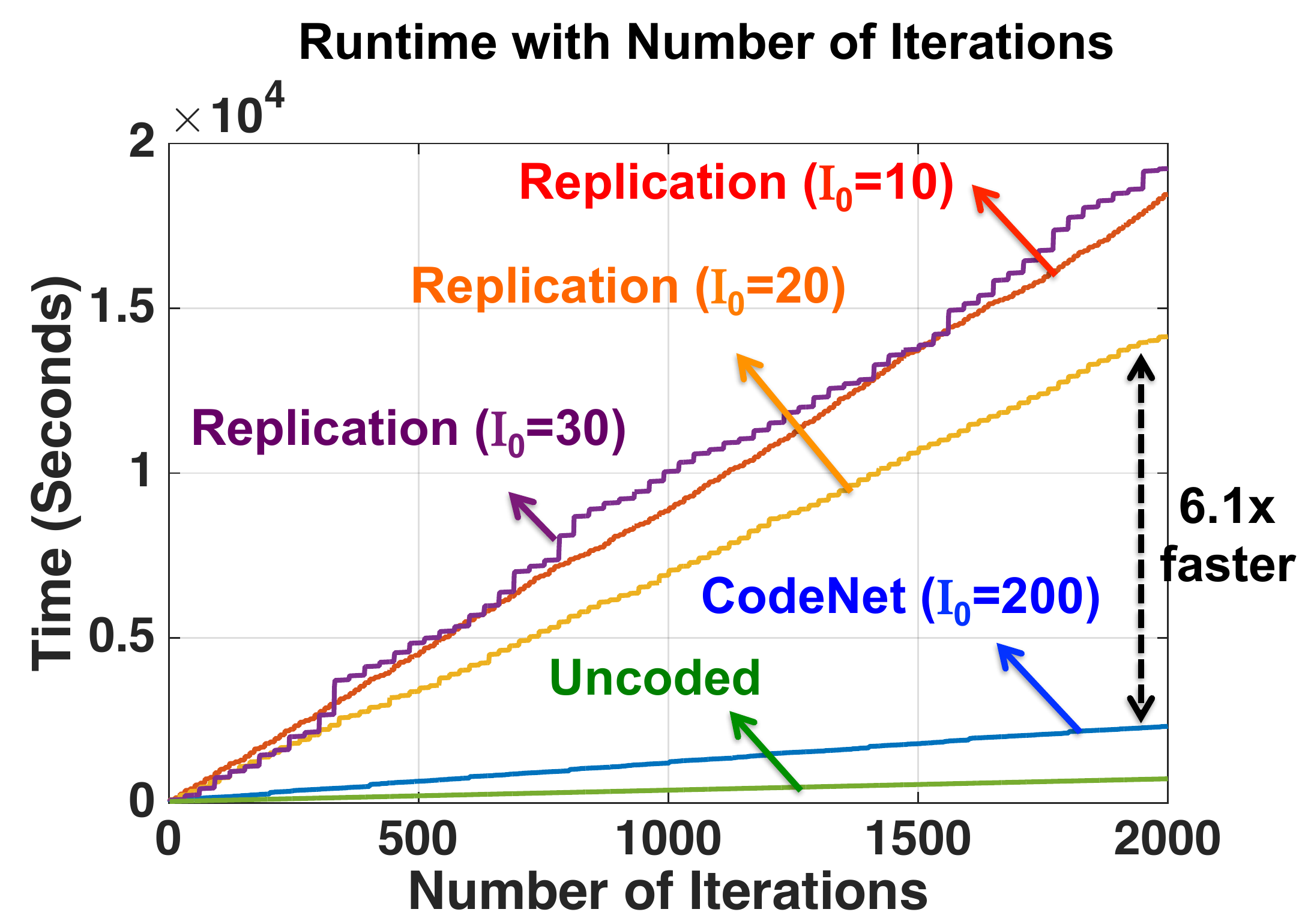}
\caption{The experiments compare the runtime and performance of three strategies: Uncoded, Replication and CodeNet. In this plot, we illustrate the wall-clock runtime required by the three strategies to complete $2000$ iterations of training. For replication, we consider three values of the period of checkpointing, \textit{i.e.} $I_0 = 10,20 \text{ and } 30$. Of these, $I_0=20$ gives the lowest values of runtime. We observe that CodeNet (with $I_0=200$) offers $6.1\times$ speed-up compared to Replication with $I_0=20$. For every plot, there are steep jumps at iteration indices that are multiples of $I_0$ as every time an error occurs, the system reverts to the iteration where it checkpointed last. Uncoded is slightly faster as it proceeds with forward computing ignoring errors, but its performance is much worse as we highlight in Table \ref{table:experiments}. }
\label{fig:experiments}
\end{figure}

Our results are shown in Fig.~\ref{fig:experiments}. The experiments show that ignoring soft-errors entirely (as in Uncoded) is not a good idea as it severely degrades the convergence. Alternately, using replication slows down the runtime substantially as every time an error occurs, the system has to revert to the last checkpoint and start afresh from that iteration. CodeNet has the best accuracy-runtime tradeoff as it can correct and proceed forward in case of a single soft-error, and needs to revert to the last checkpoint only in case of more errors. For the replication strategy, we also varied the period of checkpointing, but CodeNet still outperform replication with the best chosen checkpoint period by a factor of more than $6$.


\section{The Proposed CodeNet Strategy}
In this section, we describe in detail our proposed strategy -- CodeNet -- that is based on a class of error-correcting codes called Maximum Distance Separable (MDS) codes~\cite{ryan2009channel}. As mentioned before, a naive extension of existing coded computing or ABFT techniques would require encoding of updated $\bm{W}^l$ at each layer, in every iteration. By carefully choosing a unified coding strategy, we are able to exploit the \textit{check-sum invariance}~\cite{bouteiller2015algorithm} of DNN training and update operations. This allows the matrix $\bm{W}^l$ to be encoded only once at the beginning of the training and obviates the need to encode $\bm{W}^l$ afresh at every iteration. Instead, at every iteration, we encode the feedforward input \textit{vectors} ($\bm{x}^l$) and the backpropagated error \textit{vectors} ($\bm{\delta}^l$), and perform coded updates that still maintain the coded structure of initial $\bm{W}^l$ across updates, adding only a negligible overhead to the overall computation. 

\subsection{Some Notations} 

We choose two integers $m$ and $n$ such that $mn=P$. We use the notation $\bm{W}^l_{i,j}$ to denote the sub-matrix (block of size $\frac{N_l}{m} \times \frac{N_{l-1}}{n}$) of the matrix $\bm{W}^l$, at block-index $(i,j)$, when $\bm{W}^l$ is block-partitioned in $2$ dimensions across an $m\times n$ base grid. When we consider only the horizontal or vertical partitioning of $\bm{W}^l$, we use the notation $\bm{W}^l_{i,:}$ or $\bm{W}^l_{:,j}$ respectively, to denote the entire $i$-th row block (of size $\frac{N_l}{m} \times N_{l-1}$) or the entire $j$-th column block (of size $  N_l \times \frac{N_{l-1}}{n} $) respectively. E.g., 
\begin{align*}  \bm{W}^l & = \begin{bmatrix}
            \bm{W}^l_{0,0} & \ldots &\bm{W}^l_{0,n-1} \\
            \vdots & \ddots & \vdots \\
            \bm{W}^l_{m-1,0} & \ldots  &\bm{W}^l_{m-1,n-1}
          \end{bmatrix} = \begin{bmatrix}
            \bm{W}^l_{0,:}   \\
            \vdots \\
            \bm{W}^l_{m-1,:} 
          \end{bmatrix} \Bigg\rbrace \text{Row Blocks} \\
          & = \underbrace{\begin{bmatrix}
             \ \bm{W}^l_{:,0} \ & \ldots &  \ \bm{W}^l_{:,n-1} \      \end{bmatrix}}_{\text{Column Blocks}} .
\end{align*}

\noindent We will use \textit{Systematic} MDS Codes~\cite{ryan2009channel} to encode these sub-matrices. The generator matrix of a \textit{systematic} $(b,a)$ MDS Code with $b\geq a$ is a matrix of dimension $a \times b$ such that: (i) The first $a$ columns form an $a \times a$ identity matrix; (ii) Any $a$ out of $b$ columns are linearly independent. 

The coded blocks (or parity blocks) arising due to encoding of the sub-matrices of size $\frac{N_l}{m} \times \frac{N_{l-1}}{n}$ in the grid that are denoted by $\widetilde{\bm{W}}^l_{i,j}$ where $(i,j)$ denotes their index in the grid. Note that, these parity blocks are redundant blocks that will lie outside the $m\times n$ grid of base nodes. The total number of nodes used by a strategy, including base nodes and redundant nodes, is denoted by $\hat{P}$.

Similarly, the vectors $\bm{x}^l$ and $\bm{c}^l$ are also divided into $n$ equal sub-vectors, each of length $\frac{N_{l-1}}{n}$ while the vectors $\bm{\delta}^l$ and $\bm{s}^l$ are divided into $m$ equal sub-vectors, each of length $\frac{N_{l}}{m}$ respectively. A superscript $\widetilde{\cdot}$ denotes a coded sub-vector (parity sub-vector). E.g.
$\bm{x}^l=\begin{bmatrix}
\bm{x}^l_0 \\
\vdots\\
\bm{x}^l_{n-1}
\end{bmatrix}$  and $ \widetilde{\bm{x}}^l_j$  denotes the additional coded or parity sub-vectors for index $j \ > \ m-1$. 
 
\begin{figure}
\centering
\includegraphics[height=2.7cm]{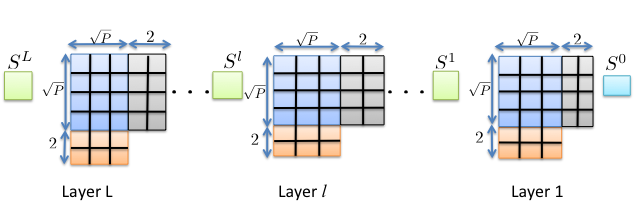}
\caption{The node layout for CodeNet: The same $\hat{P}=P+4\sqrt{P}$ nodes are used in the grid for every layer. The nodes $S^0$, $S^1$,\ldots,$S^L$ are error-free (virtual) nodes introduced here for the ease of explanation, but the actual implementation is decentralized with no single source of failure as described in details later.}
\label{fig:DNN_node_layout}\vspace{-0.2cm}
\end{figure}

\subsection{Layout of Nodes}

We use a set of $P$ base nodes arranged in an $m \times n$ grid. We add $2(t_1+t_3)$ redundant rows of $n$ nodes each for error correction after step O1, and add $2(t_2+t_3)$ redundant columns of $m$ nodes each for error correction after step O2. For simplicity and symmetry, let us choose $t_1+t_3=t_2+t_3=t$. In Fig.~\ref{fig:DNN_node_layout}, we illustrate the node layout for the special case of $m=n=\sqrt{P}$ and $t=1$, thus using $4\sqrt{P}$ redundant nodes.

{For every layer, the weight matrix $\bm{W}^l$ of dimension $N_{l} \times N_{l-1}$ is divided into $P$ equal sub-matrices across a grid of $m \times n$ base nodes, thus satisfying the storage constraint (each node can only store $\frac{N_{l} N_{l-1} }{P}$ elements of the weight matrix for each layer). This is illustrated in Fig.~\ref{fig:DNN_node_layout}, again for the special case of $m=n=\sqrt{P}$ and $t=1$. We will also add redundant rows and columns of redundant nodes to obtain error resilience.}

For conceptual simplicity, between any two consecutive layers of the DNN, let us first assume that there is an error-free ``virtual'' node. These error-free virtual nodes  will \textit{not} be used in our actual algorithm, where we will allow all operations to be error prone under Error Model $2$, including encoding, error-detection, decoding, nonlinear activation and diagonal matrix post-multiplication as we will clarify afterwards.

In Fig.~\ref{fig:DNN_node_layout}, the virtual node between layers $l$ and $l-1$, \textit{i.e.}, node $S^{l-1}$ acts as the sink (which also performs error-check/decoding) for the previous $(l-1)$-th layer and the source (which also performs encoding) for the $l$-th layer in the feedforward stage. In the backpropagation stage, the flow of data is reversed. So, now the virtual node $S^{l-1}$ collects partial results and acts as the error-detector/decoder for the $l$-th layer, and the encoder for the $(l-1)$-th layer. Errors in the update stage of layer $l$ are detected and corrected after step O1 or O2 of the next iteration at layer $l$, when the erroneous node produces an output for the first time for that layer.

\subsection{Strategy Description (Using Virtual Nodes)}

\subsubsection{Pre-processing: Initial Encoding of weight matrices once prior to the start of training}
All the weight matrices ($\bm{W}^l$ of dimension $N_l \times N_{l-1}$ at the $l$-th layer) are first initialized with a random initial value at the start of training (same as the uncoded algorithm). For CodeNet, this initial matrix is then encoded and stored in appropriate nodes for the first iteration. For all subsequent iterations, CodeNet only encodes and decodes vectors instead of matrices. Surprisingly, since the code is designed to be that way, we will show that the weight matrices are able to update themselves in each iteration, while maintaining their coded structure, using just these coded vectors. 

Let us consider an example with $t=1$ and $P=4$ base nodes, arranged in a $2 \times 2$ grid. Assume that $\bm{W}^l$ is block-partitioned row-wise and column-wise to get $2 \times 2$ blocks of dimension $\frac{N_l}{2}  \times \frac{N_{l-1}}{2}$ each, as follows:
$
\bm{W}^l= \begin{bmatrix}
            \bm{W}^l_{0,0} & \bm{W}^l_{0,1} \\
            \bm{W}^l_{1,0} & \bm{W}^l_{1,1}
          \end{bmatrix}.
$
Now, we add $2$ redundant rows and $2$ redundant columns of processing nodes, each containing $\frac{N_l}{2}  \times \frac{N_{l-1}}{2}$ sized coded blocks (or parity blocks) for a single error correction ($t=1$) under Error Models $1$ or $2$. 

Observe the encoding for the redundant rows of processing nodes. Each coded block is an independent linear combination of all the blocks of $\bm{W}^l$ in the same column in the grid. For example, here we choose 
$\widetilde{\bm{W}}^l_{2,j} = \bm{W}^l_{0,j} + \bm{W}^l_{1,j}$, and 
$\widetilde{\bm{W}}^l_{3,j}= \bm{W}^l_{0,j} - \bm{W}^l_{1,j}$ for all $j=0,1$. Here $\bm{W}^l_{0,j}$ and $\bm{W}^l_{1,j} $ denote the first and second blocks of $\bm{W}^l$, while $\widetilde{\bm{W}}^l_{2,j}$ and $\widetilde{\bm{W}}^l_{3,j}$ denote the two coded blocks for column $j$ in the grid.  

The encoding for the redundant columns of nodes is also similar. Each coded block is an independent linear combination of all the blocks of $\bm{W}^l$ in the same row. Here, we choose
$\widetilde{\bm{W}}^l_{i,2}= \bm{W}^l_{i,0} + \bm{W}^l_{i,1} $ and 
$\widetilde{\bm{W}}^l_{i,3}= \bm{W}^l_{i,0} + 2\bm{W}^l_{i,1} $ for $i=0,1$. Here, $\widetilde{\bm{W}}^l_{i,2}$ and $\widetilde{\bm{W}}^l_{i,3}$ denote the two coded blocks for row $i$ in the grid. Rewriting in matrix notation, the sub-matrices that are stored in the grid can be written as:
\begin{align*}
&
\begin{bmatrix}
\bm{W}^l_{0,0} & \bm{W}^l_{0,1} & \widetilde{\bm{W}}^l_{0,2} & \widetilde{\bm{W}}^l_{0,3} \\
\bm{W}^l_{1,0} & \bm{W}^l_{1,1} & \widetilde{\bm{W}}^l_{1,2} & \widetilde{\bm{W}}^l_{1,3}  \\
\widetilde{\bm{W}}^l_{2,0} & \widetilde{\bm{W}}^l_{2,1} & \text{x} & \text{x} \\
\widetilde{\bm{W}}^l_{3,0} & \widetilde{\bm{W}}^l_{3,1} & \text{x} & \text{x}
\end{bmatrix} = \big(\bm{G}_r^T  \otimes \bm{I}_{\frac{N_{l}}{2}} \big) \bm{W}^l \big( \bm{G}_c \otimes \bm{I}_{\frac{N_{l-1}}{2}} \big)\nonumber \\ &=
\bigg( \begin{bmatrix}
1 & 0  \\
0 & 1  \\
1 & 1 \\
1 & - 1 
\end{bmatrix} \otimes \bm{I}_{\frac{N_{l}}{2}} \bigg)
\begin{bmatrix}
\bm{W}^l_{0,0} & \bm{W}^l_{0,1} \\
\bm{W}^l_{1,0} & \bm{W}^l_{1,1} \\
\end{bmatrix}
\big(\begin{bmatrix}
1 & 0  & 1 &  1 \\
0 & 1 & 1 &  2  
\end{bmatrix}\otimes \bm{I}_{\frac{N_{l-1}}{2}}  \big).
\end{align*}
Here x denotes that there is no node in that location, and the corresponding sub-matrix is not required to be computed. Observe that the matrices $\bm{G}_r=\begin{bmatrix}
1 & 0  & 1 & 1 \\
0 & 1  & 1 & -1
\end{bmatrix} $ and $\bm{G}_c= \begin{bmatrix}
1 & 0  & 1 &  1 \\
0 & 1 & 1 &  2  
\end{bmatrix}$ that are used here to generate the linearly independent linear combinations, are actually the generator matrices of two systematic $(4,2)$ MDS codes. Also $\bm{I}_{N_l/2}$ and $\bm{I}_{N_{l-1}/2}$ denote two identity matrices of sizes $N_l/2$ and $N_{l-1}/2$ respectively and, $\otimes$ denotes the Kronecker product of two matrices. The Kronecker product is required as the operations on $\bm{W}^l$ are performed block-wise instead of element-wise. We will again return to this example and explain how this technique can help detect and correct errors due to the MDS code.

\begin{clm}
\label{clm:general_encoding}
For the general case, to be able to correct \textit{any} $t$ errors at each layer after steps O1 or O2, under Error Models $1$ or $2$, the initial encoding is as follows:
\begin{align*}
\left(\bm{G}_r^T  \otimes \bm{I}_{N_{l}/m}  \right) \bm{W}^l \left( \bm{G}_c \otimes \bm{I}_{N_{l-1}/n}\right).
\end{align*}
Here $\bm{G}_r$ and $\bm{G}_c$ are the generator matrices of a systematic $(m+2t, m)$ MDS code and a systematic $(n+2t, n)$ MDS code respectively. The blocks of $\bm{W}^l$ are of size $\frac{N_{l}}{m} \times \frac{N_{l-1}}{n} $, hence the Kronecker product with $\bm{I}_{N_{l}/m} $ and $\bm{I}_{N_{l-1}/n}$ respectively.
\end{clm}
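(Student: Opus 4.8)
The plan is to show that the stated two-dimensional encoding turns each of steps O1 and O2 into a computation whose outputs are, block-wise, codewords of an MDS code, so that bounding the number of erroneous nodes bounds the number of corrupted code symbols and correction follows from the minimum-distance guarantee. First I would record the single coding fact that drives the argument: a systematic $(K+2t,K)$ MDS code has minimum distance $2t+1$ and therefore corrects any $t$ symbol errors. I will apply this with $K=m$ for the row code $\bm{G}_r$ (a $(m+2t,m)$ code) and with $K=n$ for the column code $\bm{G}_c$ (a $(n+2t,n)$ code).

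For step O1 I would exploit that $\bm{G}_c$ is systematic, so the first $n$ block-columns of $(\bm{G}_r^T\otimes \bm{I}_{N_l/m})\,\bm{W}^l\,(\bm{G}_c\otimes \bm{I}_{N_{l-1}/n})$ coincide with those of $(\bm{G}_r^T\otimes \bm{I}_{N_l/m})\bm{W}^l$. Feeding the uncoded sub-vectors $\bm{x}^l_0,\dots,\bm{x}^l_{n-1}$ into these block-columns and reducing each grid row then yields, in block-row $i$, exactly the $i$-th block of $(\bm{G}_r^T\otimes \bm{I}_{N_l/m})\bm{s}^l$ with $\bm{s}^l=\bm{W}^l\bm{x}^l$. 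Hence the $m+2t$ reduced block-rows are, block-wise, a codeword of the row MDS code. A node that errs during O1 corrupts only the partial product it stores and thus only the one block-row it sits in; so at most $t$ erroneous nodes corrupt at most $t$ code symbols, and the $(m+2t,m)$ code recovers $\bm{s}^l$.

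Step O2 I would handle by the transposed argument: using the systematic first $m$ block-rows, feeding $(\bm{\delta}^l)^T$ and reducing along grid columns produces the column-encoded output $(\bm{c}^l)^T(\bm{G}_c\otimes \bm{I}_{N_{l-1}/n})$ with $(\bm{c}^l)^T=(\bm{\delta}^l)^T\bm{W}^l$, whose $n+2t$ block-columns form a codeword of the column MDS code; at most $t$ erroneous nodes again corrupt at most $t$ symbols, which the $(n+2t,n)$ code corrects. I would close by checking that the instance $m=n=2$, $t=1$ reproduces the displayed example, and --- to justify imposing this encoding only once --- that the rank-$1$ update O3 preserves the structure, since $(\bm{G}_r^T\otimes \bm{I})\,\eta\bm{\delta}^l(\bm{x}^l)^T\,(\bm{G}_c\otimes \bm{I})$ factors as the outer product of the row-encoded $\bm{\delta}^l$ and the column-encoded $\bm{x}^l$, letting each node update locally.

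The step I expect to be the main obstacle is making \emph{one erroneous node corrupts exactly one code symbol} precise \emph{in the right direction} for each stage --- a row symbol for O1, a column symbol for O2 --- because the same Kronecker-structured matrix is read through its row redundancy in feedforward and its column redundancy in backpropagation. I would have to verify that the reduction pattern and the systematic parts of $\bm{G}_c$ and $\bm{G}_r$ genuinely confine each node's error to a single symbol of the code actually being decoded, rather than letting it smear across several symbols and defeat the distance bound.
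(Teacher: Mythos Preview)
Your proposal is correct and follows essentially the same route as the paper: the paper deduces Claim~\ref{clm:general_encoding} from Theorem~\ref{thm:recovery_threshold}, whose proof is exactly your two-part argument --- use the row code $\bm{G}_r$ for O1 and the column code $\bm{G}_c$ for O2, observing in each case that the reduced outputs form a block-wise codeword of an $(m+2t,m)$ or $(n+2t,n)$ MDS code. The only cosmetic difference is that where you invoke the minimum-distance bound $d=2t+1$ directly, the paper routes through a parity-check/spark argument (Lemma~\ref{lem:error_tolerance}) that establishes the same correction guarantee and additionally handles the probability-$1$ detection claim needed elsewhere; for Claim~\ref{clm:general_encoding} itself your distance argument suffices.

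Your flagged ``main obstacle'' is not a genuine difficulty: in O1 the reduction is along rows, so node $(i,j)$ contributes only to the row-$i$ output $\bm{s}^l_i=\sum_j \bm{W}^l_{i,j}\bm{x}^l_j$, and an error there corrupts exactly that one row symbol; the transposed statement holds for O2. The paper notes the same confinement (and even remarks that multiple erroneous nodes in one row waste only one symbol). Your closing check that the rank-$1$ update preserves the Kronecker structure is also the paper's mechanism for avoiding re-encoding, so that piece is on target too.
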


We formally justify this claim in \Cref{appendix:error_detection}.

\noindent \textbf{Remark:}  Recall that, this encoding of  $\bm{W}^l$ is error-free and is done \textit{only once} prior to the start of training. For all subsequent iterations, we will show that the nodes can perform coded updates that maintain the coded structure of $\bm{W}^l$ without the need to encode the entire matrix afresh at every iteration. Thus the cost of encoding $\bm{W}^l$ initially is amortized as we train the network over several iterations.

\subsubsection{Feedforward stage on a single layer}
The feedforward stage  consists of computing a matrix-vector product $\bm{s}^l=\bm{W}^l\bm{x}^l$ (step O1) followed by an element-wise nonlinear operation $f(\bm{s}^l)$. The key idea of error correction (inspired from \cite{lee2018speeding,ABFT1984}) is illustrated in the following example: let us examine only the horizontal partitioning of $\bm{W}^l$, \textit{i.e.},   $\bm{W}^l = \begin{bmatrix}
       \bm{W}^l_{0,0}& \bm{W}^l_{0,1}\\
       \bm{W}^l_{1,0} & \bm{W}^l_{1,1}
       \end{bmatrix} = \begin{bmatrix}
       \bm{W}^l_{0,:}\\
       \bm{W}^l_{1,:}
       \end{bmatrix}$. The MDS coding discussed before, \textit{i.e.},
$\widetilde{\bm{W}}^l_{2,j} = \bm{W}^l_{0,j} + \bm{W}^l_{1,j} $ and 
$\widetilde{\bm{W}}^l_{3,j}= \bm{W}^l_{0,j} - \bm{W}^l_{1,j} $ for all $j=0,1$ actually results in two redundant row blocks which are linearly independent combinations of  $\bm{W}^l_{0,:}$ and $\bm{W}^l_{1,:}$, given by
$\widetilde{\bm{W}}^l_{2,:}= \bm{W}^l_{0,:} + \bm{W}^l_{1,:} $ and 
$\widetilde{\bm{W}}^l_{3,:}= \bm{W}^l_{0,:} - \bm{W}^l_{1,:} $ respectively. Consider these computations:
$$  \begin{bmatrix}
\bm{W}^l_{0,:}\\
       \bm{W}^l_{1,:}\\
       \widetilde{\bm{W}}^l_{2,:} \\
       \widetilde{\bm{W}}^l_{3,:}
\end{bmatrix} \bm{x}^l = \begin{bmatrix}
\bm{W}^l_{0,:}\\
       \bm{W}^l_{1,:}\\
       \bm{W}^l_{0,:} + \bm{W}^l_{1,:} \\
       \bm{W}^l_{0,:} - \bm{W}^l_{1,:}
\end{bmatrix} \bm{x}^l = \begin{bmatrix}
\bm{s}^l_0\\
\bm{s}^l_1\\
\bm{s}^l_0+\bm{s}^l_1\\
\bm{s}^l_0-\bm{s}^l_1
\end{bmatrix} = \begin{bmatrix}
\bm{s}^l_0\\
\bm{s}^l_1\\
\widetilde{\bm{s}}^l_2\\
\widetilde{\bm{s}}^l_3
\end{bmatrix}. $$
We claim that $ \bm{s}^l(=\bm{W}^l\bm{x}^l)$ can still be successfully decoded from these $4$ outputs if one of them is erroneous, under Error Models $1$ and $2$. More generally, the claim is as follows:
\begin{clm} 
\label{clm:error_tolerance}
Using a systematic $(m+2t,m)$ MDS code to encode the horizontally-split blocks of $\bm{W}^l$, the result $ \bm{s}^l=\bm{W}^l\bm{x}^l$ can still be successfully decoded from the $m+2t$ computations $\bm{W}^l_{i,:}\bm{x}^l$ for $0 \leq i \leq m-1$ or  $\widetilde{\bm{W}}^l_{i,:}\bm{x}^l$ for $m \leq i \leq m+2t-1$, in presence of \textit{any} $t$ errors, under both the Error Models $1$ and $2$.
\end{clm}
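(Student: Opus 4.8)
The plan is to exploit the linearity of the matrix-vector product together with the minimum-distance property of the underlying MDS code, so that the claim reduces to the classical fact that a code of minimum distance $2t+1$ corrects any $t$ errors.

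First I would establish the \emph{commutation} of encoding with the matrix-vector product. By construction, each coded row block is the linear combination $\widetilde{\bm{W}}^l_{i,:} = \sum_{j=0}^{m-1} (\bm{G}_r)_{j,i}\,\bm{W}^l_{j,:}$ dictated by the generator matrix $\bm{G}_r$, so right-multiplication by $\bm{x}^l$ gives $\widetilde{\bm{W}}^l_{i,:}\bm{x}^l = \sum_{j}(\bm{G}_r)_{j,i}\,\bm{s}^l_j$. Hence the stacked vector of all $m+2t$ outputs equals $(\bm{G}_r^T\otimes\bm{I}_{N_l/m})\,[\bm{s}^l_0;\dots;\bm{s}^l_{m-1}]$; in words, the outputs themselves form a codeword of the same systematic $(m+2t,m)$ MDS code, applied block-wise to the sub-vectors $\bm{s}^l_i$ of length $N_l/m$. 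The correct answer $\bm{s}^l$ is the concatenation of the $m$ systematic blocks of this codeword.

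Next I would transfer the scalar MDS guarantee to this block setting. A $(m+2t,m)$ MDS code meets the Singleton bound, so its minimum distance is $d=2t+1$. I would show the block code inherits block-distance at least $2t+1$: for a nonzero block codeword $(\bm{v}_0,\dots,\bm{v}_{m+2t-1})$ with block-support of size $s$, any coordinate $k$ at which some $\bm{v}_i$ is nonzero yields a nonzero \emph{scalar} codeword of Hamming weight at most $s$, and the scalar minimum distance $2t+1$ forces $s\ge 2t+1$. Since an erroneous node corrupts one entire output block, $t$ erroneous nodes corrupt at most $t$ block-coordinates, and a code of block-distance $2t+1$ admits a unique codeword within block-distance $t$ of the received word; decoding it recovers every $\bm{s}^l_i$ and hence $\bm{s}^l=\bm{W}^l\bm{x}^l$.

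For the constructive realization and the two error models, I would invoke syndrome decoding: with the $2t\times(m+2t)$ parity-check matrix $\bm{H}$ of the code, any $2t$ of its columns are linearly independent, so any block-error of weight at most $t$ is uniquely determined by the syndrome $\bm{H}\bm{y}$; once the (at most $t$) erroneous blocks are located and removed, any $m$ of the remaining $\ge m+t$ clean blocks correspond to $m$ columns of $\bm{G}_r$ that form an invertible submatrix (MDS property), from which we solve for $\bm{s}^l$. Under Error Model $1$ the hypothesis directly bounds the number of corrupted blocks by $t$, so this applies verbatim. Under Error Model $2$, the additive continuous noise guarantees with probability $1$ that a corrupted block produces a nonzero syndrome contribution inconsistent with the all-clean pattern, so the same localization succeeds (detection of \emph{more} than $t$ errors being deferred to Lemma~\ref{lem:error_tolerance}). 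The main obstacle is precisely the block-to-scalar reduction: making rigorous that whole-block (sub-vector) corruptions act as single MDS symbol errors and that the alignment of error supports across the $N_l/m$ coordinates cannot weaken the minimum-distance guarantee. Once this is pinned down, the remainder is the textbook MDS decoding argument.
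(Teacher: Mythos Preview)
Your proposal is correct and follows the same high-level route as the paper: first observe that the stacked outputs equal $(\bm{G}_r^T\otimes\bm{I}_{N_l/m})\bm{s}^l$, so they constitute a block MDS codeword, and then invoke the $t$-error-correcting capability of an $(m+2t,m)$ MDS code. The paper defers the claim to Lemma~\ref{lem:error_tolerance}, whose proof establishes exactly these two steps.

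The one technical difference worth noting is how the block error-correction guarantee is justified. You reduce to the scalar case coordinate-wise: pick a coordinate $k$ where some block is nonzero, obtain a scalar codeword of Hamming weight at most the block-support, and apply the Singleton-bound minimum distance $2t+1$ of the scalar MDS code. The paper instead works directly with the parity-check matrix $\bm{H}$ and proves $\mathrm{Spark}(\bm{H})=2t+1$ via a rank/null-space argument (any $m$ rows of $\bm{G}_r^T$ are invertible, so no nonzero vector in $\mathrm{Null}(\bm{H})$ can have fewer than $2t+1$ nonzeros), then appeals to the sparse-reconstruction uniqueness criterion $\|\bm{e}\|_0<\mathrm{Spark}(\bm{H})/2$. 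The two arguments are equivalent (for linear codes, minimum distance equals $\mathrm{Spark}$ of the parity-check matrix), but your coordinate-wise reduction is arguably more elementary and makes the block-to-scalar passage completely explicit, whereas the paper's $\mathrm{Spark}$ computation connects more directly to the compressed-sensing decoding algorithm it later invokes. One small over-elaboration in your write-up: under the hypothesis of the claim (at most $t$ errors), the deterministic minimum-distance argument already covers both Error Models $1$ and $2$ without any appeal to continuous noise; the probabilistic reasoning is only needed for Claim~\ref{clm:error_detection}.
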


We formally show this in \Cref{appendix:error_detection}. For an intuition, we return to the example with $m=n=2$ and $t=1$. First let us assume that only one of the $4$ outputs 
 $\bm{s}^l_0, \bm{s}^l_1, \widetilde{\bm{s}}^l_2$ and $\widetilde{\bm{s}}^l_3 $ is erroneous, \textit{i.e.}, corrupted with an additive random noise. As an example, say $\bm{s}^l_1 \rightarrow \bm{s}^l_1 + \bm{e}_1$ where $\bm{e}_1$ is the noise. Consider all subsets of size $3$ as follows: $ \{\bm{s}^l_0, \bm{s}^l_1 + \bm{e}_1, \widetilde{\bm{s}}^l_2 \}  $, $ \{\bm{s}^l_0,\widetilde{\bm{s}}^l_2, \widetilde{\bm{s}}^l_3 \}  $, $\{\bm{s}^l_0, \bm{s}^l_1 + \bm{e}_1, \widetilde{\bm{s}}^l_3\}$ and $\{\bm{s}^l_1 + \bm{e}_1, \widetilde{\bm{s}}^l_2,  \widetilde{\bm{s}}^l_3 \}$. For each subset, one can perform a consistency check for errors as follows:
 \begin{align*}
  \{\bm{s}^l_0, \bm{s}^l_1 + \bm{e}_1, \widetilde{\bm{s}}^l_2 \} &: \; \text{Is } \bm{s}^l_0 + \bm{s}^l_1 + \bm{e}_1 = \widetilde{\bm{s}}^l_2 ? \\
  \{\bm{s}^l_0,\widetilde{\bm{s}}^l_2, \widetilde{\bm{s}}^l_3\}&: \; \text{Is }  2\bm{s}^l_0 = \widetilde{\bm{s}}^l_2 + \widetilde{\bm{s}}^l_3 ? \\
  \{\bm{s}^l_0, \bm{s}^l_1 + \bm{e}_1, \widetilde{\bm{s}}^l_3 \}&: \;\text{Is } \bm{s}^l_0 -(\bm{s}^l_1 + \bm{e}_1)= \widetilde{\bm{s}}^l_3 ?\\
  \{\bm{s}^l_1 + \bm{e}_1, \widetilde{\bm{s}}^l_2,  \widetilde{\bm{s}}^l_3 \}&: \; \text{Is } 2(\bm{s}^l_1 + \bm{e}_1)= \widetilde{\bm{s}}^l_2 -\widetilde{\bm{s}}^l_3 ?
 \end{align*}

If there is only one erroneous output, one of the four subsets of size $3$ would pass the consistency check (in this case $\{\bm{s}^l_0,\widetilde{\bm{s}}^l_2, \widetilde{\bm{s}}^l_3 \}$) and the three correct outputs would be determined. From this subset, one can correctly decode  $\bm{s}^l_0$ and $\bm{s}^l_1= \widetilde{\bm{s}}^l_2-\bm{s}^l_0 $ which constitute $\bm{s}^l$.

\noindent \textbf{General Error-Detection Method:} Instead of checking all possible subsets, a more efficient and general technique to detect errors is to first perform $2t$ consistency checks using the parity check matrix of $\bm{G}_r$. This can be done, as is standard, by examining a full-rank matrix $\bm{H}$ of dimension $2t \times (m+2t)$ such that $\bm{H} \bm{G}_r^T=\bm{0}$. Thus, first we pre-multiply $(\bm{H} \otimes \bm{I}_{N_l/m})$ with the output as a check. If there are no erroneous blocks, this result will be zero under both Error Models $1$ and $2$, as we elaborate in \Cref{appendix:error_detection}. Otherwise, we will perform decoding,~e.g. using decoding algorithms based on sparse-reconstruction~\cite{candes2005decoding}.

Our next claim is that under Error Model $2$, if there is more than $1$ erroneous output, it is still possible to detect the occurrence of error with probability $1$, even though the error cannot be corrected. More generally, our claim is as follows:

\begin{clm} 
\label{clm:error_detection}
If the erroneous outputs are corrupted by an additive noise whose every element is drawn independently from real-valued continuous distribution (Error Model $2$), then it is still possible to detect the occurrence of erroneous outputs with probability $1$, even if their number is more than $t$.
\end{clm}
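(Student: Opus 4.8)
The plan is to analyze the syndrome produced by the general error-detection method described above and show that, under the continuous-noise assumption of Error Model~$2$, an undetectable error pattern is a measure-zero event. First I would stack the $m+2t$ returned blocks into a single vector $\bm{y}$, so that in the absence of errors $\bm{y}=(\bm{G}_r^T\otimes \bm{I}_{N_l/m})\bm{s}^l$, where $\bm{s}^l$ is the stack of the true sub-vectors. When some (possibly more than $t$) blocks, indexed by a nonempty set $\mathcal{E}$, are corrupted additively, we have $\bm{y}=(\bm{G}_r^T\otimes \bm{I}_{N_l/m})\bm{s}^l+\bm{e}$, where $\bm{e}$ is supported on the blocks in $\mathcal{E}$ and its nonzero entries are drawn independently from a real-valued continuous distribution. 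The detector forms the syndrome $\bm{z}=(\bm{H}\otimes \bm{I}_{N_l/m})\bm{y}$. Using $\bm{H}\bm{G}_r^T=\bm{0}$ together with the mixed-product rule $(\bm{H}\otimes\bm{I}_{N_l/m})(\bm{G}_r^T\otimes\bm{I}_{N_l/m})=(\bm{H}\bm{G}_r^T)\otimes\bm{I}_{N_l/m}=\bm{0}$, I would cancel the data term and conclude that the syndrome depends only on the error, namely $\bm{z}=(\bm{H}\otimes\bm{I}_{N_l/m})\bm{e}$.

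Detection fails exactly when $\bm{z}=\bm{0}$, i.e., when $\bm{e}$ lies in $\ker(\bm{H}\otimes\bm{I}_{N_l/m})$, the block code generated by $\bm{G}_r^T\otimes\bm{I}_{N_l/m}$. The key combinatorial fact I would invoke is that $\bm{H}$ is a parity-check matrix of a $(m+2t,m)$ MDS code, so every one of its columns is nonzero (indeed any $2t$ of them are linearly independent). Writing the syndrome block-wise as $\bm{z}_k=\sum_{i\in\mathcal{E}}H_{k,i}\bm{e}_i$ for $k=1,\ldots,2t$, the map $\bm{e}\mapsto\bm{z}$ restricted to vectors supported on $\mathcal{E}$ is the linear map $\bm{H}_{\mathcal{E}}\otimes\bm{I}_{N_l/m}$, where $\bm{H}_{\mathcal{E}}$ collects the columns of $\bm{H}$ indexed by $\mathcal{E}$. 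Since $\mathcal{E}$ is nonempty and no column of $\bm{H}$ vanishes, $\bm{H}_{\mathcal{E}}\neq\bm{0}$, so this map is not identically zero; hence its kernel is a \emph{proper} linear subspace $V$ of the space $U_{\mathcal{E}}\cong\mathbb{R}^{|\mathcal{E}|N_l/m}$ of all error vectors supported on $\mathcal{E}$.

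It then remains to translate ``proper subspace'' into ``probability zero''. Conditioned on a fixed nonempty support $\mathcal{E}$, the vector $\bm{e}$ is distributed over $U_{\mathcal{E}}$ with a density (a product of continuous one-dimensional distributions), hence absolutely continuous with respect to Lebesgue measure. A proper linear subspace has Lebesgue measure zero, so $\Pr[\bm{e}\in V\mid \mathcal{E}]=0$, i.e.\ $\Pr[\bm{z}=\bm{0}\mid \mathcal{E}]=0$. Averaging over the (arbitrary) distribution of the random support $\mathcal{E}$ preserves this, giving detection with probability $1$ whenever at least one block is in error. The identical argument applies to step~O2 with $\bm{G}_c$ and its parity-check matrix, establishing the claim for both stages.

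The step I expect to be the main obstacle is the conceptual one hidden in the second paragraph: for a code of minimum distance $2t+1$ the usual guarantee is only that up to $2t$ errors are detectable, and with more than $2t$ errors the error vector could in principle coincide with a nonzero codeword and escape detection. The crux is to recognize that such a coincidence forces $\bm{e}$ into the fixed lower-dimensional codeword subspace, and then to argue---using nothing more than the non-vanishing of the columns of $\bm{H}$ and the absolute continuity of the noise---that this event has probability zero. Care is also needed to handle the Kronecker/block structure correctly (the noise is a full vector per block, not a scalar) and to condition properly on the random error support before invoking the measure-zero conclusion.
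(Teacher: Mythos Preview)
Your proposal is correct and follows the same overall syndrome-based framework as the paper: both set up $\bm{z}=(\bm{H}\otimes\bm{I}_{N_l/m})\bm{e}$, observe that detection fails only when $\bm{e}$ lands in the null-space restricted to its support, and argue this is a Lebesgue-measure-zero event under the continuous-noise assumption.

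The difference is in how the ``proper subspace'' step is established. The paper invokes the full MDS structure: it shows $\mathrm{Spark}(\bm{H})=2t+1$, so any error supported on $k$ blocks lying in the null-space must have $k\geq 2t+1$; it then computes that $\bm{H}_{\mathcal{A}_k}$ has rank exactly $2t$, giving a null-space of dimension $k-2t$, and compares this to the $k$ independent blocks of noise. Your route is more elementary: you only need that no single column of $\bm{H}$ vanishes (a much weaker consequence of the MDS property) to conclude $\bm{H}_{\mathcal{E}}\neq\bm{0}$ for any nonempty $\mathcal{E}$, hence the kernel is proper, hence measure zero. This buys simplicity and shows that detection-with-probability-$1$ actually requires far less than the MDS property; the paper's sharper dimension count is unnecessary for this claim (though it is reused elsewhere, in the Remark following the lemma, to rule out mis-decoding into a wrong sparse error). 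Both arguments handle the Kronecker block structure and the conditioning on the error support in the same way.
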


We formally show this in \Cref{appendix:error_detection}. For the particular example with $m=n=2$ and $t=1$, this means that if there is more than one erroneous output, then all four of the consistency checks will fail with probability $1$. Thus, the system will be able to detect the occurrence of errors, even though it might not be able to correct them.

With this key idea in mind, we now describe the strategy for the feedforward stage (with $m=n=2$ and $t=1$). First, assume that the virtual node $S^{l-1}$ has the vector $\bm{x}^l$ required at a particular iteration for layer $l$. We will justify this assumption at the end of this subsection. We also assume that the sub-matrices (blocks) of the updated $\bm{W}^l$ for the current iteration are also available at the appropriate nodes
\footnote{We encoded the $\bm{W}^l$ matrix and stored it in this manner before the first iteration. We will also show that in the backpropagation stage, all the nodes are able to update their sub-matrices without requiring encoding of matrices at every iteration, and thus the updated sub-matrix or coded sub-matrix of $\bm{W}^l$ is available at every node, prior to each new iteration.}. For the matrix-vector product $\bm{W}^l\bm{x}^l$, we only use the $8$ nodes containing the following sub-matrices (blocks):
\begin{align*}
\begin{bmatrix}
\bm{W}^l_{0,0} & \bm{W}^l_{0,1}  \\
\bm{W}^l_{1,0} & \bm{W}^l_{1,1}  \\
\widetilde{\bm{W}}^l_{2,0} & \widetilde{\bm{W}}^l_{2,1} \\
\widetilde{\bm{W}}^l_{3,0} & \widetilde{\bm{W}}^l_{3,1}
\end{bmatrix} &=
\left( \begin{bmatrix}
1 & 0  \\
0 & 1  \\
1 & 1 \\
1 & - 1 
\end{bmatrix} \otimes \bm{I}_{N_{l}/2} \right)
\begin{bmatrix}
\bm{W}^l_{0,0} & \bm{W}^l_{0,1} \\
\bm{W}^l_{1,0} & \bm{W}^l_{1,1} \\
\end{bmatrix}
\nonumber \\
& = \left(\bm{G_r^T}  \otimes \bm{I}_{N_{l}/2}  \right) \bm{W}^l. 
\end{align*}

For the feedforward stage, each of these $8$ sub-matrices, \textit{i.e.}, $\bm{W}^l_{i,j}$ or $\widetilde{\bm{W}}^l_{i,j}$, are available in $8$ nodes beforehand (laid out in a $4 \times 2$ grid). Observe that,     
$$ \begin{bmatrix}
\bm{s}^l_0 \\
\bm{s}^l_1\\
\widetilde{\bm{s}}^l_2\\
\widetilde{\bm{s}}^l_3
\end{bmatrix} =   \begin{bmatrix}
\bm{W}^l_{0,:}  \\
\bm{W}^l_{1,:}  \\
\widetilde{\bm{W}}^l_{2,:}\\
\widetilde{\bm{W}}^l_{3,:}
\end{bmatrix}   \bm{x}^l =  \begin{bmatrix}
           \bm{W}^l_{0,0}  & \bm{W}^l_{0,1} \\
     \bm{W}^l_{1,0}  & \bm{W}^l_{1,1}  \\
   \widetilde{\bm{W}}^l_{2,0}  &    \widetilde{\bm{W}}^l_{2,1}  \\
     \widetilde{\bm{W}}^l_{3,0}  &   \widetilde{\bm{W}}^l_{3,1} 
          \end{bmatrix}  \begin{bmatrix}
          \bm{x}^l_0\\\bm{x}^l_1
          \end{bmatrix}=
          \begin{bmatrix}
           \bm{W}^l_{0,0} \bm{x}^l_0 + \bm{W}^l_{0,1} \bm{x}^l_1\\
     \bm{W}^l_{1,0} \bm{x}^l_0 + \bm{W}^l_{1,1} \bm{x}^l_1  \\
   \widetilde{\bm{W}}^l_{2,0} \bm{x}^l_0 +    \widetilde{\bm{W}}^l_{2,1} \bm{x}^l_1 \\
     \widetilde{\bm{W}}^l_{3,0} \bm{x}^l_0 +   \widetilde{\bm{W}}^l_{3,1} \bm{x}^l_1 
          \end{bmatrix}.
$$
\noindent Each of the $8$ nodes only requires  either $\bm{x}^l_0$ or $\bm{x}^l_1$ to compute the $8$ small matrix-vector products (e.g. $\bm{W}^l_{0,0} \bm{x}^l_0$)  in parallel. The results are then added along the horizontal dimension to compute $\{\bm{s}^l_0 
,\bm{s}^l_1,
\widetilde{\bm{s}}^l_2,
\widetilde{\bm{s}}^l_3 \}$  in parallel and sent to the virtual node $S^{l}$ for error detection through parity checks (consistency checks)\footnote{Our actual implementation is decentralized with no virtual nodes. These consistency checks are performed using efficient collective communication protocols (All-Reduce)~\cite{chan2007collective} among the relevant nodes, without communicating all the sub-vectors to any particular node, that we will elaborate further when we discuss Decentralized Implementation.}. Now, even if any one of $\{\bm{s}^l_0 
,\bm{s}^l_1,
\widetilde{\bm{s}}^l_2,
\widetilde{\bm{s}}^l_3 \}$ is corrupted by soft-errors, the error-detector/decoder $S^{l}$ can still decode $\bm{s}^l$, apply the nonlinear activation $f(\cdot)$ and generate the feedforward input for the next layer. This also justifies our initial assumption that the virtual node has the feedforward input for each layer at the beginning of feedforward stage in that layer. If there is more than $1$ erroneous output, the virtual node $S^{l}$ detects the occurrence of errors with probability $1$, and reverts the system to the last \textit{checkpoint}. The feedforward stage is illustrated in Fig.~\ref{fig:DNN_feedforward}, where the nodes not used for computation are faded. The steps are as follows: 
\begin{figure}[t]
\centering
\subfloat[Step $1$: Node $S^{l-1}$ multi-casts appropriate portions of $\bm{x}^l$ to the corresponding column of nodes.]{\includegraphics[height=2.6cm]{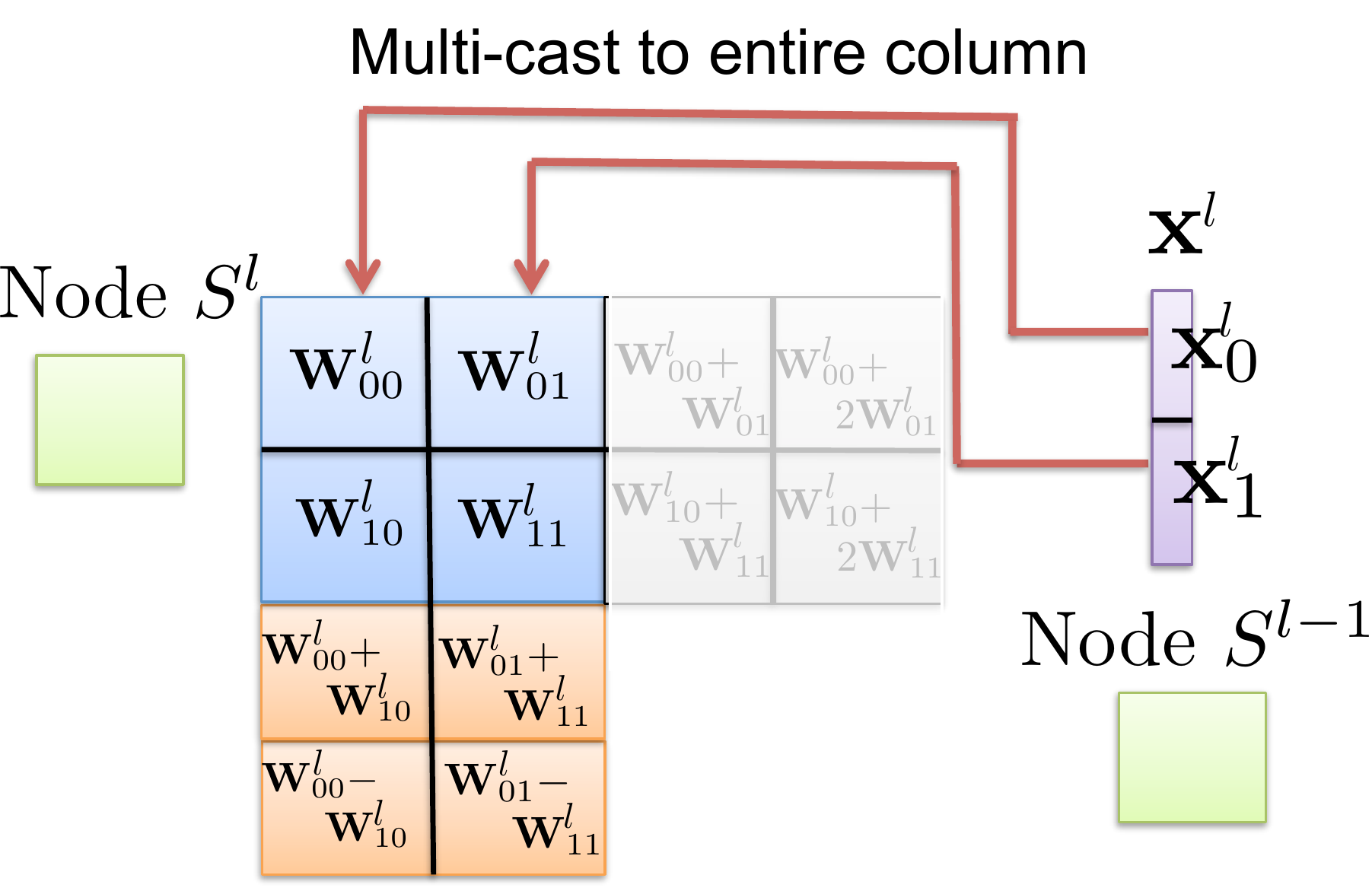}%
\label{fig:DNN_feedforward1}}
\hspace{0.5cm}
\subfloat[Step $2$: Each active node performs individual computations, \textit{i.e.},  $\bm{W}_{i,j}^l\bm{x}^l_j $ or $\widetilde{\bm{W}}^l_{i,j}\bm{x}^l_j $.]{\includegraphics[height=2.6cm]{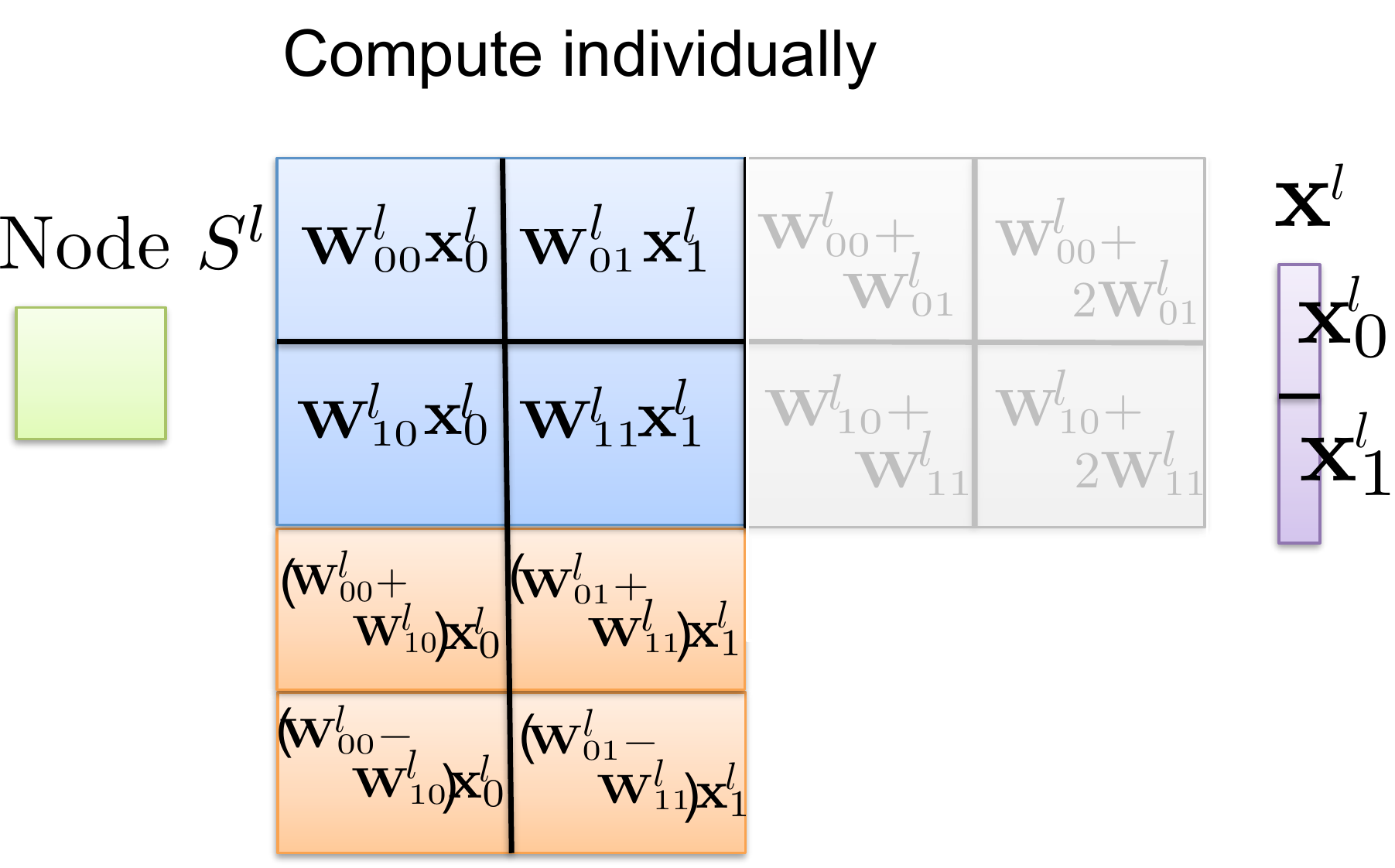}%
\label{fig:DNN_feedforward2}}\\
\subfloat[Step $3$: Nodes compute the sum of partial results horizontally and send to node $S^{l}$.]{\includegraphics[height=2.4cm]{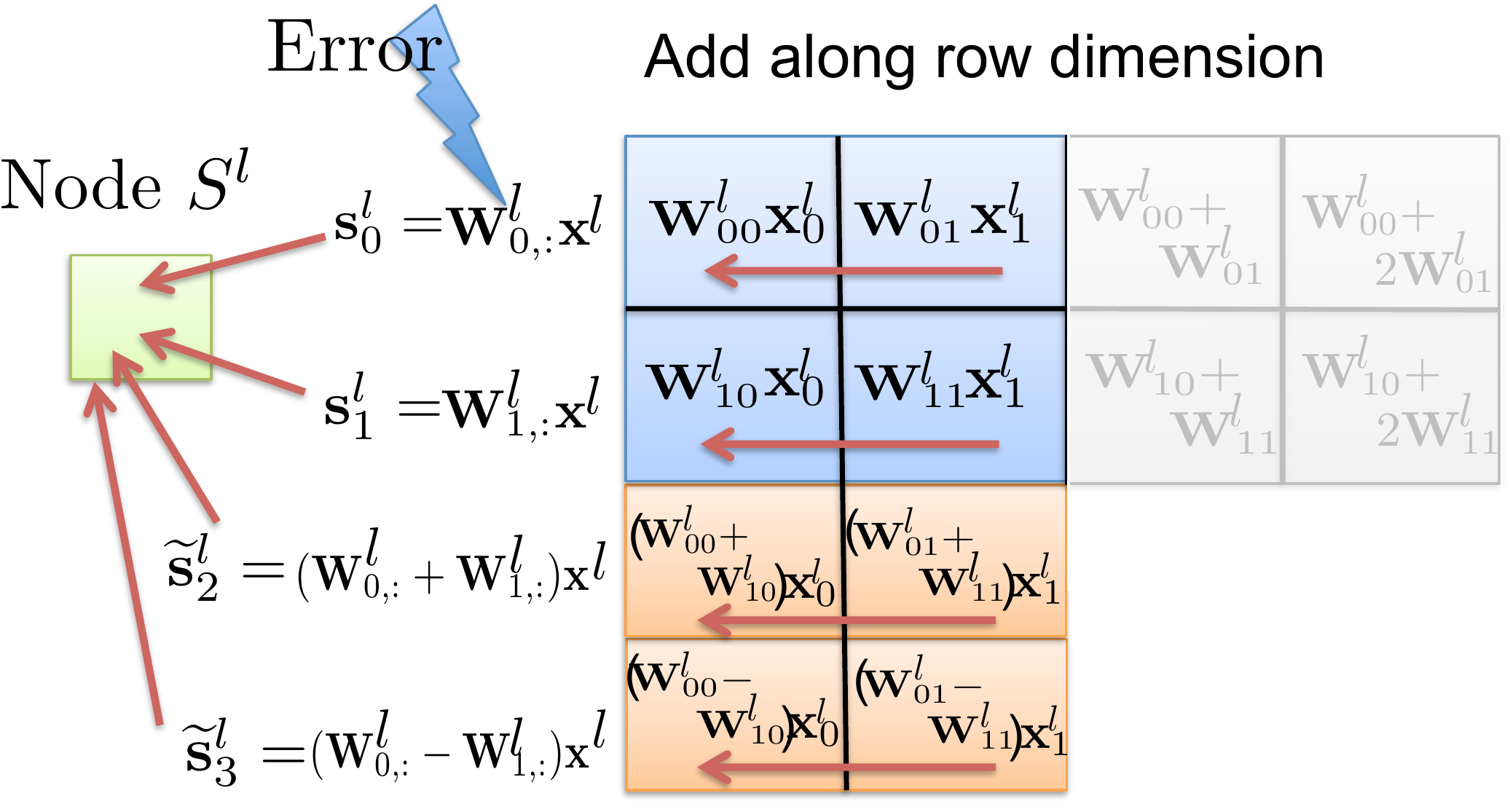}%
\label{fig:DNN_feedforward3}}
\hspace{0.5cm}
\subfloat[Step $4$: Additional Encoding Step (for update).]{\includegraphics[height=2.4cm]{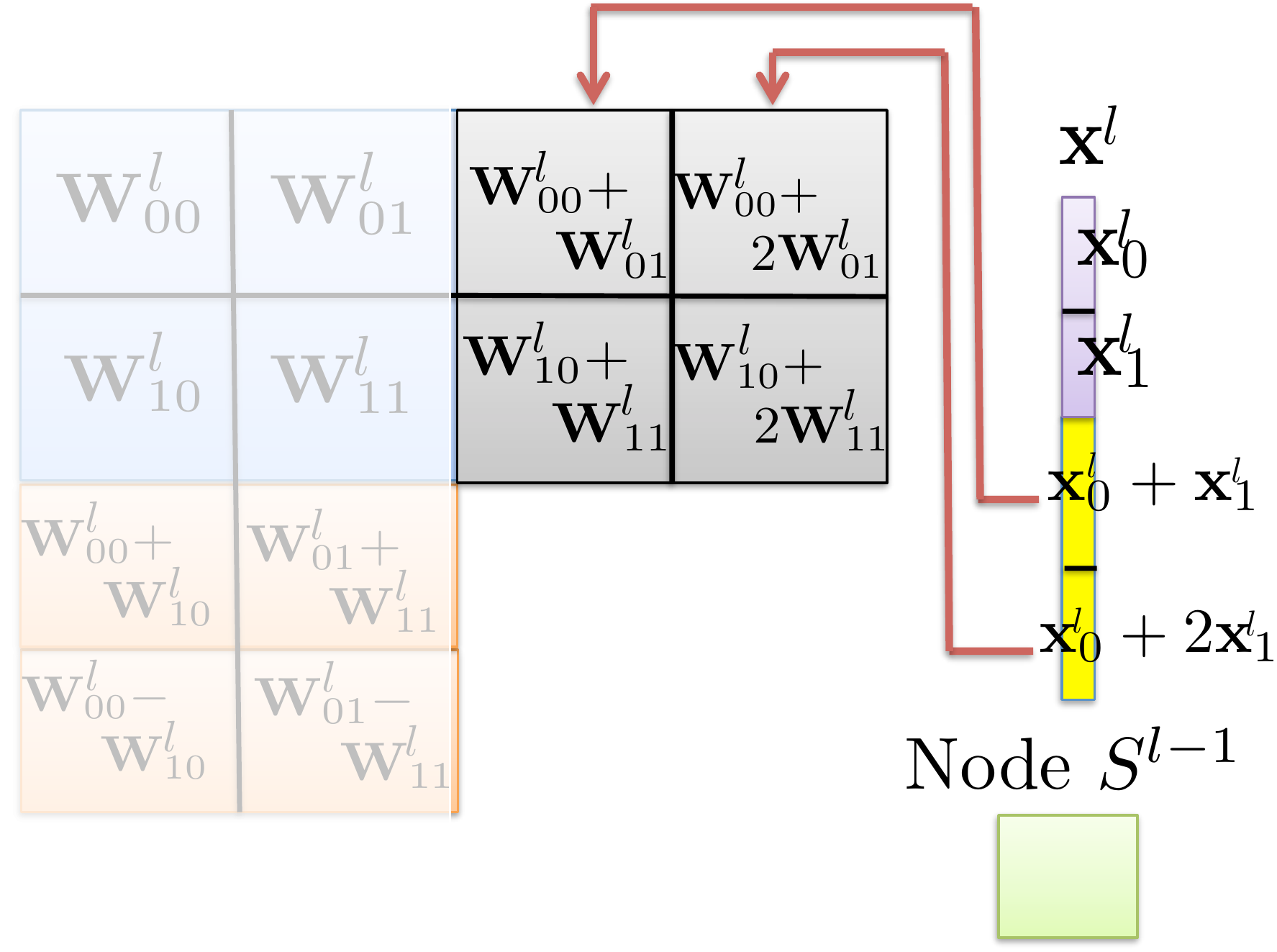}
\label{fig:DNN_feedforward4}}
\caption{Feedforward Stage in CodeNet strategy. }
\label{fig:DNN_feedforward}
\end{figure}

\noindent {\bf{[Step $1$]:}} The virtual node $S^{l-1}$ (which serves as encoder/source node for feedforward stage) divides the vector $\bm{x}^l$ into $ n (=2) $ parts and multi-casts appropriate portions of the vector, \textit{i.e.}, $\bm{x}^l_j$ to the $j$-th column of nodes in the grid, as shown in Fig.~\ref{fig:DNN_feedforward1}.\\
\noindent {\bf{[Step $2$]:}} All the active nodes then perform their individual computations, \textit{i.e.}, $\bm{W}^l_{i,j}\bm{x}^l_j$ or $\widetilde{\bm{W}}^l_{i,j}\bm{x}^l_j$, as shown in Fig.~\ref{fig:DNN_feedforward2}.\\ 
\noindent {\bf{[Step $3$]:}} For each of the $m+2$ ($=4$ here) rows in the grid, the partial results are added along the horizontal dimension and the sum is sent to the sink node $S^{l}$ for error detection through consistency checks (parity checks), as shown in Fig.~\ref{fig:DNN_feedforward3}. The sink node $S^{l}$ can correct upto $t$ errors, under Error Models $1$ or $2$, and thus decodes $\bm{s}^l=\bm{W}^l\bm{x}^l$. Then, it generates input for the next layer $(l+1)$ as follows: $\bm{x}^{(l+1)}=f(\bm{s}^l)$. If there are more errors under Error Model $2$, then the system reverts to the last checkpoint. \\
\noindent {\bf{[Step $4$]:}} Meanwhile, the node $S^{l-1}$ encodes the sub-vectors of $\bm{x}^l$ and sends coded sub-vectors to the nodes that were inactive during the feedforward stage multiplication as shown in Fig.~\ref{fig:DNN_feedforward4}. This additional encoding step does not affect the computation $\bm{s}^l=\bm{W}^l\bm{x}^l$ in the feedforward stage but will be useful in the update stage. The sub-vectors of $\bm{x}^l$ are encoded using the same $(n+2t,n)$ systematic MDS code used to encode the vertically-split blocks of $\bm{W}^l$. E.g.,~for $m=n=2$ and $t=1$, the encoding is, 
$$
\begin{bmatrix}
\bm{x}^l_0\\
\bm{x}^l_{1}\\
\widetilde{\bm{x}}^l_2 \\
\widetilde{\bm{x}}^l_{3}
\end{bmatrix}
= (\bm{G}_c^T \otimes \bm{I}_{N_{l-1}/2} )\begin{bmatrix} 
\bm{x}^l_0 \\
\bm{x}^l_{1}
\end{bmatrix} = \left( \begin{bmatrix}
1 & 0 \\
0 & 1\\
1 & 1\\
1 & 2
\end{bmatrix}\otimes \bm{I}_{N_{l-1}/2} \right)\begin{bmatrix} 
\bm{x}^l_0 \\
\bm{x}^l_{1}
\end{bmatrix}. $$
The $2$ parity sub-vectors (or coded sub-vectors) $\widetilde{\bm{x}}^l_2$ and
$\widetilde{\bm{x}}^l_3$ are sent to the entire column of nodes indexed $2$ and $3$ respectively, as shown in Fig.~\ref{fig:DNN_feedforward4}.


\subsubsection{Backpropagation stage on a single layer}

In the backpropagation stage, the key operation is the computation of the matrix-vector product $(\bm{c}^l)^T=(\bm{\delta}^l)^T\bm{W}^l$ (step O2), followed by a diagonal matrix post-multiplication. The computation of $(\bm{c}^l)^T=(\bm{\delta}^l)^T\bm{W}^l$ is performed in a way similar to the feedforward stage, as shown in Fig.~\ref{fig:DNN_backprop}.  For the matrix-vector product, we now use the nodes that contain
\begin{align*}
& \begin{bmatrix}
\bm{W}^l_{0,0} & \bm{W}^l_{0,1} & \widetilde{\bm{W}}^l_{0,2} & \widetilde{\bm{W}}^l_{0,3} \\
\bm{W}^l_{1,0} & \bm{W}^l_{1,1} & \widetilde{\bm{W}}^l_{1,2} & \widetilde{\bm{W}}^l_{1,3}  
\end{bmatrix} \\ &=
\begin{bmatrix}
\bm{W}^l_{0,0} & \bm{W}^l_{0,1} \\
\bm{W}^l_{1,0} & \bm{W}^l_{1,1} \\
\end{bmatrix}
\left(\begin{bmatrix}
1 & 0  & 1 &  1 \\
0 & 1 & 1 &  2  
\end{bmatrix}\otimes \bm{I}_{N_{l-1}/2}  \right) \\
&=  \bm{W}^l \left( \bm{G_c} \otimes \bm{I}_{N_{l-1}/2}\right).
\end{align*}
Similar to the feedforward stage, each of these $8$ sub-matrices are available at $8$ nodes before backpropagation starts. The node $S^l$ now serves as the encoder/source for the backpropagation stage as the direction of flow of computation is reversed from feedforward stage. Assume that node $S^l$ has the backpropagated error from the $(l+1)$-th layer, \textit{i.e.}, $(\bm{\delta}^l)^T$ at the beginning of backpropagation at layer $l$. We will justify this assumption towards the end of this subsection. Observe that,
\begin{align*}
&\begin{bmatrix}
(\bm{c}^l_0)^T & (\bm{c}^l_1)^T & (\widetilde{\bm{c}}^l_2)^T & (\widetilde{\bm{c}}^l_3)^T
\end{bmatrix}\\
&= \begin{bmatrix}
(\bm{\delta}^l_0)^T \bm{W}^l_{0,0} & (\bm{\delta}^l_0)^T\bm{W}^l_{0,1} & (\bm{\delta}^l_0)^T \widetilde{\bm{W}}^l_{0,2} & (\bm{\delta}^l_0)^T \widetilde{\bm{W}}^l_{0,3} \\
+(\bm{\delta}^l_1)^T\bm{W}^l_{1,0} & +(\bm{\delta}^l_1)^T\bm{W}^l_{1,1} & +(\bm{\delta}^l_1)^T\widetilde{\bm{W}}^l_{1,2} & +(\bm{\delta}^l_1)^T\widetilde{\bm{W}}^l_{1,3}  
\end{bmatrix}.  
\end{align*}

Each node only needs $(\bm{\delta}^l_0)^T$ or $(\bm{\delta}^l_1)^T$ to compute its small matrix-vector product. The partial results are then added along the vertical dimension and sent to decoder $S^{l-1}$. If any $1$ of $\{(\bm{c}^l_0)^T ,(\bm{c}^l_1)^T , (\widetilde{\bm{c}}^l_2)^T ,(\widetilde{\bm{c}}^l_3)^T \}$ is in error, the decoder $S^{l-1}$ can still decode $\bm{c}$ and proceed with the computation. The steps of the backpropagation stage are as follows (also illustrated in Fig.~\ref{fig:DNN_backprop}):\\
\noindent [{\bf{Step $1$}}]: First $S^l$ divides the row-vector $(\bm{\delta}^l)^T$ into $m(=2)$ equal parts and multi-casts $(\bm{\delta}^l_i)^T$ to the $i$-th rows of nodes as shown in Fig.~\ref{fig:DNN_backprop1}. \\
\noindent [{\bf{Step $2$}}]: Then, each active node performs its individual computations, \textit{i.e.}, $(\bm{\delta}^l_i)^T \bm{W}^l_{i,j}$ or $(\bm{\delta}^l_i)^T \widetilde{\bm{W}}^l_{i,j}$, as shown in Fig.~\ref{fig:DNN_backprop2}. \\
\noindent [{\bf{Step $3$}}]: After that, the partial results for each node are summed along the vertical dimension and sent to the virtual node $S^{l-1}$ for error detection through consistency checks (parity checks). The sink node can correct upto $t$ errors, as shown in Fig.~\ref{fig:DNN_backprop3}. Observe that, the virtual node $S^{l-1}$ now has $(\bm{\delta}^l)^T$ and $\bm{x}^l$, and thus it can generate the backpropagated error for the layer $(l-1)$ through the diagonal matrix post-multiplication step. If there are more than $t$ errors under Error Model $2$, the system detects the occurrence of errors with probability $1$, and reverts to the last \textit{checkpoint}.\\
\noindent [{\bf{Step $4$}}]: An additional step at $S^{l}$ is to encode and send coded sub-vectors of $(\bm{\delta}^l)^T$ to the inactive nodes, that were not used for the matrix-vector product in the backpropagation stage, as shown in Fig.~\ref{fig:DNN_backprop4}). The encoding uses the same $(m+2t,m)$ systematic MDS code used to encode the horizontally-split blocks of $\bm{W}^l$. For $m=n=2$ and $t=1$, the encoding is given by,
\begin{align*}
\begin{bmatrix}
(\bm{\delta}_0^l)^T | \  (\bm{\delta}_1^l)^T | \ (\widetilde{\bm{\delta}}_2^l)^T | \ (\widetilde{\bm{\delta}}_3^l)^T 
\end{bmatrix}
&= \begin{bmatrix}
(\bm{\delta}_0^l)^T | \ (\bm{\delta}_1^l)^T
\end{bmatrix}
\left( \bm{G}_r\otimes \bm{I}_{N_{l}/2} \right)\\
& =\begin{bmatrix}
(\bm{\delta}_0^l)^T | \  (\bm{\delta}_1^l)^T
\end{bmatrix}
\left( \begin{bmatrix}
1 & 0 & 1 & 1 \\
0 & 1 & 1 & - 1 
\end{bmatrix} \otimes \bm{I}_{N_{l}/2} \right).
\end{align*}
The parity sub-vectors (or coded sub-vectors) $(\widetilde{\bm{\delta}}_2^l)^T$ and $(\widetilde{\bm{\delta}}_3^l)^T$ are sent to the row of nodes indexed $2$ and $3$ respectively.

\begin{figure}[t]
\centering
\subfloat[Step $1$: Node $S^{l}$ multi-casts appropriate portions of $(\bm{\delta}^l)^T$ to corresponding row of nodes.]{\includegraphics[height=2.3cm]{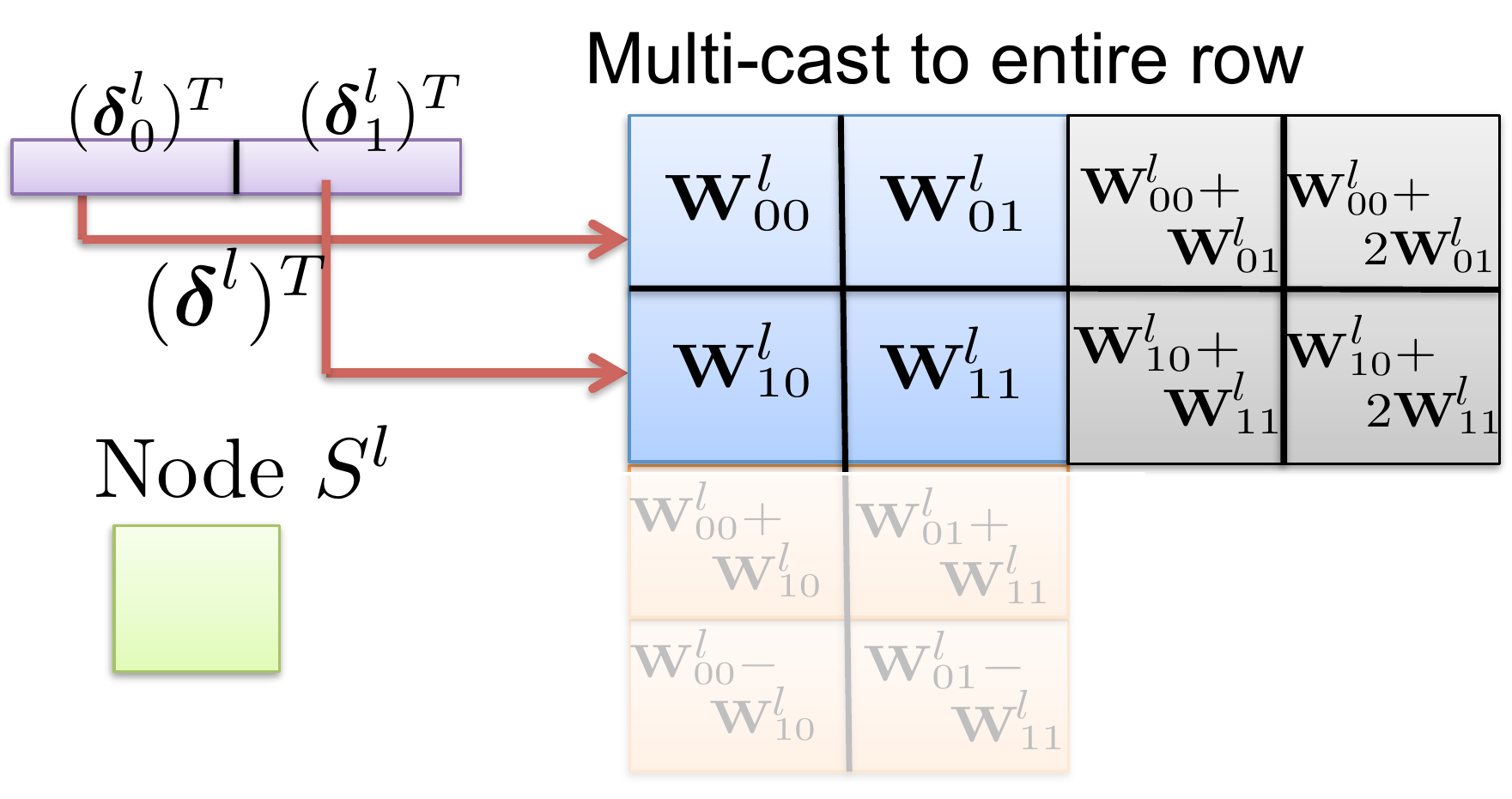}%
\label{fig:DNN_backprop1}}
\hspace{0.5cm}
\subfloat[Step $2$: Each node performs individual computations, \textit{i.e.}, $\bm{\delta}^l_i \bm{W}^l_{i,j} $ or $\bm{\delta}^l_i \widetilde{\bm{W}}^l_{i,j} $.]{\includegraphics[height=2.3cm]{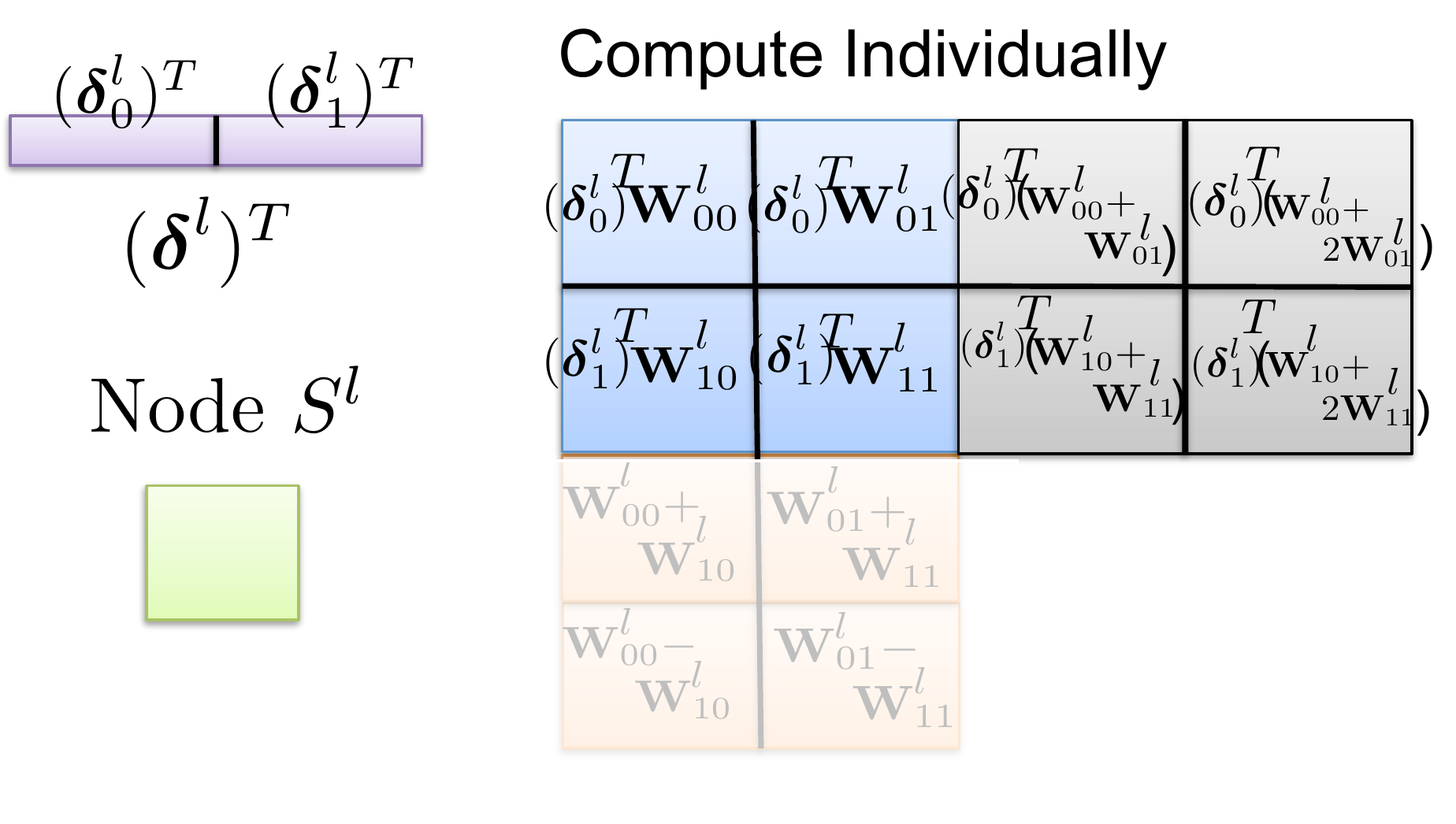}%
\label{fig:DNN_backprop2}} \\
\subfloat[Step $3$: Nodes compute sum of partial results vertically and send to $S^{l-1}$.]{\includegraphics[height=2.6cm, width=4.1cm]{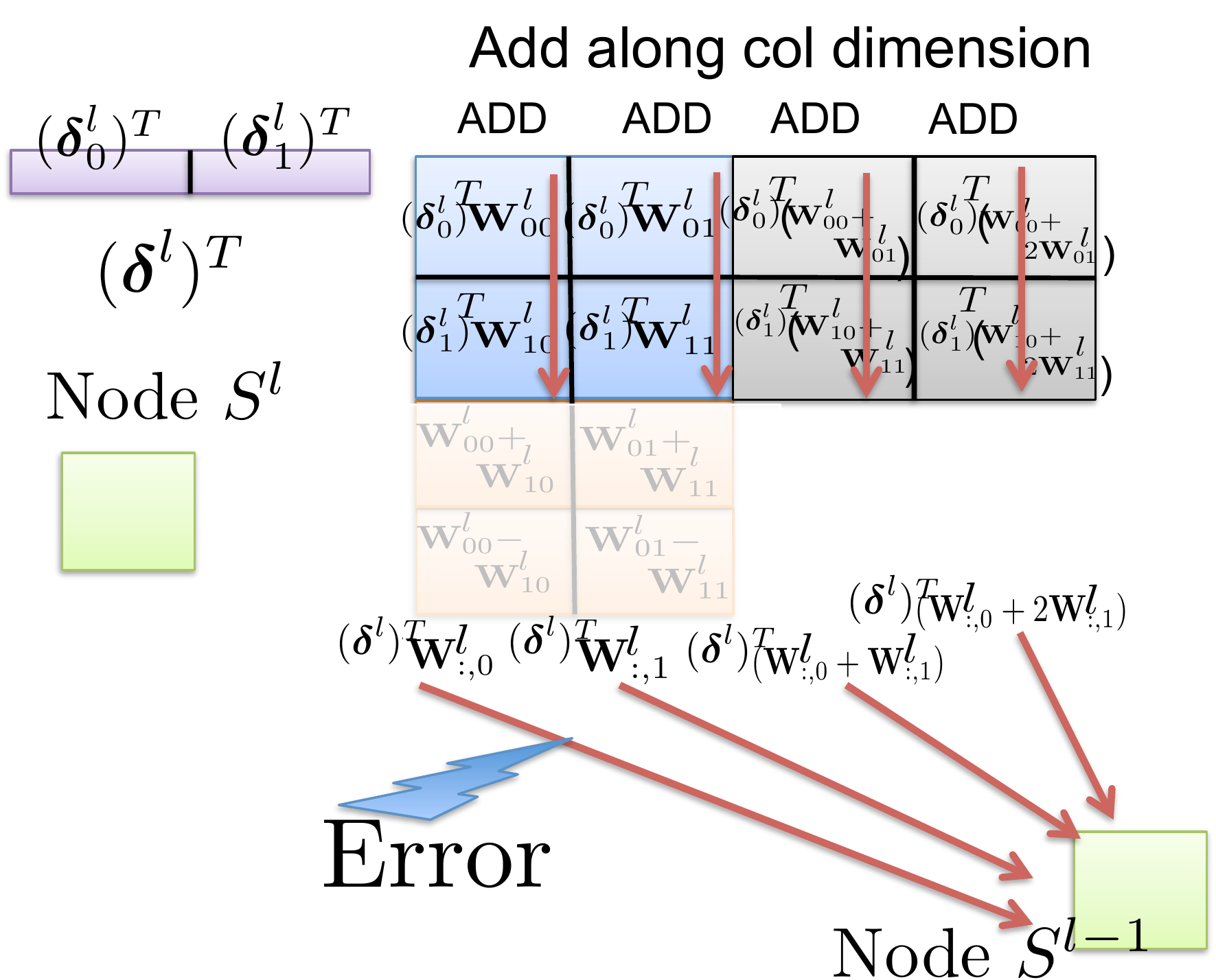}%
\label{fig:DNN_backprop3}}
\hspace{0.5cm}
\subfloat[Step $4$: Additional Encoding Step (for update).]{\includegraphics[height=2.4cm]{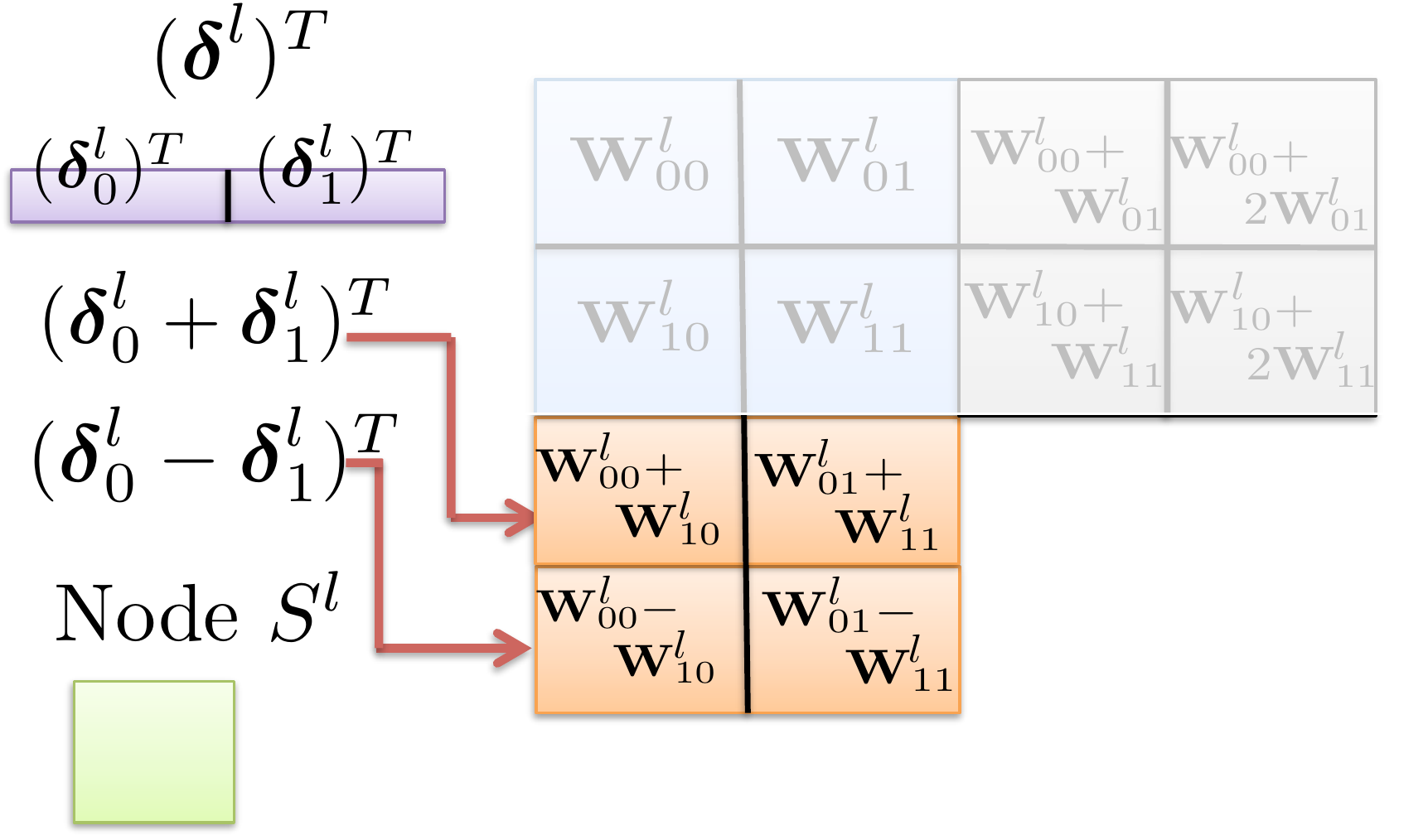}%
\label{fig:DNN_backprop4}}
\caption{Backpropagation Stage in CodeNet strategy}
\label{fig:DNN_backprop}
\end{figure}
\subsubsection{Update stage on a single layer}
After these steps, each node is able to update its own sub-matrix without additional communication, and all overheads associated are negligible compared to the complexity of the update at each node. Recall that the update rule (O3) is given by $\bm{W}^l + \eta \bm{\delta}^l(\bm{x}^l)^T$. Observe that, 
$$
\bm{W}^l + \eta \begin{bmatrix}
\bm{\delta}^l_0 \\ \bm{\delta}^l_1
\end{bmatrix} \begin{bmatrix}
(\bm{x}_0^l)^T (\bm{x}_1^l)^T
\end{bmatrix}=
  \begin{bmatrix}
\bm{W}^l_{0,0} + \eta \bm{\delta}_0^l(\bm{x}_0^l)^T & \bm{W}^l_{0,1} + \eta \bm{\delta}_0^l(\bm{x}_1^l)^T \\
  \bm{W}^l_{1,0} + \eta \bm{\delta}_1^l(\bm{x}_0^l)^T & \bm{W}^l_{1,1}  + \eta \bm{\delta}_1^l(\bm{x}_1^l)^T
\end{bmatrix}. 
$$
Therefore, any sub-matrix of $\bm{W}^l$, e.g.~$\bm{W}^l_{i,j}$, only requires the sub-vectors $\bm{\delta}^l_i$ and $\bm{x}^l_j$ to update itself (crucial observation). 

\begin{clm}
CodeNet ensures that every node has both the appropriate parts of $\bm{\delta}^l$ and $\bm{x}^l$ to update itself, including the coded (or parity) nodes.
\end{clm}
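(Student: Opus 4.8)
The plan is to exploit the bilinearity of the rank-$1$ update together with the two-sided structure of the initial encoding. First I would fix attention on the block stored at grid position $(i,j)$. By the initial-encoding formula of Claim~\ref{clm:general_encoding}, every node stores the corresponding block of $\left(\bm{G}_r^T\otimes\bm{I}_{N_l/m}\right)\bm{W}^l\left(\bm{G}_c\otimes\bm{I}_{N_{l-1}/n}\right)$: a base node stores $\bm{W}^l_{i,j}$, a redundant-row node ($i\ge m,\,j<n$) stores $\sum_a (\bm{G}_r)_{a,i}\bm{W}^l_{a,j}$, and a redundant-column node ($i<m,\,j\ge n$) stores $\sum_b (\bm{G}_c)_{b,j}\bm{W}^l_{i,b}$, these two following from the systematic form of the codes. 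To preserve this coded layout after the true update $\bm{W}^l\leftarrow \bm{W}^l+\eta\,\bm{\delta}^l(\bm{x}^l)^T$, each node must add the corresponding block of the two-sided encoding of the rank-$1$ increment, i.e.\ the $(i,j)$ block of $\left(\bm{G}_r^T\otimes\bm{I}\right)\left[\eta\,\bm{\delta}^l(\bm{x}^l)^T\right]\left(\bm{G}_c\otimes\bm{I}\right)$.

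The crucial algebraic step is the factorization
\begin{align*}
\left(\bm{G}_r^T\otimes\bm{I}\right)\left[\eta\,\bm{\delta}^l(\bm{x}^l)^T\right]\left(\bm{G}_c\otimes\bm{I}\right)
= \eta\left[\left(\bm{G}_r^T\otimes\bm{I}\right)\bm{\delta}^l\right]\left[(\bm{x}^l)^T\left(\bm{G}_c\otimes\bm{I}\right)\right],
\end{align*}
which holds because the outer-product factor associates through the Kronecker encodings applied on the left and right. Reading off the $i$-th block of the left factor gives the coded sub-vector $\widetilde{\bm{\delta}}^l_i=\sum_a (\bm{G}_r)_{a,i}\bm{\delta}^l_a$, and the $j$-th block of the right factor gives $(\widetilde{\bm{x}}^l_j)^T$ with $\widetilde{\bm{x}}^l_j=\sum_b (\bm{G}_c)_{b,j}\bm{x}^l_b$. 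Hence the increment that node $(i,j)$ must apply is exactly the local rank-$1$ update $\eta\,\widetilde{\bm{\delta}}^l_i(\widetilde{\bm{x}}^l_j)^T$, depending only on the row-indexed coded $\delta$ and the column-indexed coded $x$. Because $\bm{G}_r$ and $\bm{G}_c$ are systematic, for base indices $i<m$ and $j<n$ these reduce to the plain sub-vectors $\bm{\delta}^l_i$ and $\bm{x}^l_j$, recovering the crucial observation already noted for base nodes.

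It then remains to match these ingredients to what CodeNet actually delivers. I would observe that the coded sub-vectors above use the very same generator matrices $\bm{G}_r,\bm{G}_c$ as the initial matrix encoding, so they coincide with the sub-vectors produced in the ``additional encoding'' steps: feedforward Step~$1$ multicasts $\bm{x}^l_j$ to every node of base column $j<n$ while Step~$4$ sends $\widetilde{\bm{x}}^l_j$ to every node of redundant column $j\ge n$, so each column $j$ holds exactly its required $x$-ingredient; symmetrically, backpropagation Steps~$1$ and~$4$ deliver $\bm{\delta}^l_i$ or $\widetilde{\bm{\delta}}^l_i$ to every node of row $i$. Thus node $(i,j)$ already possesses both its row $\delta$-ingredient and its column $x$-ingredient and can update locally with no further communication, while the empty corner positions ($i\ge m,\,j\ge n$) carry no node, so no product of two coded sub-vectors is ever required. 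I expect the main obstacle to be the factorization step: one must verify that the two-sided Kronecker encoding of a rank-$1$ matrix splits block-by-block into the outer product of the separately encoded vectors, and in particular that the left code $\bm{G}_r$ pairs with $\bm{\delta}^l$ while the right code $\bm{G}_c$ pairs with $\bm{x}^l$, consistently with the row/column codes used for $\bm{W}^l$. Everything after that is bookkeeping of which node receives which sub-vector.
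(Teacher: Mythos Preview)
Your argument is correct and follows essentially the same approach as the paper: both exploit the bilinearity of the rank-$1$ update so that the two-sided encoding of $\eta\,\bm{\delta}^l(\bm{x}^l)^T$ factors into the outer product of the separately encoded vectors, and then match these coded sub-vectors to what Steps~$1$ and~$4$ of the feedforward and backpropagation stages deliver. The only difference is presentational: the paper establishes the claim by a worked $2\times 2$ example (expanding $\widetilde{\bm{W}}^l_{0,3}+\eta\,\bm{\delta}^l_0(\widetilde{\bm{x}}^l_3)^T$ explicitly) and then lists the three update rules, whereas you give the general Kronecker factorization $\left(\bm{G}_r^T\otimes\bm{I}\right)\bm{\delta}^l(\bm{x}^l)^T\left(\bm{G}_c\otimes\bm{I}\right)=\left[\left(\bm{G}_r^T\otimes\bm{I}\right)\bm{\delta}^l\right]\left[\left(\bm{G}_c\otimes\bm{I}\right)^T\bm{x}^l\right]^T$ directly, which is a cleaner statement of the same identity.
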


As an example, refer to Fig.~\ref{fig:DNN_backprop5}. The node with $\bm{W}^l_{0,0}$, also receives $\bm{x}^l_0$ in feedforward stage and $\bm{\delta}^l_0$ in the backpropagation stage. Thus, it can update itself as $ \bm{W}^l_{0,0} + \eta \bm{\delta}^l_0(\bm{x}_0^l)^T $ without requiring any further communication. Interestingly, our strategy also ensures this for the coded sub-matrices due to the additional encoding steps at the end of feedforward and backpropagation stages. Consider the node at location $(0,3)$ containing $\widetilde{\bm{W}}^l_{0,3}=\bm{W}^l_{0,0}+ 2\bm{W}^l_{0,1}$. This node gets $\widetilde{\bm{x}}^l_3=\bm{x}^l_0 + 2\bm{x}^l_1$ in the additional encoding step in the feedforward stage and $\bm{\delta}^l_0$ in the backpropagation stage. Thus it can update itself as,
\begin{align}
&\widetilde{\bm{W}}^l_{0,3} + \bm{\delta}^l_0 (\widetilde{\bm{x}}^l_3)^T =  \bm{W}^l_{0,0}+ 2\bm{W}^l_{0,1} + \eta \bm{\delta}^l_0 (\bm{x}^l_0 + 2\bm{x}^l_1)^T \nonumber \\
& = \underbrace{\bm{W}^l_{0,0}+ \eta \bm{\delta}^l_0 (\bm{x}_0^l)^T}_{\text{update of } \bm{W}^l_{0,0}} + 2( \underbrace{\bm{W}^l_{0,1} + \eta \bm{\delta}^l_0 (\bm{x}_1^l)^T}_{\text{update of } \bm{W}^l_{0,1}}).
\end{align}

More generally, every node can update itself using one of the three rules (for the example, $m=n=2$ and $t=1$):
\begin{align}
&\bm{W}^l_{i,j} \leftarrow \bm{W}^l_{i,j} + \bm{\delta}^l_i (\bm{x}_j^l)^T \ \forall \  0 \leq i \leq m-1  \text{ and } 0 \leq j \leq n-1 \\
& \widetilde{\bm{W}}^l_{i,j} \leftarrow \widetilde{\bm{W}}^l_{i,j} + \bm{\delta}_i (\widetilde{\bm{x}}^l_j)^T \ \forall \  0 \leq i \leq m-1  \text{ and }   n \leq j \leq n+2t-1 \\
& \widetilde{\bm{W}}^l_{i,j} \leftarrow \widetilde{\bm{W}}^l_{i,j} + \widetilde{\bm{\delta}}^l_i (\bm{x}_j^l)^T \ \forall \  m \leq i \leq m+2t-1  \text{ and }   0 \leq j \leq n-1
\end{align}
{Any errors that occur during the update stage, \textit{i.e.}, step O3 at layer $l$, corrupt the sub-matrices $\bm{W}^l_{i,j}$ (or $\widetilde{\bm{W}}^l_{i,j}$) and are thus propagated into the next iteration of layer $l$\footnote{Note that this error in the update stage does not affect the computations in the layer $l+1$, \textit{i.e.}, the next \textit{layer} (for the same iteration). It only affects the following \textit{iteration}.}. As an example, suppose error occurs at node $(i,j)$ at layer $l$. Then the updated sub-matrix  $\bm{W}^l_{i,j}$ (or $\widetilde{\bm{W}}^l_{i,j}$) is now erroneous. For $0\leq j\leq n-1$, the erroneous sub-matrix is used to compute an output first after step O1 at layer $l$ in the next iteration, while for $n\leq j\leq n+2t-1$, the output is produced first after step O2 at layer $l$ in the next iteration. Thus, the errors can be detected and corrected (if within CodeNet's error tolerance) after either step O1 or step O2 of layer $l$ in the next iteration, whenever the erroneous node produces an output first for that layer. }

\begin{figure}[ht]
\centering
{\includegraphics[height=2.4cm, width=8cm]{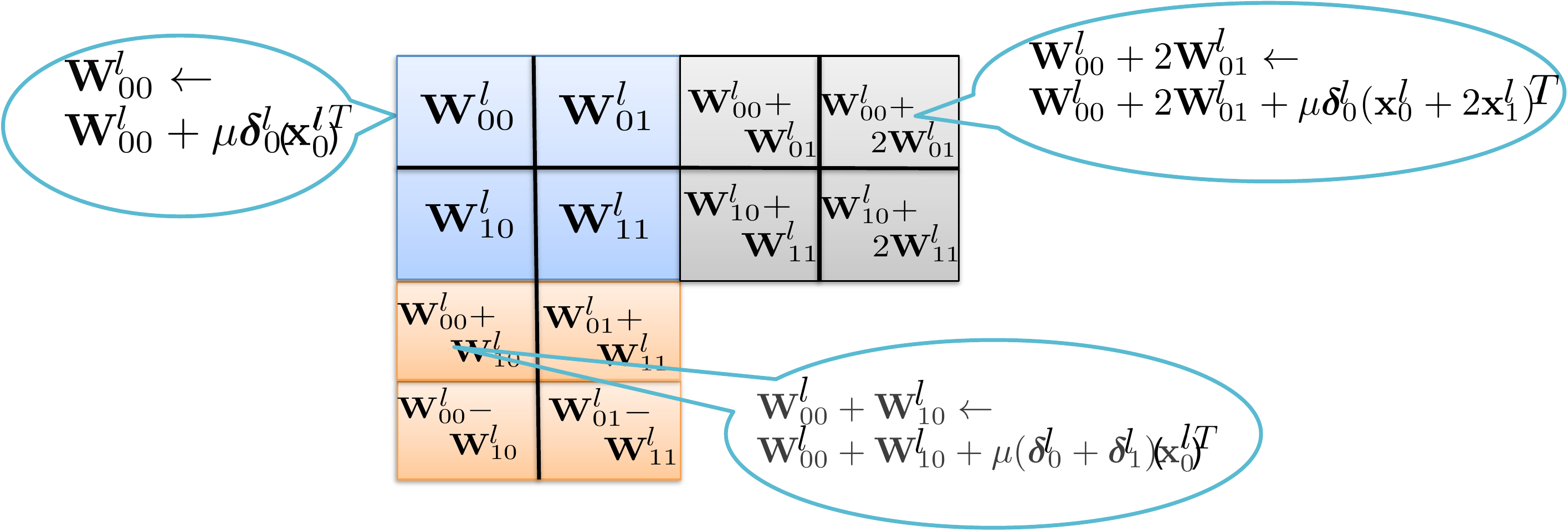}}
\caption{Update Stage: Each node is able to update itself.}
\label{fig:DNN_backprop5}
\end{figure}

\subsection{Decentralized Implementation of CodeNet by Removing Virtual Nodes}
While we described our strategy assuming presence of error-free virtual nodes $S^0, S^1, \ldots, S^L$, in an actual implementation such nodes would become a single point of failure. Thus, in our actual implementation, all the operations of these virtual nodes including encoding/error-detection/decoding/nonlinear activation/diagonal matrix post-multiplication are performed in a decentralized manner (also allowing for errors in these steps under Error Model $2$). To address these errors, intuitively, we replicate the functionality of each of the virtual nodes $S^0, S^1, \ldots, S^L$ at multiple distributed nodes.  

We ensure, using efficient collective communication protocols~\cite{chan2007collective}, that our implementation also has same scaling of additional communication costs as an uncoded or replication strategy under equal storage per node (for fixed $t$; see Theorem~\ref{thm:complexity}). This is desirable because communication is often much more expensive in computational systems~\cite{chan2007collective} than computing. 

Before proceeding further, we introduce two standard ``collective'' communication protocols (see~\cite{chan2007collective} for details) that can be implemented with low communication cost. In a cluster of $P$ nodes, let the $p$-th node initially have only vector $\bm{a}_p$. 

\noindent \textbf{Reduce and All-Reduce:} After Reduce (or All-Reduce), one node (or all the $P$ nodes) gets the sum $\sum_{p=1}^{P}\bm{a}_p$.

\noindent \textbf{Gather and All-Gather:} After Gather (or All-Gather), one node (or all the $P$ nodes) gets all the vectors $\{\bm{a}_p |\ p=1,2,\ldots,P\}$.

The communication protocols will be used to make various operations in the aforementioned algorithm decentralized, that includes aggregating sum of partial results, performing $2t$ consistency checks for error detection, additional encoding etc. Now, we describe our decentralized implementation in detail (see \Cref{appendix:algorithm} for formal description). 

The pre-processing step is performed only once before training as described earlier. Also assume that before every iteration, the updated sub-matrix $\bm{W}^l_{i,j}$ or $\widetilde{\bm{W}}^l_{i,j}$ is available at the respective nodes, because every node will be able to update its own sub-matrix in the update stage.

\subsubsection{Feedforward stage on a single layer} The nodes $(i,j)$ with indices $0\leq i\leq m+2t-1$ and $0 \leq j \leq n-1 $ are the active nodes (not faded in Fig.~\ref{fig:DNN_feedforward1}). As there is no virtual node $S^{l-1}$ to multi-cast portions of $\bm{x}^l$, let us skip Step $1$ for now, and assume that, in every iteration at layer $l$, every active node has $\bm{x}^l_j$ (or an erroneous $\bm{x}^l_j$) at the beginning of computation in that layer. We will justify this assumption shortly. Now, for Step $2$, each active node computes its individual matrix-vector product (which may be erroneous), \textit{i.e.} $\bm{W}^l_{i,j}\bm{x}^l_j$ if $0 \leq i \leq m-1 $or $\widetilde{\bm{W}}^l_{i,j}\bm{x}^l_j$ if $m \leq i \leq m+2t-1$. However, Step $3$ has to be modified as there is no virtual node $S^{l}$. Thus, in the decentralized implementation, \textit{all the active nodes} in each row compute the sum $\bm{s}^l_i= \sum_{j=0}^{n-1} \bm{W}^l_{i,j}\bm{x}^l_j$ or $\widetilde{\bm{s}}^l_i= \sum_{j=0}^{n-1} \widetilde{\bm{W}}^l_{i,j}\bm{x}^l_j$ using an All-Reduce operation, so that \textbf{every active node in row $i$ has $\bm{s}^l_i$ (or $\widetilde{\bm{s}}^l_i$)}.

\paragraph{Error-Detection} Now, instead of communicating the outputs $\bm{s}^l_i$ (or $\widetilde{\bm{s}}^l_i$) to a centralized node $S^{l}$ for consistency checks, each of the active nodes performs those $2t$ consistency checks (or parity checks) \textbf{individually}. However, each of the $2t$ parity check-sums are computed distributedly and shared among the nodes using efficient collective communication protocols, as opposed to sending all the outputs $\bm{s}^l_i$ (or $\widetilde{\bm{s}}^l_i$) to every node. For each column, note that node $(i,j)$ already has $\bm{s}^l_i$ (or $\widetilde{\bm{s}}^l_i$). Thus, each of the $2t$ consistency checks, which essentially consist of a different linear combination of all the $\bm{s}^l_i$'s and $\widetilde{\bm{s}}^l_i$'s, is performed by an All-Reduce communication operation for each column. Thus, we require $2t$ All-Reduce operations for each column in parallel to detect errors.

\paragraph{Errors in Error-Detection under Error Model $2$} To check for errors during this process (only for Error Model $2$), we include an additional verification step where all the active nodes exchange their evaluation of the consistency checks, \textit{i.e.}, a vector of length $2t$ among each other and compare them (additional communication overhead of $\alpha \log{\hat{P}}+2\beta \hat{P}t$). Because the complexity of this verification step is low in scaling sense (does not depend on $N_l,N_{l-1}$ and $\hat{P} \ll N_l,N_{l-1}$), we assume that it is error-free. The probability of errors occurring within such a small duration is negligible as compared to other computations of longer durations. If there is any disagreement among the nodes, the entire DNN (\textit{i.e.}, the weight matrices for every layer) is restored from the last checkpoint. Otherwise, if there is an agreement among all nodes, the algorithm moves forward.

\paragraph{Additional Encoding Step (If no errors)} For the inactive set of nodes of the feedforward stage, the additional encoding step (Step $4$) is performed now, once for each column. The encoding of $\widetilde{\bm{x}}^l_j$ at inactive node $(i,j)$ is performed by linearly combining all the stored $\bm{x}^l_j$s at the active nodes in that row through an All-Reduce operation, in parallel for all rows. 

\paragraph{Generate input for $(l+1)$-th layer (If no errors)}
Each active node in column $j$ needs to generate $\bm{x}^l_j$ for the $(l+1)$-th layer. For this, first each active node in rows $0$ to $m-1$ computes the element-wise nonlinear activation $f(\bm{s}^l_i)=\bm{x}^{(l+1)}_i$ where $\bm{x}^{(l+1)}_i$ is the $i$-th sub-vector obtained when $\bm{x}^{(l+1)}$ is divided into $m$ equal parts (instead of $n$), for $i=0,1,\ldots,m-1$. Each active node in column $j$ then obtains the appropriate elements of $\bm{x}^{(l+1)}_j$ from the nodes that have it in that column through one or more broadcasts. E.g., if $m=n=\sqrt{P}$, only the diagonal nodes can broadcast $\bm{x}^{(l+1)}_j$ to all nodes in that column. Thus, every active node has $\bm{x}^{(l+1)}_j$ at the beginning of the feedforward stage at $(l+1)$-th layer, which also justifies our initial assumption.

\paragraph{Decoding (Only if errors)} When errors are detected, each active column starts an All-Gather in parallel, to get all the $\bm{s}^l_i$'s and $\widetilde{\bm{s}}^l_i$'s from all the rows and attempts to decode $\bm{s}^l$.

\paragraph{Errors in decoding under Error Model $2$} Decoding errors can occur under Error Model $2$. So, we include one more verification step where all nodes exchange their assessment of
node outputs, \textit{i.e.}, a list of nodes that they found erroneous and
compare these lists (additional overhead of $\Theta(\alpha(\log{\hat{P}})+\beta \hat{P}^2)$). If there is a disagreement at one or more nodes during
this process, the decoding is deemed erroneous, and the entire DNN is restored from the last checkpoint. Once again, as the complexity of this step is low, we assume the verification step is error-free.

\paragraph{Regeneration (If no decoding errors)}
If there are no decoding errors and the number of errors is within the tolerance $t$ under Error Models $1$ and $2$, every node is able to detect which nodes were erroneous. Then, all the stored sub-matrices ($\bm{W}^l_{i,j}$) or coded sub-matrices ($\widetilde{\bm{W}}^l_{i,j}$) and the vectors $\bm{x}^l_j$ and $\bm{\delta}^l_i$ at the erroneous nodes are first deleted assuming they are erroneous, and then \textbf{generated} again by accessing some of the nodes that are known to be correct before the algorithm proceeds forward. The process of deleting the stored content and generating it again by accessing other nodes is called regeneration. 

E.g., suppose that during the feedforward stage (after O1), every node found out that one of the outputs, $\bm{s}^l_1$ is erroneous. Then, it is possible that one or both of the nodes located at index $(1,0)$ or $(1,1)$ might be corrupt. Therefore, both $\bm{W}^l_{1,0}$ and $\bm{W}^l_{1,1}$, as well as the stored vectors $\bm{x}^l_0$ or $\bm{x}^l_1$ in these nodes are required to be regenerated. Because the blocks (or sub-matrices) of $\bm{W}^l$ are encoded using $(4,2)$ MDS codes, any $2$ blocks in the same row or column in the grid might be accessed. As an example, $\bm{W}^l_{0,0}$ and $\widetilde{\bm{W}}^l_{2,0}=\bm{W}^l_{0,0}+\bm{W}^l_{1,0}$ in the same column may be accessed to regenerate $\bm{W}^l_{1,0}$. The stored vectors $\bm{x}^l_j$ can be regenerated by accessing any correct node in column $j$. If errors are found during the backpropagation stage (after O2), both $\bm{x}^l_j$ and $\bm{\delta}^l_i$ are regenerated.

Both decoding and regeneration are expensive because sub-vectors and sub-matrices need to be communicated across nodes resulting in communication complexities of $\Theta\big(\frac{N_lN_{l-1}}{P}\big)$, but still the computation proceeds forward to the next iteration as compared to a strategy with no error correction (but that can still detect errors, e.g. replication), where the system would resume from its last checkpoint even with a single error. We also show theoretically in Theorem~\ref{thm:time} (see next section) how error correction with regeneration can provide scaling sense advantages in expected computation time over a comparable replication strategy with no error correction. Note that any error occurring during regeneration of a sub-matrix can be detected and corrected in the next iteration, when these erroneous nodes produce an output for the first time for that layer using the erroneous sub-matrix.
 
After successful \textbf{decoding} and \textbf{regeneration}, the inactive nodes perform the additional encoding step while the active nodes generate $\bm{x}^{(l+1)}_j$ for the next layer (no communication is needed as during decoding they already produced the entire $\bm{s}^l$). Under Error Model $2$, if the number of errors exceeds the error tolerance $t$ (in this case $t=1$), then the nodes can still detect the occurrence of error with probability $1$, even though they cannot locate or correct it. So, the entire DNN is again restored from the last checkpoint.

\subsubsection{Backpropagation and Update Stages on a single layer} For the backpropagation stage at layer $l$, the nodes with indices $(i,j)$ for $0 \leq i \leq m-1$ and $0 \leq j \leq n+2t-1$ are the active nodes. We again skip Step $1$ and assume that every node has $\bm{\delta}_i$ for that layer, similar to the feedforward stage (see \Cref{appendix:algorithm}). The steps $2$ and $3$, \textit{i.e.}, the computation of $(\bm{c}^l_j)^T$ or $(\widetilde{\bm{c}}^l_j)^T $ are also carried out similar to the feedforward stage. When there are no errors, every active node computes $\bm{\delta}^{(l-1)}_j$ by performing the diagonal matrix post-multiplication $(\bm{\delta}^{(l-1)}_j)^T=(\bm{c}^l_j)^T\bm{D}^l_j$ where $\bm{\delta}^{(l-1)}_j$ is a sub-matrix of $\bm{\delta}^{(l-1)}$ when it is divided into $n$ equal parts (instead of $m$) and $\bm{D}^l_j$ is a diagonal matrix whose entries only depend on $\bm{x}^l_j$ which is already available at column $j$. After this, every active node fetches the appropriate parts of $\bm{\delta}^{(l-1)}_i$ from the nodes in that row that have it, through one or more broadcasts. Thus, every active node has $\bm{\delta}^{l-1}_i$ at the beginning of backpropagation in the $(l-1)$-th layer, as assumed. Decoding and regeneration are also performed in a manner similar to the feedforward stage.

Since each node has the vectors $\bm{x}^l_j$ (or $\widetilde{\bm{x}}^l_j$) and $\bm{\delta}^l_i$ (or $\widetilde{\bm{\delta}}^l_i$) for a layer, it can also update itself by computing an outer product. As mentioned before, the errors in the update stage (step O3) will show up as noise in the output of step O1 or O2 at layer $l$, in the next iteration, when the erroneous updated sub-matrix is used next.

\paragraph*{Errors in additional encoding/nonlinear activation/diagonal matrix post-multiplication} Under Error Model $2$, errors can also occur either at the inactive nodes during the additional encoding steps, or at the active nodes during the generation of $\bm{x}^{(l+1)}_j$ (after the nonlinear activation) or generation of $\bm{\delta}^{(l-1)}_i$ (after the diagonal matrix post-multiplication step). If the error is during additional encoding, e.g.~in encoded sub-vector $\widetilde{\bm{x}}^l_j$, then it will corrupt the update of $\widetilde{\bm{W}}^l_{i,j}$. This error will be detected in the next iteration, when $\widetilde{\bm{W}}^l_{i,j}$ is used in computation, because the error will show up as an additive noise in the output. Alternately, if the error is in sub-vector $\bm{x}^{(l+1)}_j$, it propagates into the feedforward sum $\bm{s}^{(l+1)}_i=\sum_{j=0}^{n-1}\bm{W}^{(l+1)}_{i,j}\bm{x}^{(l+1)}_j $ or $\widetilde{\bm{s}}^{(l+1)}_i=\sum_{j=0}^{n-1}\widetilde{\bm{W}}^{(l+1)}_{i,j}\bm{x}^{(l+1)}_j $, and shows up as an additive noise in $\bm{s}^{(l+1)}_i$ or $\widetilde{\bm{s}}^{(l+1)}_i$ respectively in the feedforward stage of layer $(l+1)$. Errors in $\bm{\delta}^{(l-1)}_i$ are also detected in the backpropagation stage of layer $(l-1)$ similarly.

\section{Discussion and Conclusions}
While the decentralized nature of neural network training proposed here is essential for biological plausibility, it is not sufficient. For instance, one would want to code Hebbian learning~\cite{hebbianlearning}, instead of backpropagation, because it is thought to be a closer approximation of how biological neural networks learn. This is left for future work. We therefore believe that this paper is only a step in attaining biologically plausible mechanisms for reliable neural network training. 
Nevertheless, coded computation in biological neural networks is largely ignored in the discussion on efficient coding hypothesis~\cite{barlow1961possible}, even though the hypothesis is largely driven by Shannon-theoretic principles. We believe that understanding how the brain attains reliability using error-prone computation -- the motivation of von Neumann's work~\cite{von1956probabilistic} -- is a key step towards the broader goal of understanding how neural circuits work, but is an aspect that has received little attention from the theory community.

Finally, we note that biologically-plausible algorithms are not simply of interest from a neuroscience and HPC perspective; they are also desirable when chips for artificial neural networks are fabricated, e.g.~in neuromorphic computing. Thus, the results here could be of interest to neuromorphic computing community as well.

\noindent \textbf{Appendix} is provided after References.

%



\bibliography{arxiv}
\bibliographystyle{IEEEtran}

\noindent \textbf{Acknowledgements:} 
%
The authors thank Haewon Jeong, Viveck Cadambe, Mohammad Fahim, Farzin Haddadpour, Yaoqing Yang, Anit Sahu, Gauri Joshi and Ankur Mallick for helpful conversations. This work was supported in part by NSF CNS-1702694, and CCF-1350314.  This work was also supported in part by Systems on Nanoscale Information fabriCs (SONIC), one of the six SRC STARnet Centers, sponsored by MARCO and DARPA.\\


\appendices
\crefalias{section}{app}
\crefalias{subsection}{app}
\section{DNN Background and Notations}
\label{appendix:DNN_background}

We  follow the standard notation used in DNN literature (e.g.~\cite{rumelhart1986learning}), so familiar readers can merely skim this part. We assume that we are training a DNN with layers $l=1,2,\ldots,L$ with backpropagation algorithm using Stochastic Gradient Descent (with batch size $1$). We also use the index $k$ to denote the iteration number of the training. At the $k$-th iteration (for every $k$), the neural network is trained based on a single data point using three stages: a feedforward stage, a backpropagation stage and an update stage. At the $l$-th layer, $N_{l}$ denotes the number the neurons. We assume the following:

\begin{enumerate}

\item $N_1,N_2,\ldots,N_L :$ Number of neurons at layers $1,2,\ldots,L$. We also introduce the notation $N_0$ to denote the dimension of the original data vector, which serves as the input to the first layer.

\item $W_{i,j}^l (k):$ At iteration $k$, the weight of the connection from neuron $j$ on layer $l-1$ to neuron $i$ on layer $l$ for $i=0,1,\ldots,N_l-1$ and $ j=0,1,\ldots,N_{l-1}-1$. Note that the weights  form a matrix $\bm{W}^l(k)$ of dimension $N_l \times N_{l-1}$ for layer $l$.

\item $\bm{x}^l(k) \in \mathcal{R}^{N_{l-1}}  : $ The input of layer $l$ at the $k$-th iteration. Note that, for the first layer, $\bm{x}^1(k)$ is the data used for the $k$-th iteration of training.

\item $\hat{\bm{y}}^l(k) \in \mathcal{R}^{N_{l}}  : $ The output of layer $l$ at the $k$-th iteration.

\item $ \bm{s}^l(k): $ The summed output of the neurons of layer $l$ before a nonlinear function $f(.)$ is applied on it, at the $k$-th iteration. Note that, for $i=0,1,\ldots,N_l-1$, the scalar $s_i^l(k)$ is the $i$-th entry of the vector $\bm{s}^l(k)$, \textit{i.e.} the summed output of neuron $i$ on layer $l$.

\end{enumerate}

\noindent We now make some observations:
\begin{enumerate}
\item Input to any layer is the output of the previous layer (except, of course, for the first layer whose input is the actual data vector itself):

$$\bm{x}^l(k)=
\begin{cases}
    \hat{\bm{y}}^{l-1}(k) \in \mathcal{R}^{N_{l-1}} ,& \text{if } l=2,3,\ldots,L\\
    \bm{x}^1(k),              & \text{otherwise}
\end{cases} $$

\item At each layer, the input to that layer is summed with appropriate weights of that layer (elements $W_{i,j}^l$) to produce the summed output of each neuron, given by:
\begin{align}
s_i^l(k) & =\bm{W}^l_{i,:}(k)\bm{x}^{l}(k) = \sum_{j=0}^{N_{l-1}-1} W_{i,j}^{l}(k) x^{l}_j(k) \\
& =  \sum_{j=0}^{N_{l-1}-1} W_{i,j}^{l}(k) x^{l}_j(k)  = \sum_{j=0}^{N_{l-1}-1} W_{i,j}^{l}(k) \hat{y}^{l-1}_j(k)  
\end{align}

\item The final output of each layer is given by a nonlinear function applied on the summed output of each neuron as below:
\begin{align}
\hat{y}_i^l (k) = f(s_i^l(k)) = f\left(\sum_{j=1}^{N_{l-1}} W_{i,j}^{l}(k) \hat{y}^{l-1}_j(k)   \right)
\end{align}

\item Observe that $\hat{y}^L(k)$ of the last layer denotes the estimated output of the DNN and is to be compared with the true label vector $\bm{y}(k)$ for the corresponding data point $\bm{x}^1(k)$.

\end{enumerate}

\noindent \textbf{Training of a DNN:}
\noindent We now detail the steps of the training algorithm. Note that the ``loss function'' in the $k$-th iteration is defined as:

\begin{equation}
\epsilon^2(k) = \sum_{i=0}^{N_L-1} \epsilon_i^2=\sum_{i=0}^{N_L-1} (\hat{y}_i^l(k)-y_i)^2 
\end{equation}

While we choose a squared error loss function here, the analysis easily generalizes to any loss function, such as regularized squared error loss, or soft-max loss function.
\noindent At the $k$-th iteration, the weights of every layer $l$ of the DNN are updated as follows:
\begin{equation}
W_{i,j}^l(k+1)=W_{i,j}^l(k) - \eta \frac{\partial \epsilon^2(k)}{\partial W_{i,j}^l(k) }
\end{equation}

\noindent Backpropagation algorithm helps us to compute the errors and updates in a recursive form, so that the update of any layer $l$ depends only on the backpropagated error vector of its succeeding layer, \textit{i.e.} layer $l+1$ and not on the layers $\{l+1, l+2, \ldots ,L\}$. Here, we provide the rules of the update stage. Let us define the backpropagated error vector as $\bm{\delta}^l(k)$:
\begin{equation}
\delta_i^l(k)= -\frac{\partial \epsilon^2(k)}{\partial s_i^l(k) } \ \ \forall \ i=0, 1, \ldots, N_l-1
\end{equation}
As a special case, we have that,
\begin{equation}
\delta_i^L(k)= 2 \epsilon_i(k) f'(s_i^L(k))
\end{equation}

\noindent Now, observe that $\delta_i^l(k)$ can be calculated from $\bm{\delta}^{l+1}(k)$ as we derive here in Lemma \ref{lem:DNN}.

\begin{lem}
\label{lem:DNN}
During the training of a neural network using backpropagation algorithm, the backpropagated error vector $\delta_i^{l}(k)$ for any layer $l$ can be expressed as a function of the backpropagated error vector of the previous layer as given by:
\begin{equation}
\delta_i^{l}(k) = \left( \sum_{j=0}^{N_{l+1}-1} \delta_j^{l+1}(k) W_{j,i}^l(k)  \right) f'(s_i^l(k))
\end{equation}
\end{lem}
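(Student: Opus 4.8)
The plan is to derive the recursion by a single application of the multivariate chain rule, using the fact that the pre-activation $s_i^l(k)$ affects the loss $\epsilon^2(k)$ \emph{only} through the pre-activations of the succeeding layer. First I would record the relevant feedforward relations already established in this appendix: neuron $i$ of layer $l$ outputs $\hat{y}_i^l(k) = f(s_i^l(k))$, this output serves as the input $x_i^{l+1}(k)$ to layer $l+1$, and hence each next-layer pre-activation satisfies $s_j^{l+1}(k) = \sum_{i'=0}^{N_l-1} W_{j,i'}^{l+1}(k)\, f(s_{i'}^l(k))$. The structural point this makes explicit is that a single $s_i^l(k)$ feeds into every one of the $N_{l+1}$ downstream pre-activations $\{s_j^{l+1}(k)\}_j$, which is exactly what forces the sum over $j$ in the claimed formula.

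Next I would expand the defining relation $\delta_i^l(k) = -\,\partial \epsilon^2(k)/\partial s_i^l(k)$ by the chain rule through these intermediate variables:
\begin{equation*}
\frac{\partial \epsilon^2(k)}{\partial s_i^l(k)} = \sum_{j=0}^{N_{l+1}-1} \frac{\partial \epsilon^2(k)}{\partial s_j^{l+1}(k)}\, \frac{\partial s_j^{l+1}(k)}{\partial s_i^l(k)}.
\end{equation*}
The first factor is recognized immediately as $-\delta_j^{l+1}(k)$, directly from the definition of the backpropagated error vector of layer $l+1$. For the second factor, differentiating $s_j^{l+1}(k) = \sum_{i'} W_{j,i'}^{l+1}(k)\, f(s_{i'}^l(k))$ with respect to $s_i^l(k)$ retains only the $i'=i$ summand and gives $W_{j,i}^{l+1}(k)\, f'(s_i^l(k))$.

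Finally I would substitute both factors, factor the common $f'(s_i^l(k))$ out of the sum over $j$, and negate, obtaining
\begin{equation*}
\delta_i^l(k) = \left( \sum_{j=0}^{N_{l+1}-1} \delta_j^{l+1}(k)\, W_{j,i}^{l+1}(k) \right) f'(s_i^l(k)),
\end{equation*}
which is the stated recursion. I anticipate no genuine obstacle: the argument is pure chain-rule bookkeeping, and the only point needing care is consistent index tracking under the paper's convention, in which $W_{j,i}^{l+1}$ is precisely the weight from neuron $i$ on layer $l$ to neuron $j$ on layer $l+1$ — the weight that must appear. (I note the superscript on the weight in the displayed lemma reads $l$; the chain-rule derivation indicates it should be $l+1$, so I would flag this as a typographical correction rather than a substantive gap.)
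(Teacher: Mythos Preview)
Your proposal is correct and follows essentially the same approach as the paper: a one-step chain rule through the next-layer pre-activations $s_j^{l+1}(k)$, then substitution of the definition of $\delta_j^{l+1}(k)$ and the derivative $\partial s_j^{l+1}/\partial s_i^l = W_{j,i}^{l+1}(k)\,f'(s_i^l(k))$. Your observation about the weight superscript is also correct---under the paper's own convention the chain rule yields $W_{j,i}^{l+1}(k)$, and the $l$ appearing in the lemma and its proof is a typographical slip.
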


\begin{proof}[Proof of Lemma \ref{lem:DNN}]
\begin{align}
\delta_i^l(k) & = -\frac{\partial \epsilon^2(k)}{\partial s_i^l(k) } = -\sum_{j=0}^{N_{l+1}-1} \frac{\partial \epsilon^2(k)}{\partial s_j^{l+1}(k) } \frac{\partial s_j^{l+1}(k)}{\partial s_i^l(k) } \\
& = -\sum_{j=0}^{N_{l+1}-1} \frac{\partial \epsilon^2(k)}{\partial s_j^{l+1}(k) } W^l_{j,i}(k) f'(s^l_i(k)) = \left( \sum_{j=0}^{N_{l+1}-1} \delta_j^{l+1}(k) W_{j,i}^l(k)  \right) f'(s_i^l(k)).
\end{align}
\end{proof}
Now, using the fact that $s_i^l(k)  = \sum_{j=0}^{N_{l-1}-1} W_{i,j}^{l}(k) x^{l}_j(k)$, we have
\begin{equation}
\frac{\partial \epsilon^2(k)}{\partial W_{i,j}^l(k) } = \frac{\partial \epsilon^2(k)}{\partial s_i^l(k) }\frac{\partial s_i^l(k)}{\partial W_{i,j}^l(k) } = -\delta_i^l(k) x^{l}_j(k).
\end{equation}
Thus, the update rule for backpropagation algorithm is derived as follows:
\begin{equation}
W_{i,j}^l(k+1)=W_{i,j}^l(k) - \eta \frac{\partial \epsilon^2(k)}{\partial W_{i,j}^l(k) }= W_{i,j}^l(k) + \eta \delta_i^l(k) x^{l}_j(k).
\end{equation}

\noindent \textbf{Training of a single layer:}

Let us first look into the operations that are performed in a single layer of DNN in matrix notation:

\begin{itemize}
\item Feedforward stage: In this stage, the data for the $k$-th iteration, \textit{i.e.} $\bm{x}^1(k)$ is obtained and the goal is to pass it through all the layers to compute the estimate of the label $\hat{\bm{y}}^{L}(k)$. Thus, the operation at each layer is given by:

\begin{itemize}
\item Obtain $\bm{x}^l(k)$ from $(l-1)$-th layer
\item Compute $\bm{s}^l(k)= \bm{W}^l(k) \bm{x}^l(k)$
\item Send $\bm{x}^{l+1}(k)=\hat{\bm{y}}(k)=f(\bm{s}^l(k))$ to $(l+1)$-th layer 
\end{itemize}

\item Backpropagation stage: Given the true label vector $\bm{y}$ and the estimate $\hat{\bm{y}}^{L}(k)$, the goal is to find the backpropagated errors for all the layers, and update the weights $W_{i,j}^l(k)$. We also assume that the nonlinear function is such that $f'(u)=g(f(u))$ for some function $g(u)$.
\begin{itemize}
\item Obtain $\bm{\delta}^l(k)$ from $(l+1)$-th layer
\item Compute backpropagated error for $(l-1)$-th layer as: \\ $$\bm{\delta}^{l-1}(k)= \begin{bmatrix}
g(x_0^l(k)) & 0 & 0 \\
\vdots & \ddots & \vdots \\
0 & 0 &
g(x_{N_{l-1}-1}^l(k)) \end{bmatrix}
[\bm{W}^l(k)]^T \bm{\delta}^l(k) = \begin{bmatrix}
g(x_0^l(k)) & 0 & 0 \\
\vdots & \ddots & \vdots \\
0 & 0 &
g(x_{N_{l-1}-1}^l(k)) \end{bmatrix}\bm{c}^l(k)$$
\item Send $\bm{\delta}^{l-1}(k)$ to $(l-1)$-th layer 
\end{itemize}

\item Update stage:
\begin{itemize}
\item Update: 
$\bm{W}^l(k+1)=\bm{W}^l(k) + \eta \bm{\delta}^l(k) [\bm{x}^l(k)]^T$
\end{itemize}
\end{itemize}

\noindent \textbf{Remark:} {Note that, the operations of the DNN training are same across all the layers, with the matrix-vector products and rank-$1$ update being the most complexity-intensive steps (also proved formally in Theorem~\ref{thm:complexity}). Moreover, to circumvent the nonlinear activation step between two consecutive layers, we apply linear coding on each layer separately, making the overall strategy very similar for each layer. Thus, it is sufficient to describe the strategy for a single layer in a single iteration. For the purpose of understanding as it naturally applies to all the layers at all iterations. So we simplify our notations in the main part of the paper as follows: For any layer, we denote the feedforward input $\bm{x}^l(k)$, the weight matrix $\bm{W}^l(k)$ and the backpropagated error $\bm{\delta}^l(k)$ as $\bm{x}^l$, $\bm{W}^l$ and $\bm{\delta}^l$ respectively. }



\section{Error Detection and Correction Mechanism of CodeNet}
\label{appendix:error_detection}

Here, we describe the error detection and correction mechanism of CodeNet and also provide proofs of Theorem~\ref{thm:recovery_threshold}, and Claims~\ref{clm:general_encoding}, \ref{clm:error_tolerance} and \ref{clm:error_detection}. For the general case, to be able to correct $(t_1+t_3)$ errors after step O1 and $(t_2+t_3)$ errors after step O2, we use a set of $P=mn$ ``base'' nodes arranged in an $m \times n$ grid, and an additional $2n(t_1+t_3)+2m(t_2+t_3)$ ``redundant'' nodes for error resilience. We add $2(t_1+t_3))$ rows of $n$ nodes each for error-detection and correction after step O1 and $2(t_2+t_3)$ columns of $m$ nodes each for the same after step O2. For simplicity, we restrict our description to the case with $(t_1+t_3)=(t_2+t_3)=t$, though the same techniques easily generalizes.

For every layer, the weight matrix $\bm{W}^l$ of dimension $N_{l} \times N_{l-1}$ is first divided into $P$ equal sub-matrices across the grid of $m \times n$ base nodes, thus satisfying the storage constraint (each node can only store $\frac{N_{l} N_{l-1} }{P}$ elements) for each layer. Thus,
\begin{align*}  
\bm{W}^l & = \begin{bmatrix}
            \bm{W}^l_{0,0} & \ldots &\bm{W}^l_{0,n-1} \\
            \vdots & \ddots & \vdots \\
            \bm{W}^l_{m-1,0} & \ldots  &\bm{W}^l_{m-1,n-1}
          \end{bmatrix} = \begin{bmatrix}
            \bm{W}^l_{0,:}   \\
            \vdots \\
            \bm{W}^l_{m-1,:} 
          \end{bmatrix} \Bigg\rbrace \text{Row Blocks} \\
          & = \underbrace{\begin{bmatrix}
             \ \bm{W}^l_{:,0} \ & \ldots &  \ \bm{W}^l_{:,n-1} \      \end{bmatrix}.}_{\text{Column Blocks}} 
\end{align*}

\noindent \textbf{Brief Overview of the coding technique:} To be able to correct any $t$ errors under Error Model $1$ or $2$ after steps O1 and O2, the encoding is as follows:
\begin{align}
\left(\bm{G}_r^T  \otimes \bm{I}_{N_{l}/m}  \right) \bm{W}^l \left( \bm{G}_c \otimes \bm{I}_{N_{l-1}/n}\right).
\end{align}
Here $\bm{G}_r$ and $\bm{G}_c$ are the generator matrices of a systematic $(m+2t, m)$ MDS code and a systematic $(n+2t, n)$ MDS code respectively. The blocks of $\bm{W}^l$ are of size $\frac{N_{l-1}}{m} \times \frac{N_l}{n} $, which explains the Kronecker product with $\bm{I}_{N_{l-1}/m} $ and $\bm{I}_{N_l/n}$ respectively. To remind the reader, the Kronecker Product between two matrices $\bm{A}$ and $\bm{B}$ is defined as follows: Suppose $\bm{A}$ is an $m \times n$ matrix and $\bm{B}$ is a $p \times q$ matrix, then $\bm{A} \otimes \bm{B}$ is an $ mp \times nq$ matrix given by $ \begin{bmatrix}
A_{0,0}\bm{B} & \ldots & A_{0,n-1}\bm{B} \\
\vdots & \ddots & \vdots \\
A_{m-1,0}\bm{B} & \ldots & A_{m-1,n-1}\bm{B} 
\end{bmatrix}$.

Thus, after encoding, the matrices on the grid are as follows:
\begin{align*}  
  \begin{bmatrix}
            \bm{W}^l_{0,0} & \ldots &\bm{W}^l_{0,n-1} &  \widetilde{\bm{W}}^l_{0,n} & \ldots & \widetilde{\bm{W}}^l_{0,n+2t-1} \\
            \vdots & \ddots & \vdots & \vdots & \ddots & \vdots \\
            \bm{W}^l_{m-1,0} & \ldots  &\bm{W}^l_{m-1,n-1} & \widetilde{\bm{W}}^l_{m-1,n} & \ldots &\widetilde{\bm{W}}^l_{m-1,n+2t-1} \\
            \widetilde{\bm{W}}^l_{m,0} & \ldots  &\widetilde{\bm{W}}^l_{m,n-1} & \text{x} & \text{x} & \text{x}\\
           \vdots & \ddots  & \vdots & \text{x} & \text{x} & \text{x}\\
           \widetilde{\bm{W}}^l_{m+2t-1,0} & \ldots  &\widetilde{\bm{W}}^l_{m+2t-1,n+2t-1} & \text{x} & \text{x} & \text{x}\\
          \end{bmatrix}    
          \end{align*}
As before, x means that there is no node in that location in the grid and a superscript $\widetilde{\cdot}$ denotes a coded block.

Recall that, this pre-processing step is performed only once at the beginning of training. Subsequently, in each iteration, 
before the feedforward or backpropagation stage at any layer, assume that every node has the appropriate updated sub-matrix or coded sub-matrix of $\bm{W}$ for that iteration. This is because each node is able to update its own sub-matrix without additional communication, and all overheads associated with coding are negligible compared to the complexity of the rank-$1$ update at each node. 

Then, the feedforward stage and backpropagation stage proceed as described in the main section. For ease of understanding, one can stick to the strategy description with virtual nodes for now. 

Recall that the update rule for backpropagation algorithm is given by $\bm{W}^l + \eta \bm{\delta}^l(\bm{x}^l)^T$ (step O3). Observe that:
\begin{align}
&\bm{W}^l + \eta \begin{bmatrix}
\bm{\delta}^l_0 \\ \vdots \\ \bm{\delta}^l_{m-1}
\end{bmatrix} \begin{bmatrix}
(\bm{x}^l_0)^T & \ldots & (\bm{x}^l_{n-1})^T
\end{bmatrix} \\& =
  \begin{bmatrix}
\bm{W}^l_{0,0} + \eta \bm{\delta}^l_0(\bm{x}^l_0)^T  & \bm{W}^l_{0,1} + \eta \bm{\delta}^l_0(\bm{x}^l_1)^T & \ldots & \bm{W}^l_{0,n-1} + \eta \bm{\delta}^l_0(\bm{x}^l_{n-1})^T\\
\vdots & \vdots & \ddots & \vdots\\
 \bm{W}^l_{m-1,0} + \eta \bm{\delta}^l_{m-1}(\bm{x}^l_0)^T & \bm{W}^l_{m-1,1}  + \eta \bm{\delta}_{m-1}^l(\bm{x}^l_{1})^T & \ldots & \bm{W}^l_{m-1,n-1} + \eta \bm{\delta}_{m-1}^l(\bm{x}^l_{n-1})^T  
\end{bmatrix} 
\end{align}
Therefore, any sub-matrix of $\bm{W}^l$, e.g.~$\bm{W}^l_{i,j}$, only requires the sub-vectors $\bm{\delta}^l_i$ and $\bm{x}^l_j$ to update itself. Interestingly, our strategy also ensures this for the coded sub-matrices due to the additional encoding steps at the end of feedforward and backpropagation. Thus, every node can update itself in the update stage using one of the three rules:
\begin{align}
&\bm{W}^l_{i,j} \leftarrow \bm{W}^l_{i,j} + \bm{\delta}^l_i (\bm{x}_j^l)^T \ \forall \  0 \leq i \leq m-1  \text{ and } 0 \leq j \leq n-1 \\
& \widetilde{\bm{W}}^l_{i,j} \leftarrow \widetilde{\bm{W}}^l_{i,j} + \bm{\delta}^l_i (\widetilde{\bm{x}}^l_j)^T \ \forall \  0 \leq i \leq m-1  \text{ and }   n \leq j \leq n+2t-1 \\
& \widetilde{\bm{W}}^l_{i,j} \leftarrow \widetilde{\bm{W}}^l_{i,j} + \widetilde{\bm{\delta}}^l_i (\bm{x}^l_j)^T \ \forall \  m \leq i \leq m+2t-1  \text{ and }   0 \leq j \leq n-1
\end{align}
Note that if any of $\bm{x}^l_j$, $\widetilde{\bm{x}}^l_j$, $\bm{\delta}^l_i $ or $\widetilde{\bm{\delta}}^l_i$ are erroneous, the updated sub-matrix matrix $\bm{W}^l_{i,j}$ or $\widetilde{\bm{W}}^l_{i,j}$ is erroneous. However, in the next feedforward stage in that layer, the sub-matrix $\bm{W}^l_{i,j}$ or $\widetilde{\bm{W}}^l_{i,j}$ would again be used to compute matrix-vector products and hence would produce erroneous outputs that would propagate into $\bm{s}^l_i$ or $\widetilde{\bm{s}}^l_i$ and eventually be detected.

Now we provide proof of Theorem~\ref{thm:recovery_threshold} and then justify our Claims~\ref{clm:general_encoding},~\ref{clm:error_tolerance} and \ref{clm:error_detection} in the paper. For the proofs, we will use the following lemma.

\begin{lem}
\label{lem:error_tolerance}
Consider a matrix $\bm{W}^l_{ N_{l} \times N_{l-1}}$ which is split horizontally into $m$ equal sized row-blocks $\bm{W}^l_{0,:},\bm{W}^l_{1,:},\ldots,\bm{W}^l_{m-1,:}$, and then encoded using the generator matrix $\bm{G}_r$ of a systematic $(m+2t,m)$ MDS code as follows: $$ \begin{bmatrix}
\bm{W}^l_{0,:} \\
\vdots\\
\bm{W}^l_{m-1,:} \\
\widetilde{\bm{W}}^l_{m,:}\\
\vdots \\
\widetilde{\bm{W}}^l_{m+2t-1,:}
\end{bmatrix} =
\left( \bm{G}_r^T \otimes \bm{I}_{N_{l}/m}  \right) \bm{W}^l.$$ Then, the following holds:
\begin{enumerate}
\item The result $\bm{s}^l=\bm{W}^l\bm{x}^l$ can be decoded correctly from the $m+2t$ computations $\bm{W}^l_{i,:}\bm{x}^l$ for $0\leq i \leq m-1$ and $\widetilde{\bm{W}}^l_{i,:}\bm{x}^l$ for $m \leq i \leq m+2t-1$ if \textit{any} $t$ or fewer computations suffer from a soft-error, under either of the Error Models $1$ or $2$.
\item Moreover, under Error Model $2$, if the number of erroneous computations is more than $t$, and the error is assumed to be an additive noise whose individual elements are drawn independently from a real-valued continuous distribution, then the occurrence of erroneous computations can still be detected with probability $1$, even if they cannot be corrected. 
\end{enumerate}
\end{lem}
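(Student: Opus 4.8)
The plan is to reduce both claims to the minimum-distance properties of the underlying systematic $(m+2t,m)$ MDS code and to transport those properties through the Kronecker product. First I would observe that multiplying each (coded) row-block by $\bm{x}^l$ is a linear operation, so stacking the $m+2t$ outputs yields exactly $(\bm{G}_r^T\otimes\bm{I}_{N_l/m})\bm{s}^l$, where $\bm{s}^l=\bm{W}^l\bm{x}^l$ is split into its $m$ sub-vectors $\bm{s}^l_0,\ldots,\bm{s}^l_{m-1}$ of length $N_l/m$. Thus the error-free outputs form a codeword of the ``lifted'' code $\mathcal{C}=\{(\bm{G}_r^T\otimes\bm{I}_{N_l/m})\bm{u}:\bm{u}\in\mathbb{R}^{N_l}\}$, a linear block code over the vector alphabet $\mathbb{R}^{N_l/m}$ whose systematic part is precisely $\bm{s}^l$, and a soft-error at node $i$ simply adds an arbitrary vector to block $i$ of the received word. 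Everything then becomes a statement about this vector-symbol code.

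The technical bridge is to show that $\mathcal{C}$ inherits minimum block-distance $2t+1$ from $\bm{G}_r$. I would argue coordinate-wise: writing $u^{(p)}$ for the $p$-th scalar coordinate across the $m$ message blocks, the $p$-th coordinate of the codeword is the scalar codeword $\bm{G}_r^Tu^{(p)}$, which (if nonzero) has at least $2t+1$ nonzero entries by the scalar MDS property. Since a block of $\mathcal{C}$ is nonzero whenever any of its coordinates is nonzero, the block-support of any nonzero codeword contains the support of $\bm{G}_r^Tu^{(p)}$ for some $p$ with $u^{(p)}\ne 0$, giving block-weight $\ge 2t+1$; a single nonzero coordinate attains it. Equivalently, any $2t$ block-columns of the lifted parity-check matrix $\bm{H}\otimes\bm{I}_{N_l/m}$ have full column rank, because the corresponding $2t\times 2t$ submatrix of $\bm{H}$ is invertible and Kronecker determinants factor.

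For part $1$, unique decodability follows from the standard distance argument: if two codewords of $\mathcal{C}$ were each consistent with the received word after removing error patterns of block-weight $\le t$, their difference would be a codeword of block-weight $\le 2t<2t+1$, hence zero. So the true $\bm{s}^l$ is the unique codeword within block-distance $t$ of the outputs and can be recovered (via the syndrome/consistency checks of the main text) under either error model.

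For part $2$, detection is performed by testing whether the syndrome $(\bm{H}\otimes\bm{I}_{N_l/m})\bm{y}$ vanishes, which happens iff the additive error vector $\bm{e}$ is itself a codeword of $\mathcal{C}$. If the erroneous nodes form a set $T$ with $|T|=k$, then $\bm{e}$ lives in the subspace $V_T$ of vectors supported on $T$, and an undetected error requires $\bm{e}\in\mathcal{C}\cap V_T$. The main obstacle is the regime $k\ge 2t+1$, where this intersection can be nontrivial and detection is no longer deterministic; I would handle it by showing $\mathcal{C}\cap V_T$ is a \emph{proper} subspace of $V_T$ — indeed $V_T\not\subseteq\mathcal{C}$, since a vector nonzero in a single block has block-weight $1<2t+1$ and so is not a codeword — and hence has Lebesgue measure zero inside $V_T$. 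Since under Error Model $2$ the entries of $\bm{e}$ on $T$ are drawn independently from a continuous distribution, $\bm{e}$ is absolutely continuous on $V_T$ and lands in this measure-zero set with probability $0$; therefore the syndrome is nonzero, and the error is detected, with probability $1$.
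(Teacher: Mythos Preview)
Your proposal is correct and follows essentially the same route as the paper: both lift the scalar MDS code through the Kronecker product, establish that the block minimum distance (equivalently, the paper's $\mathrm{Spark}(\bm{H})$) equals $2t+1$, and conclude Part~1 by the standard distance/unique-sparse-recovery argument; for Part~2 both show the undetected-error event forces $\bm{e}$ into a strict linear subspace of the error-support space, which has Lebesgue measure zero and hence probability zero under the continuous noise model. The only cosmetic difference is that the paper computes the exact dimension deficit ($k-2t$ versus $k$) via the rank of $\bm{H}_{\mathcal{A}_k}$, whereas you get away with the one-line observation that a single-block vector lies in $V_T$ but not in $\mathcal{C}$, which already forces $\mathcal{C}\cap V_T\subsetneq V_T$.
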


The proof of this lemma follows from the properties of MDS codes, as we discuss towards the end of this Appendix. First, we provide the proof of Theorem~\ref{thm:recovery_threshold} using this Lemma~\ref{lem:error_tolerance}. Again, we will stick to the simpler case $(t_1+t_3)=(t_2+t_3)=t$, though the technique easily generalizes.

\begin{proof}[Proof of Theorem~\ref{thm:recovery_threshold}]
Observe that in the feedforward stage, CodeNet uses $P=mn$ base nodes arranged in an $m \times n$ grid, with $2t$ rows of $n$ nodes each. The matrix $\bm{W}^l$ is divided into blocks both horizontally and vertically. In the vertical direction, one uses a systematic $(m+2t,m)$ MDS code to encode the horizontally-split blocks, and performs the $m+2t$ computations $\bm{W}^l_{i,:}\bm{x}^l$ for $0\leq i \leq m-1$ and $\widetilde{\bm{W}}^l_{i,:}\bm{x}^l$ for $m \leq i \leq m+2t-1$. Thus, in the feedforward stage at any layer, after step O1, CodeNet is able to correct \textit{any} $t$ of fewer errors (using Lemma~\ref{lem:error_tolerance}). Note that, CodeNet might be able to correct more than $t$ errors for some error patterns, e.g., when more than one node in a row is erroneous but still the total number of erroneous rows is at most $t$. However, in the worst case, e.g. error patterns where $t$ errors occur in different rows in the grid, it can only correct $t$ errors. 

Similarly, in the backpropagation stage, the main computation (step O2) is $(\bm{c}^l)^T=(\bm{\delta}^l)^T\bm{W}^l$, which is also a matrix-vector product. CodeNet reuses the $P=mn$ base nodes arranged in an $m \times n$ grid, with $2t$ additional columns of $m$ nodes each for error correction. The matrix $\bm{W}^l$ is divided into blocks both vertically and horizontally. One uses a systematic $(n+2t,n)$ MDS code to encode the vertically-split blocks, and performs $n+2t$ computations $(\bm{\delta}^l)^T\bm{W}^l_{:,j}$ for $0\leq j \leq n-1$ and $(\bm{\delta}^l)^T\widetilde{\bm{W}}^l_{:,j}$ for $n \leq j \leq n+2t-1$. Thus, again using Lemma~\ref{lem:error_tolerance}, CodeNet can correct any $t$ (or fewer) errors in the worst case, and the worst case pattern occurs when all the errors occur in different columns.

Errors that occur during the update stage corrupt the stored $\bm{W}^l_{i,j}$ or $\widetilde{\bm{W}}^l_{i,j}$
Thus, using a total of $mn+2tn+2tm = P+2(m+n)t$ nodes, CodeNet can correct \textit{any} $t$ errors in the worst case at any layer during feedforward or backpropagation stages respectively.

\end{proof}

\noindent \textbf{Remark:} Substituting $m=n=\sqrt{P}$, we get the number of required nodes to be $P+4\sqrt{P}t$.

\begin{proof}[Justification of Claims~\ref{clm:general_encoding}, \ref{clm:error_tolerance} and \ref{clm:error_detection}]
Claim~\ref{clm:general_encoding} follows directly from Theorem~\ref{thm:recovery_threshold}. 
It might also be noted that Claim~\ref{clm:error_tolerance} is justified from the first condition of Lemma~\ref{lem:error_tolerance}, and Claim~\ref{clm:error_detection} from the second condition of Lemma~\ref{lem:error_tolerance} respectively.
\end{proof}

Now what remains to be proved is Lemma~\ref{lem:error_tolerance}.

\begin{proof}[Proof of Lemma~\ref{lem:error_tolerance}] Let $\bm{s}^l=\begin{bmatrix}
\bm{s}^l_0\\
\vdots \\
\bm{s}^l_{m-1}
\end{bmatrix}. $ Observe that,
$$
\begin{bmatrix}
\bm{W}^l_{0,:} \\
\vdots\\
\bm{W}^l_{m-1,:} \\
\widetilde{\bm{W}}^l_{m,:}\\
\vdots \\
\widetilde{\bm{W}}^l_{m+2t-1,:}
\end{bmatrix} \bm{x}^l =  \left( \bm{G}_r^T \otimes \bm{I}_{N_{l}/m}  \right) \bm{W}^l\bm{x}^l= \left( \bm{G}_r^T \otimes \bm{I}_{N_{l}/m}  \right) \bm{s}^l = \begin{bmatrix}
\bm{s}^l_0\\
\vdots \\
\bm{s}^l_{m-1} \\
\widetilde{\bm{s}}^l_{m}\\
\vdots \\
\widetilde{\bm{s}}^l_{m+2t-1}
\end{bmatrix}.
$$
Thus, the $m+2t$ computations actually form a codeword obtained by encoding $\bm{s}^l_0,\bm{s}^l_1,\ldots,\bm{s}^l_{m-1}$ using the systematic $(m+2t,m)$ MDS code with generator matrix $\bm{G}_r$. Now suppose that \textit{any} $t$ of the computations $\bm{s}^l_0,\ldots,\bm{s}^l_{m-1} ,\widetilde{\bm{s}}^l_{m},\ldots,\widetilde{\bm{s}}^l_{m+2t-1}$ are erroneous. Then, using the properties of MDS codes\cite{ryan2009channel}, any $t$ erroneous computations can be corrected in the codeword of an $(m+2t,m)$ MDS code. This proves the first part of the lemma. However, for completeness, we also provide a rigorous proof using linear algebra arguments.

Let $\bm{out}$ denote the vector consisting of $m+2t$ sub-vectors, each of length $\frac{N_{l}}{m}$, which is the entire codeword corrupted by an additive vector $\bm{e}$. Thus,
$$\bm{out} = \begin{bmatrix}
\bm{out}_0\\
\vdots \\
\bm{out}_{m+2t-1}
\end{bmatrix}= \begin{bmatrix}
\bm{s}^l_0\\
\vdots \\
\bm{s}^l_{m-1} \\
\widetilde{\bm{s}}^l_{m}\\
\vdots \\
\widetilde{\bm{s}}^l_{m+2t-1}
\end{bmatrix} + \bm{e} \ = \begin{bmatrix}
\bm{s}^l_0\\
\vdots \\
\bm{s}^l_{m-1} \\
\widetilde{\bm{s}}^l_{m}\\
\vdots \\
\widetilde{\bm{s}}^l_{m+2t-1}
\end{bmatrix} + \begin{bmatrix}
\bm{e}_0\\
\vdots \\
\bm{e}_{m-1}\\
\bm{e}_{m}\\
\vdots\\
\bm{e}_{m+2t-1}
\end{bmatrix} = \left( \bm{G}_r^T \otimes \bm{I}_{N_{l}/m}  \right) \begin{bmatrix}
\bm{s}^l_0\\
\vdots \\
\bm{s}^l_{m-1} 
\end{bmatrix} + 
\begin{bmatrix}
\bm{e}_0\\
\vdots \\
\bm{e}_{m-1}\\
\bm{e}_{m}\\
\vdots\\
\bm{e}_{m+2t-1}
\end{bmatrix}.$$
Note that $\bm{e}$ is a vector, consisting of $m+2t$ sub-vectors, each of length $\frac{N_l}{m}$. When any of the $m+2t$ sub-vectors, e.g., $\bm{out}_i$, is erroneous, then the sub-vector $\bm{e}_i$ is non-zero (in accordance with our Error Models). Otherwise, if there are no errors in $\bm{out}_i$, then $\bm{e}_i =\bm{0}$. The non-zero sub-vectors of $\bm{e}$ thus correspond to the locations of errors among the $m+2t$ computations. In what follows, we utilize ideas from~\cite{candes2005decoding,donoho2006compressed} to show that any vector $\bm{e}$ with number of non-zero sub-vectors (\textit{i.e.},  $\bm{e}_0,\bm{e}_1,\ldots,\bm{e}_{m+2t-1}$) less that or equal to $t$ can be reconstructed uniquely, and thus given $\bm{out}$ and $\bm{G}_r$, the vector $\bm{s}^l$ can also be reconstructed uniquely because $\bm{G}_r$ is full-rank.

Observe that, because the matrix $\bm{G}_r^T$ of size $(m+2t)\times m$ is full-rank, there exists a full-rank annihilating matrix $\bm{H}$ of dimension $2t\times (m+2t) $ such that $\bm{H} \bm{G}_r^T =\bm{0}$. This leads to: $$(\bm{H} \otimes  \bm{I}_{N_{l}/m}) \bm{e}= (\bm{H} \otimes  \bm{I}_{N_{l}/m}) \ \bm{out} \triangleq \widehat{\bm{out}}.$$ 

The solution $\bm{e}$, for this linear system of equations, is unique if the number of non-zero sub-vectors in $\bm{e}$ (\textit{i.e.}, the non-zero sub-vectors out of $\bm{e}_0,\bm{e}_1,\ldots,\bm{e}_{m+2t-1}$) is less than $\frac{Spark(\bm{H})}{2} $. Here, $Spark(\bm{H})$ is the minimum number of columns of $\bm{H}$ that are linearly dependent (see~\cite{candes2005decoding,donoho2006compressed}), or equivalently the minimum number of non-zero elements in any vector in $Null-space(\bm{H})\setminus \{\bm{0}\}$. Also note that for the Kronecker product $\left(\bm{H} \otimes \bm{I}_{N_l/m}\right)$, the $Spark(\bm{H})$ becomes equal to the minimum number of non-zero sub-vectors of length $N_l/m$ out of the total $m+2t$ sub-vectors in any vector in $Null-space\left(\bm{H} \otimes \bm{I}_{N_l/m}\right)\setminus \{\bm{0} \}$.

Now we show that $Spark(\bm{H})=2t+1$. We dismiss the case of $Spark(\bm{H})> 2t+1$ because $Spark(\bm{H})\leq Rank(\bm{H})+1$, and $Rank(\bm{H})=2t$. Now, for proof by contradiction, we assume that $Spark(\bm{H}) < 2t+1$. Then, there must be a vector $\bm{\beta}$ such that $\bm{\beta} \in \ Null-space(\bm{H})$, and has less than $2t+1$ non-zero elements. Note that, the dimension of the null-space of $\bm{H}$ is $m$ because the $m$ linearly independent columns of the matrix $\bm{G}_r$ lie in the null-space of $\bm{H}$ and also form a basis. Thus
$$
\bm{G}_r \bm{\alpha} = \bm{\beta} \text{ for some vector } \bm{\alpha}.
$$
The vector $\bm{\beta}$ has strictly more than $(m+2t-(2t+1))$ zeros, \textit{i.e.}, $m-1$ zeros. Or, $\bm{\beta}$ has at least $m$ zeros. However, this is not possible as the sub-matrix formed by picking any $m$ rows of $\bm{G}_r^T$ is non-singular. This contradicts with our assumption.

Thus, $Spark(\bm{H}) = (2t + 1)$. Thus, if $$\text{No. of non-zero  sub-vectors in } \bm{e} \ \leq \ t \ < \  \frac{2t+1}{2} =\frac{Spark(\bm{H})}{2}, $$ then there exists a unique $\bm{e}$ that satisfies $\widehat{\bm{out}} = (\bm{H} \otimes  \bm{I}_{N_{l}/m}) \bm{e}$. Thus, the first part of the lemma is proved. \\

Now, we proceed to the next part. We need to show that if each element in the \textit{non-zero} sub-vectors of $\bm{e}$ is drawn independently from a real-valued continuous distribution, then the occurrence of erroneous $\bm{out}_i$'s can still be detected with probability $1$, even if these errors can not be corrected.

Observe that if there are no erroneous sub-vectors in $\bm{out}$ and $\bm{e}=\bm{0}$, then $\widehat{\bm{out}}= \left(\bm{H} \otimes \bm{I}_{N_l/m}\right) \bm{out} = \left(\bm{H} \otimes \bm{I}_{N_l/m}\right)  \bm{e} =\left(\bm{H} \otimes \bm{I}_{N_l/m}\right) \bm{0} =  \bm{0}$. Thus, if there are actually erroneous sub-vectors in $\bm{out}$, then we will obtain $\widehat{\bm{out}}= \left(\bm{H} \otimes \bm{I}_{N_l/m}\right) \bm{out}=\left(\bm{H} \otimes \bm{I}_{N_l/m}\right) \bm{e}\neq\bm{0}$, unless $\bm{e}$ lies in $Null-space(\left(\bm{H} \otimes \bm{I}_{N_l/m}\right)) \setminus \{ \bm{0} \}$. Utilizing the assumption (see our Error Model $2$) that each element in a non-zero block of $\bm{e}$ is drawn independently from a real-valued continuous distribution, we now show that the event that $\bm{e} \ \in \ Null-space\left(\bm{H} \otimes \bm{I}_{N_l/m}\right) \setminus \{ \bm{0} \}$, given that $\bm{e}\neq \bm{0}$ occurs with probability $0$. 

Suppose that  $\bm{e} \ \in \ Null-space\left(\bm{H} \otimes \bm{I}_{N_l/m}\right) \setminus \{ \bm{0} \}$ have $k$ non-zero blocks, for any $k>0$. 


Let $\mathcal{A}_k \subset \{0,1,\ldots,m+2t-1\}$ denote the set of all the $k$ indices of the non-zero blocks (or sub-vectors) of $\bm{e}$. Clearly, $k \ \geq \ 2t + 1$ since $Spark(\bm{H})=2t+1$ and thus no vector can lie in the $Null-space\left(\bm{H} \otimes \bm{I}_{N_l/m}\right) \setminus \bm{0}$ which has less than $2t+1$ non-zero blocks. Now,
$$\left( \bm{H}_{\mathcal{A}_k}\otimes \bm{I}_{N_l/m}\right)\bm{e}_{\mathcal{A}_k} = \bm{0} $$
where $\bm{H}_{\mathcal{A}_k}$ is a sub-matrix consisting of the columns of $\bm{H}$ that are indexed in  $\mathcal{A}_k$ and $\bm{e}_{\mathcal{A}_k}$ denotes the column vector consisting of only the non-zero blocks of $\bm{e}$, \textit{i.e.}, the blocks whose locations are indexed in $\mathcal{A}_k$. 

We will first show that the dimension of the null-space of $\bm{H}_{\mathcal{A}_k}$ is $k-2t$. Recall that the $Spark(\bm{H})=2t+1$ which implies that \textit{any} $2t$ columns of $\bm{H}$ are linearly independent. Thus, any $2t$ columns of $\bm{H}_{\mathcal{A}_k}$ are also linearly independent. The rank of $\bm{H}_{\mathcal{A}_k}$ is thus $2t$ and the dimension of $\bm{H}_{\mathcal{A}_k}$ is $2t \times k$. Using Rank-Nullity theorem, the dimension of the null-space of $\bm{H}_{\mathcal{A}_k}$ is thus $k-2t$.

Now observe that $\bm{e}_{\mathcal{A}_k}$ has $k(>k-2t)$ non-zero blocks. Because every element of $\bm{e}_{\mathcal{A}_k}$ are iid real-valued continuous random variables, the probability that $\bm{e}_{\mathcal{A}_k}$ with $k N_l/m $ independent elements lies in $Null-space\left(\bm{H}_{\mathcal{A}_t} \otimes \bm{I}_{N_l/m}\right)$ of \textit{lower} dimension $(k-2t)N_l/m$, is $0$. In other words, $Null-space\left(\bm{H}_{\mathcal{A}_t} \otimes \bm{I}_{N_l/m}\right)$ of \textit{lower} dimension $(k-2t)N_l/m$ becomes a measure $0$ subset in a the space of $k N_l/m $ independent dimensions. Thus, the event that $\bm{e} \ \in \ Null-space\left(\bm{H}\otimes \bm{I}_{N_l/m}\right) \setminus \{ \bm{0} \} $, given that $\bm{e}\neq \bm{0}$, occurs with probability $0$. 
\end{proof}

\noindent \textbf{Decoding Algorithm:} Using Lemma~\ref{lem:error_tolerance}, we arrive at the following decoding technique: first compute $\left(\bm{H}\otimes \bm{I}_{N_l/m}\right) \ \bm{out} = \widehat{\bm{out}}$. If $\widehat{\bm{out}}=\bm{0}$, one can declare that there are no errors with probability $1$. Otherwise, one can use standard sparse reconstruction algorithms (e.g.~\cite{candes2005decoding,donoho2006compressed}) to find a solution for the under-determined  system of linear equations $\left(\bm{H}\otimes \bm{I}_{N_l/m}\right)\bm{e}= \widehat{\bm{out}}$ with number of non-zero blocks of $\bm{e}$ (\textit{i.e.}, sub-vectors $\bm{e}_0,\bm{e}_1,\ldots,\bm{e}_{m+2t-1}$) less than or equal to $t$. If a solution is found with number of non-zero sub-vectors at most $t$, the errors are corrected. Otherwise, the node declares that the errors cannot be uniquely determined, and thus reverts to the last checkpoint.\\

\noindent \textbf{Remark:} One might also consider the following possibility: if the true error vector $\bm{e}= \bm{e}^{(1)} + \bm{h}$ where $\bm{e}^{(1)}$ is any vector with at most $t$ non-zero blocks or sub-vectors and $\bm{h} \in \ Null-space\left(\bm{H}\otimes \bm{I}_{N_l/m}\right)\setminus \{ \bm{0}\}$. In this case, $\left(\bm{H}\otimes \bm{I}_{N_l/m}\right)\bm{e}=\left(\bm{H}\otimes \bm{I}_{N_l/m}\right)\bm{e}^{(1)} = \widehat{\bm{out}} $, and the sparse reconstruction algorithm would give $\bm{e}^{(1)}$ as the solution even though the correct solution is $\bm{e}$. 
However, we now show that the probability of the occurrence of an error $\bm{e}$ of the form $\bm{e}^{(1)} + \bm{h}$ is also $0$ under Error Model $2$. 


Let $\bm{h}$ have $k$ non-zero blocks, indexed by $\mathcal{B}_{k} \subset \{0,1,\ldots,m+2t-1\}$. Because the number of non-zero blocks in $\bm{e}^{(1)}$ is at most $t$, and $\bm{e}=\bm{e}^{(1)} + \bm{h}$, there are at least $ k - t$ non-zero blocks in $\bm{e}$ that match exactly with those in $\bm{h}$. 
We also let $\mathcal{B}_{k}^{(a)}$ and $\mathcal{B}^{(b)}_{k}$ denote two disjoint sets of indices of the non-zero blocks of $\bm{h}$ that exactly match, and do not match, with $\bm{e}$ respectively. Note that $\mathcal{B}_{k}^{(a)} \cup \mathcal{B}_{k}^{(b)} = \mathcal{B}_{k}$ and $|\mathcal{B}_{k}^{(a)}| + |\mathcal{B}_{k}^{(b)}| = |\mathcal{B}_{k}|$. 

First observe that  $|\mathcal{B}_{k}^{(a)}| \geq k - t \geq  2t +1 - t = t+1$ where the second inequality follows since $k \geq Spark(\bm{H})=2t+1$. We also have, $|\mathcal{B}_{k}^{(b)}| \leq t$ since $\bm{e}^{(1)}$ has at most $t$ non-zero blocks. The dimension of the $Range-space(\bm{H}_{\mathcal{B}_{k}^{(b)}} )$ can thus be at most: $$|\mathcal{B}_{k}^{(b)}| \ \leq \ t \ < \ t+1 \ \leq \min \{|\mathcal{B}_{k}^{(a)}|, 2t \}. $$

Now from definition of these sets,
$$\left(\bm{H}_{\mathcal{B}_{k}}\otimes \bm{I}_{N_l/m}\right) \bm{h}_{\mathcal{B}_{k}} = \left( \bm{H}_{\mathcal{B}_{k}^{(a)}} \otimes \bm{I}_{N_l/m}\right) \bm{h}_{\mathcal{B}_{k}^{(a)}} + \left( \bm{H}_{\mathcal{B}_{k}^{(b)}} \otimes \bm{I}_{N_l/m}\right) \bm{h}_{\mathcal{B}_{k}^{(b)}} = \bm{0}$$
 implying  $$\left( \bm{H}_{\mathcal{B}_{k}^{(a)}} \otimes \bm{I}_{N_l/m}\right) \bm{h}_{\mathcal{B}_{k}^{(a)}} = -\left( \bm{H}_{\mathcal{B}_{k}^{(b)}} \otimes \bm{I}_{N_l/m}\right) \bm{h}_{\mathcal{B}_{k}^{(b)}}.  $$
Observe that the $|\mathcal{B}_{k}^{(a)}|$ sub-vectors (or blocks) of $\bm{h}_{\mathcal{B}_{k}^{(a)}}$ exactly match with $\bm{e}_{\mathcal{B}_{k}^{(a)}}$ and are thus drawn independently from a continuous multivariate distribution of $\frac{N_l}{m}$ i.i.d. random variables. Thus, the vector $\left(\bm{H}_{\mathcal{B}_{k}^{(a)}}\otimes \bm{I}_{N_l/m}\right) \bm{h}_{\mathcal{B}_{k}^{(a)}}$ of length $2t(N_l/m)$ also has a real-valued, continuous distribution where $Rank\left(\bm{H}_{\mathcal{B}_{k}^{(a)}}\otimes \bm{I}_{N_l/m}\right) = \min\{ |\mathcal{B}_{k}^{(a)}|(N_l/m), 2t(N_l/m) \}$. Now, the $Range-space\left(\bm{H}_{\mathcal{B}_{k}^{(b)}} \otimes \bm{I}_{N_l/m} \right)$ becomes a measure $0$ subset in a subspace of dimension: $$\min\{ |\mathcal{B}_{k}^{(a)}|(N_l/m), 2t(N_l/m) \},$$ as $|\mathcal{B}_{k}^{(b)}|(N_l/m) < \min\{ |\mathcal{B}_{k}^{(a)}|(N_l/m), 2t(N_l/m) \}$. Thus, the probability that $\left(\bm{H}_{\mathcal{B}_{k}^{(a)}}\otimes \bm{I}_{N_l/m}\right) \bm{h}_{\mathcal{B}_{k}^{(a)}}$ lies in a lower dimensional space, \textit{i.e.}, in $Range-space\left(\bm{H}_{\mathcal{B}_{k}^{(b)}} \otimes \bm{I}_{N_l/m} \right)$, is $0$.

\section{Decentralized Algorithm for CodeNet}
\label{appendix:algorithm}
Here, we formally describe the decentralized algorithm in Algorithm~\ref{algo}, to complement the description in the section on Decentralized Implementation in the main paper. Note that, for data and label access, we assume that there is a source or shared memory from where all nodes can access the data (for the first layer, during feedforward stage) and its label (for the last layer, while transitioning from feedforward stage to backpropagation stage).
\begin{algorithm}
\caption{Decentralized CodeNet Algorithm for General $t$}
\label{algo}
\begin{algorithmic}[1]
\State Pre-processing Step: Encode and store appropriate sub-matrices of $\bm{W}^l$ initially as described
\State In each iteration: \textbf{If} iteration number $\% I_0 = 0$, \textbf{Checkpoint at Disk}.
\State \textbf{FEEDFORWARD STAGE} (Active Nodes: $0\leq i \leq m+2t-1$, $0 \leq j \leq n-1$)
\State \textbf{For} layers $l=1,2,\ldots,L$ serially:
\State \hspace{0.1cm} Compute  $\bm{W}^l_{i,j}\bm{x}^l_j$ or $\widetilde{\bm{W}}^l_{i,j}\bm{x}^l_j$ at all \textit{active} nodes in parallel
\State \hspace{0.1cm} [All-Reduce] Sum $\bm{s}^l_i=\sum_{j=0}^{n-1}\bm{W}^l_{i,j}\bm{x}^l_j$ or $\widetilde{\bm{s}}^l_i=\sum_{j=0}^{n-1}\widetilde{\bm{W}}^l_{i,j}\bm{x}^l_j$  at each \textit{active} row in parallel 
\State \hspace{0.1cm} [All-Reduce $2t$ times] Perform $2t$ consistency checks (error-detection) using all $\bm{s}^l_i$ or $\widetilde{\bm{s}}^l_i$'s at each \textit{active} column in parallel
\State \hspace{0.1cm} Verification step to check for disagreement among \textit{active} nodes (implies errors during error-detection step):
\State \hspace{0.5cm} \textbf{If} disagreement, \textbf{Return} to last checkpoint
\State \hspace{0.5cm} \textbf{Else If} no errors detected: 
\State \hspace{0.8cm}\textbf{Additional Encoding} step at all \textit{inactive} nodes:  
\State \hspace{1.0cm} [Reduce $2t$ times] Encode $\widetilde{\bm{x}}^l_j$ for all \textit{inactive} columns, at rows $0$ to $m-1$ in parallel
\State \hspace{0.8cm}
\textbf{Generate} $\bm{x}^{(l+1)}_j$ at all active nodes:\\
\State \hspace{1.0cm} Compute $\bm{x}^{(l+1)}_i=f(\bm{s}^l_i)$ at rows $0$ to $m-1$ in parallel
\State \hspace{1.0cm} [Broadcast] Fetch appropriate parts of $\bm{x}^{l+1}_j$ from the nodes that have it, at each \textit{active} column in parallel
\State \hspace{0.5cm} \textbf{Else} \textbf{Decoding} at all active nodes attempting to correct detected errors:
\State \hspace{1.0cm} [All-gather] Get all $\bm{s}^l_i$ or $\widetilde{\bm{s}}^l_i$'s at each column in parallel and attempt to decode $\bm{s}^l$ at every \textit{active} node
\State \hspace{1.0cm} \textbf{If} more than $t$ errors, \textbf{Return} to last checkpoint
\State \hspace{1.0cm} \textbf{Else} Verification step to check for decoding errors:
\State \hspace{1.6cm} \textbf{If} disagreement, \textbf{Return} to last checkpoint
\State \hspace{1.6cm} \textbf{Else} \textbf{Regeneration} of possible erroneous nodes,
\State \hspace{1.6cm} \textbf{Additional Encoding} step and  \textbf{Generation} of $\bm{x}^{(l+1)}_j.$ 
\algstore{myalg}
\end{algorithmic}
\end{algorithm}

\begin{algorithm}                     
\begin{algorithmic} [1]                   
\algrestore{myalg}
\State \textbf{BACKPROPAGATION STAGE}
(Active Nodes: $0\leq i \leq m-1$, $0 \leq j \leq n+2t-1$)
\State \textbf{For} layers $l=L,L-1,\ldots,1$ serially:
\State \hspace{0.1cm} Compute  $(\bm{\delta}^l_i)^T\bm{W}^l_{i,j}$ or $(\bm{\delta}^l_i)^T\widetilde{\bm{W}}^l_{i,j}$ at \textit{active} nodes in parallel
\State \hspace{0.1cm} [All-reduce] Sum $(\bm{c}^l_j)^T=\sum_{i=0}^{m-1}(\bm{\delta}^l_i)^T\bm{W}^l_{i,j}$ or $(\widetilde{\bm{c}}^l_j)^T=\sum_{i=0}^{m-1}(\bm{\delta}^l_i)^T\widetilde{\bm{W}}^l_{i,j}$ at each \textit{active} column in parallel
\State \hspace{0.1cm} [All-Reduce $2t$ times] Perform $2t$ consistency checks (error-detection) using all $(\bm{c}^l_j)^T$ or $(\widetilde{\bm{c}}^l_j)^T$'s at each row in parallel
\State \hspace{0.1cm} Verification step to check for disagreement among all nodes (implies errors during error-detection):
\State \hspace{0.5cm} \textbf{If} disagreement, \textbf{Return} to last checkpoint
\State \hspace{0.5cm} \textbf{Else If} no errors detected: 
\State \hspace{0.8cm} \textbf{Additional Encoding} step at all inactive nodes:
\State \hspace{1.0cm} [Reduce $2t$ times] Encode $(\widetilde{\bm{\delta}}^l_i)^T$ for all \textit{inactive} rows, at columns $0$ to $n-1$ in parallel 
\State \hspace{0.8cm} \textbf{Generate} $\bm{\delta}^{l-1}_i$ at all active nodes:
\State \hspace{1.0cm} Compute $(\bm{\delta}^{l-1}_j)^T= (\bm{c}^l_j)^T \bm{D}^l_j$ at cols. $0$ to $n-1$ in parallel
\State \hspace{1.0cm} [Broadcast] Fetch appropriate parts of $\bm{\delta}^{l-1}_i$ from the nodes that have it, at each \textit{active} row in parallel
\State \hspace{0.5cm} \textbf{Else} \textbf{Decoding} at all active nodes:
\State \hspace{1.0cm} [All-gather] Get all $(\bm{c}^l_j)^T$ or $(\widetilde{\bm{c}}^l_j)^T$'s at each row in parallel and attempt to decode $(\bm{c}^l)^T$ at every \textit{active} node
\State \hspace{1.0cm} \textbf{If} more than $t$ errors, \textbf{Return} to last checkpoint
\State \hspace{1.0cm} \textbf{Else} Verification step to check for decoding errors:
\State \hspace{1.6cm} \textbf{If} disagreement, \textbf{Return} to last checkpoint.
\State \hspace{1.6cm} \textbf{Else} \textbf{Regeneration} of possible erroneous nodes,
\State \hspace{1.6cm} \textbf{Additional Encoding} step and  \textbf{Generation} of $\bm{\delta}^{l-1}_i$. 
\State \textbf{UPDATE STAGE} (All nodes)\\
$
\bm{W}^l_{i,j} \leftarrow \bm{W}^l_{i,j} + \bm{\delta}^l_i (\bm{x}_j^l)^T \ \forall \  0 \leq i \leq m-1  \text{ and } 0 \leq j \leq n-1 $\\
$ \widetilde{\bm{W}}^l_{i,j} \leftarrow \widetilde{\bm{W}}^l_{i,j} + \bm{\delta}^l_i (\widetilde{\bm{x}}^l_j)^T \ \forall \  0 \leq i \leq m-1  \text{ and }   n \leq j \leq n+2t-1 $\\
$ \widetilde{\bm{W}}^l_{i,j} \leftarrow \widetilde{\bm{W}}^l_{i,j} + \widetilde{\bm{\delta}}^l_i (\bm{x}^l_j)^T \ \forall \  m \leq i \leq m+2t-1  \text{ and }   0 \leq j \leq n-1.
$
\end{algorithmic}
\end{algorithm}

\clearpage

\section{Theoretical Analysis of Runtime}
\label{appendix:runtime}

We first elaborate the assumptions here:

\noindent 
\begin{itemize}
\item[(i)] Errors (under Error Model $2$) at any node may or may not have dependence on other nodes. Error events in an iteration are simply divided into three disjoint sets: (1) zero errors (probability $p_0$); (2) error patterns correctable by CodeNet (probability $p_1$); and
(3) error patterns not correctable by CodeNet (probability $p_2=1-p_0-p_1$). E.g., if CodeNet can correct any $t$ errors, then $p_1$ includes the probability of all error-patterns with at most $t$ errors. Observe that, for the errors captured by $p_1$, CodeNet proceeds forward to the next iteration after regenerating the corrupt sub-matrices, but the replication strategy simply reverts to the iteration of its last checkpoint. For error patterns captured by $p_2$, both strategies have to revert to their last checkpoint. 
\item[(ii)]  Period of checkpointing (number of iterations after which we checkpoint), $I_0$ is fixed, but can vary for different strategies. 
\item[(iii)]  Time taken by both replication and CodeNet for an error-free iteration is $\tau_f$. We justify this in Theorem~\ref{thm:complexity} by showing that the communication and computation complexities of replication and CodeNet are comparable. 

\item[(iv)] Denoting the time to resume from previous checkpoint (by reading from disc) by $\tau_b$, we assume that $\tau_b \gg \tau_f $ as fetching data from the disk is extremely time-intensive. We also pessimistically assume that the time to run an iteration with error correction and regeneration by CodeNet be $\tau_b$. When an error occurs, CodeNet proceeds forward in computation, but it also regenerates some of the sub-matrices of $\bm{W}^l$. This regeneration comes with extra communication cost, increasing the time of the iteration. But because it does not require reading from the disk, letting it be as high as $\tau_b$ is a actually a pessimistic assumption in evaluation of the performance of our strategy. 
\item[(v)] Time to save the entire state using checkpointing is $\tau_{cpt}$.  
\end{itemize}

In particular, if the number of errors in an iteration follow Poisson distribution\cite{li2007memory} with parameter $\lambda$, then the ratio of expected time scales with $\lambda$. Again, if any node fails independently with probability $p$ whenever it is used for one of the $3$ most complexity-intensive ($\Theta(N_lN_{l-1}/P)$) computations (O1, O2, or O3), in each layer, in an iteration, then $p_0=(1-p)^3\hat{P}L$ and $p_1 \geq\binom{3\hat{P}L}{1}p^1(1-p)^{3\hat{P}L-1}$ where $\hat{P}$ is the total number of nodes used and $3L$ is the number of times a node is used in an iteration, \textit{i.e.}, thrice for each layer. 

Now, we proceed to the proof.

\begin{proof}[Proof of Theorem \ref{thm:time}]
Let us consider the iterations over a single checkpointing iteration-period of $I_0$. Within one period, let $k$ denote the iteration index. The index takes values from $k=0$ to $k=I_0$. For all $k>0$, let $T_k$ be a random variable that denotes the time taken to reach iteration $k$ starting from $k=0$. We need to find $\mathbb{E}[T_{I_0}]$, \textit{i.e.} the expected time to reach iteration $k=I_0$ starting at $k=0$. Then, the expected time to complete $M$ iterations would be determined by $\frac{M}{I_0}\tau_{cpt} + \frac{M}{I_0}\mathbb{E}[T_{I_0}]$.
The figure below shows the Markov Chain of the different states starting at state $0$ till state $k+1$. Note that, at any state the algorithm can perform one of the following:
\begin{enumerate}
\item Move one step forward taking time $\tau_f$ with probability $p_0$, \textit{i.e.} when no error occurs.
\item Move one step forward but taking time $\tau_b$ with probability $p_1$, \textit{i.e.} when errors occur but are corrected.
\item Move backward to last checkpoint taking time $\tau_b$ with probability $p_2$, \textit{i.e.} when errors occur and cannot be corrected.
\end{enumerate}

\begin{tikzpicture}
[align=center,node distance=2cm]
        \node[state]             (0) {0};
        \node[state, right=of 0] (1) {1};
        \node[state, right=of 1] (2) {2};
        \node[draw=none, right=of 2, minimum width=2cm] (3-k) {$\cdots$};
 		\node[state, right=of 3-k] (k) {k};
        \node[state, right=of k] (k+1) {k+1};
        \draw[every loop]
            (0) edge[bend left,auto=left] node {$p_0,\tau_f$} (1)
            (0) edge[auto=left] node {$p_1,\tau_b$} (1)
            (1) edge[bend left,auto=left] node {$p_0,\tau_f$} (2)
            (1) edge[auto=left] node {$p_1,\tau_b$} (2)
            (2) edge[bend left,auto=left] node {$p_0,\tau_f$} (3-k)
            (2) edge[auto=left] node {$p_1,\tau_b$} (3-k)
            (3-k) edge[bend left, auto=left] node {$p_0,\tau_f$} (k)
            (3-k) edge[ auto=left] node {$p_1,\tau_b$} (k)
            (k) edge[bend left, auto=left] node {$p_0,\tau_f$} (k+1)
            (k) edge[ auto=left] node {$p_1,\tau_b$} (k+1)
            (1) edge[bend left,auto=right] node {$p_2,\tau_b$} (0)
            (2) edge[bend left, auto=right] node {$p_2,\tau_b$} (0)
            (k) edge[bend left, auto=right] node {$p_2,\tau_b$} (0);
\end{tikzpicture}

Now we use the following lemma.
\begin{lem}
\label{lem:markov_chain}
The expected time to reach state $k$ starting from state $0$ for this Markov Chain is given by 
$$\mathbb{E}[T_{k}] = (\tau_f p_0 + \tau_b(1-p_0)) \sum_{i=0}^{k-1} \frac{1}{(p_0+p_1)^i} $$
\end{lem}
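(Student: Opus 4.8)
The plan is to treat $\mathbb{E}[T_k]$ as an expected first-passage (hitting) time from the checkpoint state $0$ to state $k$, and to exploit the crucial \emph{restart} structure of the chain: a non-correctable error in \emph{any} state sends the walk all the way back to state $0$. Because of this, the journey $0\to k$ is a sequence of independent attempts to climb the ladder, each of which either completes or is reset to the checkpoint. First I would fix the abbreviations $q:=p_0+p_1$ for the per-iteration probability of advancing one step (cleanly with probability $p_0$, or after correction with probability $p_1$) and $\bar\tau:=\tau_f p_0+\tau_b(1-p_0)$ for the expected duration of a single iteration averaged over its three outcomes; note that $\bar\tau$ is exactly the prefactor in the statement, and that $p_2=1-q$ is the reset probability.

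The key step is a first-passage decomposition that yields a first-order linear recurrence in $k$. I would split the trip $0\to k$ into the trip $0\to k-1$, which costs $\mathbb{E}[T_{k-1}]$, followed by the final crossing from $k-1$ to $k$. From state $k-1$ the walk advances to $k$ with probability $q$ and, with probability $p_2$, pays time $\tau_b$, falls back to state $0$, and must re-climb before it can try again. Conditioning on this dichotomy and carefully charging the time and probability of the failed attempts (this is where the reset reintroduces a copy of the unknown on the right-hand side), the equation can be solved for $\mathbb{E}[T_k]$ and collapses to
\[
\mathbb{E}[T_k] = \frac{1}{p_0+p_1}\,\mathbb{E}[T_{k-1}] + \big(\tau_f p_0 + \tau_b(1-p_0)\big),\qquad \mathbb{E}[T_0]=0.
\]

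Finally I would solve this recurrence by unrolling it: iterating $k$ times turns the homogeneous coefficient $1/(p_0+p_1)$ into geometric weights and leaves precisely
\[
\mathbb{E}[T_k] = \big(\tau_f p_0 + \tau_b(1-p_0)\big)\sum_{i=0}^{k-1}\frac{1}{(p_0+p_1)^i},
\]
which is the claimed closed form; I would sanity-check the index range of the sum against the $k=1$ and $k=2$ cases. I expect the main obstacle to be \emph{setting up} the recurrence rather than solving it. The delicate bookkeeping is that a failure is charged time $\tau_b$ and returns the walk to state $0$ (not to state $k-1$), so the expected wasted work per failure is a full re-climb, and the self-referential term this creates must be handled consistently. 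A related subtlety is the checkpoint state itself: an uncorrectable error at state $0$ keeps the walk at $0$, i.e.\ state $0$ carries an (undrawn) self-loop, and one must pin down how its cost enters the accounting. Once these are settled, the remaining algebra is a routine geometric summation.
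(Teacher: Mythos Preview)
Your plan is exactly the paper's: decompose $0\to k$ into $0\to k-1$ plus the final crossing, condition on the three outcomes of that step so that the reset-to-$0$ introduces a self-referential copy of $\mathbb{E}[T_k]$ on the right, solve for a first-order linear recurrence, and then unroll it (the paper verifies the closed form by induction on the same recurrence, which is equivalent). The only slip is in your ``collapses to'' step: writing out the conditioning you describe gives
\[
\mathbb{E}[T_k]=\mathbb{E}[T_{k-1}]+p_0\tau_f+p_1\tau_b+p_2\bigl(\tau_b+\mathbb{E}[T_k]\bigr)=\mathbb{E}[T_{k-1}]+\bar\tau+p_2\,\mathbb{E}[T_k],
\]
and solving for $\mathbb{E}[T_k]$ yields $\mathbb{E}[T_k]=\frac{1}{p_0+p_1}\bigl(\mathbb{E}[T_{k-1}]+\bar\tau\bigr)$, so the constant $\bar\tau$ sits \emph{inside} the factor $1/(p_0+p_1)$; this is precisely the recurrence the paper writes down. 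Your own proposed $k=1$ sanity check would flag the discrepancy (one step from state $0$ has expected cost $\bar\tau$ but succeeds only with probability $p_0+p_1$), and with that correction the unrolling goes through exactly as you outline.
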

The proof of Lemma \ref{lem:markov_chain} is provided at the end of this section. First, using Lemma \ref{lem:markov_chain}, we can derive that the total time for $n$ iterations, including the time for checkpointing. This  is given by $$\frac{M}{I_0}\tau_{cpt} + \frac{M}{I_0}(\tau_f p_0 + \tau_b(1-p_0)) \frac{\frac{1}{(p_0+p_1)^{I_0}}-1}{\frac{1}{(p_0+p_1)}-1}. $$

For the replication strategy, the algorithm moves forward with probability $p_0$, and moves to the last checkpoint with probability $1-p_0$. The expected time can be derived by setting $p_1=0$ in the expression of CodeNet strategy. Thus, the total time for $M$ iterations, including the time for checkpointing, is given by $$\frac{M}{I_0}\tau_{cpt} + \frac{M}{I_0}(\tau_f p_0 + \tau_b(1-p_0)) \frac{\frac{1}{p_0^{I_0}}-1}{\frac{1}{p_0}-1}. $$ 
\end{proof}

What remains now is the proof of Lemma \ref{lem:markov_chain}.\\

\begin{proof}[Proof of Lemma \ref{lem:markov_chain}]

Observe the following:

\begin{align}
\mathbb{E}[T_{k+1}] &= \mathbb{E}[T_{k+1} |  \text{no error}] \Pr(\text{no error}) + \mathbb{E}[T_{k+1} | \text{recoverable error}] \Pr(\text{recoverable error}) \nonumber \\
& \hspace{5cm} + \mathbb{E}[T_{k+1} | \text{non-recoverable error}] \Pr(\text{non-recoverable error}) \nonumber \\
& = (\mathbb{E}[T_{k}+\tau_f])p_0 + (\mathbb{E}[T_{k}+\tau_b])p_1 + (\mathbb{E}[T_{k}+\tau_b + T_{k+1}])p_2 \nonumber \\
&= \mathbb{E}[T_{k}] + (\tau_f p_0 + \tau_b p_1 + \tau_b p_2) + p_2 \mathbb{E}[T_{k+1}] \nonumber \\
& = \mathbb{E}[T_{k}] + (\tau_f p_0 + \tau_b(1-p_0)) + p_2 \mathbb{E}[T_{k+1}] 
\end{align}

This leads to the following recursion:
\begin{align}
\mathbb{E}[T_{k+1}]  = \frac{1}{1-p_2} \left( \mathbb{E}[T_{k}] + (\tau_f p_0 + \tau_b(1-p_0)) \right) = \frac{1}{p_0 +p_1} \left(\mathbb{E}[T_{k}] + (\tau_f p_0 + \tau_b(1-p_0))\right)
\label{recursion}
\end{align}

We solve the recursion using induction.

\noindent 
\textbf{Induction Hypothesis:}
$$\mathbb{E}[T_{k}] = (\tau_f p_0 + \tau_b(1-p_0)) \sum_{i=0}^{k-1} \frac{1}{(p_0+p_1)^i} $$

\textbf{Initial Case:} $k=1$ 

\begin{align}
& \mathbb{E}[T_{1}]  =  (\tau_f p_0 + \tau_b p_1 ) + (\tau_b + \mathbb{E}[T_{1}])p_2 \nonumber \\
& \implies \mathbb{E}[T_{1}] = (\tau_f p_0 + \tau_b(1-p_0))  \ \ \text{(satisfies induction hypothesis)}.
\end{align}
Now assume that the induction hypothesis holds for any $k >1$. Then, using~\eqref{recursion}, we now show that it also holds for $k+1$:
\begin{align}
\mathbb{E}[T_{k+1}]  &= \frac{1}{p_0 +p_1} \left(\mathbb{E}[T_{k}] + (\tau_f p_0 + \tau_b(1-p_0))\right) \nonumber \\
& = \frac{1}{p_0 +p_1} \left( (\tau_f p_0 + \tau_b(1-p_0)) \sum_{i=0}^{k-1} \frac{1}{(p_0+p_1)^i} + (\tau_f p_0 + \tau_b(1-p_0))\right) \nonumber \\
& = (\tau_f p_0 + \tau_b(1-p_0)) \sum_{i=0}^{k} \frac{1}{(p_0+p_1)^i}.
\end{align}

Thus, from induction we obtain that,
\begin{align}
\mathbb{E}[T_{I_0}] = (\tau_f p_0 + \tau_b(1-p_0)) \sum_{i=0}^{I_0-1} \frac{1}{(p_0+p_1)^i} = (\tau_f p_0 + \tau_b(1-p_0)) \frac{\frac{1}{(p_0+p_1)^{I_0}}-1}{\frac{1}{(p_0+p_1)-1}}.
\end{align}
\end{proof}

\section{Computational and Communication Complexity Analysis}
\label{appendix:complexity}

Before proceeding to the proof of Theorem \ref{thm:complexity}, we first briefly discuss our definition of communication complexity and then briefly state the computation and communication complexities of some of the standard collective communication protocols that we discussed\cite{chan2007collective}.

\begin{defn}[Commnication Complexity]
The communication complexity of communicating $N$ real-valued numbers from one node to another in a single round is defined as $\alpha + \beta N$ where $\alpha$ and $\beta$ are system dependent constants. 
\end{defn}

\noindent \textbf{Remark:} Note that $\alpha$ denotes the latency associated with setting up the communication protocol between two nodes and $\beta$ denotes the cost or bandwidth consumed for each item sent. We also use the constant $\gamma$ as a system dependent scaling constant for computation complexity. Typically $\gamma \ll \beta \ll \alpha $.

Now, we will be stating the computation and communication complexities of the standard collective communication protocols that will be used in the proof of Theorem \ref{thm:complexity}. A detailed analysis and discussion is provided in \cite{chan2007collective}.

\begin{itemize}
\item Reduce: In a cluster of $P$ total nodes, each node initially has a vector,~ e.g., $\bm{a}_p$ of length $N$. After Reduce, one node gets the sum $\sum_{p=1}^{P}\bm{a}_p$. The computation cost is $\gamma \frac{P-1}{P}N $ and the communication cost is $\alpha \log{P} + \beta N $.

\item All-Reduce: In a cluster of $P$ total nodes, each node initially has a vector,~ e.g., $\bm{a}_p$ of length $N$. After All-Reduce, each node gets the sum $\sum_{p=1}^{P}\bm{a}_p$. The computation cost is $\gamma \frac{P-1}{P}N $ and the communication cost is $\alpha \log{P} + 2\beta \frac{P-1}{P} N $.

\item Gather and All-Gather: In a cluster of $P$ total nodes, each node initially has a vector,~ e.g., $\bm{a}_p$ of length $N$. After Gather (or All-Gather), one node (or all the nodes) gets all the vectors $\{\bm{a}_p |p=1,2,\ldots,P\}$. There is no computation cost, and the communication cost is $\alpha \log{P} + 2\beta (P-1) N  $.

\item Broadcast: In a cluster of $P$ total nodes, one node initially has a vector,~ e.g., $\bm{a}_p$ of length $N$. After Broadcast, all the nodes get the vector $\bm{a}_p$. There is no computation cost, and the communication cost is $\alpha \log{P} + \beta N  $.
\end{itemize}

\noindent \textbf{Remark:} These communication complexities are achievable using tree-type communication algorithms \cite{chan2007collective}, that efficiently use all the nodes to reduce the total communication cost.

\begin{proof}[Proof of Theorem \ref{thm:complexity}] 
For the first part of the theorem, we will calculate the communication cost of CodeNet and Replication at a single layer, in an error-free iteration. We let $m=n=\sqrt{P}$ for simplicity.

For CodeNet, the steps involving communication in the feedforward stage are as follows:
\begin{itemize}
\item All-Reduce operation to compute the sums at each active row in parallel: $$\text{Communication complexity is: }  \alpha \log{(\sqrt{P})} + 2\beta (\sqrt{P}-1)\frac{N_l}{P} .$$
\item All-Reduce operation $2t$ times to detect errors at each active column in parallel: $$ \text{Communication complexity is: }  \alpha \log{(\sqrt{P}+2t)} + 4t\beta \frac{(\sqrt{P}+2t-1)}{\sqrt{P}+2t} \frac{N_l}{\sqrt{P}} .$$
\item Verification Step under Error Model $2$ to check for errors during detection: $$ \text{Communication complexity is: }  \alpha \log{(\hat{P})} + \beta \hat{P}t .$$
\item Reduce operation for the additional encoding at the $2t$ inactive columns, at rows $0$ to $m-1$ in parallel: 
$$\text{Communication complexity is: }  \alpha \log{(\sqrt{P}+2t)} + 2t\beta \frac{N_{l-1}}{\sqrt{P}} .$$
\item Broadcast $\bm{x}^{(l+1)}_j$ from node $(j,j)$ to all nodes in the $j$-th active column, for each columns $0$ to $n-1$ in parallel:
$$\text{Communication complexity is: } \alpha \log{(\sqrt{P}+2t)} + \beta \frac{N_{l}}{\sqrt{P}}.$$
\end{itemize}

Similarly, the steps involving communication in the backpropagation stage are as follows:
\begin{itemize}
\item All-Reduce operation to compute the sums at each active column in parallel: $$\text{Communication complexity is: }  \alpha \log{(\sqrt{P})} + 2\beta (\sqrt{P}-1)\frac{N_{l-1}}{P} .$$
\item All-Reduce operation $2t$ times to detect errors at each active row in parallel: $$ \text{Communication complexity is: }  \alpha \log{(\sqrt{P}+2t)} + 4t\beta \frac{(\sqrt{P}+2t-1)}{\sqrt{P}+2t} \frac{N_{l-1}}{\sqrt{P}} .$$
\item Verification Step under Error Model $2$ to check for errors during detection: $$ \text{Communication complexity is: }  \alpha \log{(\hat{P})} + \beta \hat{P}t .$$
\item Reduce operation for the additional encoding at the $2t$ inactive rows, at columns $0$ to $n-1$ in parallel: 
$$\text{Communication complexity is: }  \alpha \log{(\sqrt{P}+2t)} + 2\beta t\frac{N_{l}}{\sqrt{P}} .$$
\item Broadcast $\bm{\delta}^{(l-1)}_i$ from node $(i,i)$ to all nodes in the $i$-th active row, for rows $0$ to $m-1$ in parallel:
$$\text{Communication complexity is: } \alpha \log{(\sqrt{P}+2t)} + \beta \frac{N_{l-1}}{\sqrt{P}}.$$
\end{itemize}

There is no more communication in the update stage. Thus, the total communication cost of CodeNet at a single layer, in an error-free iteration, is upper-bounded by:
\begin{align}
&8\alpha \log{(\sqrt{P}+2t)} + \beta\left( 2\frac{\sqrt{P}-1}{\sqrt{P}} + 4t \frac{\sqrt{P}+2t-1 }{\sqrt{P}+2t}  +2t +1 \right)\left(\frac{N_l}{\sqrt{P}}+ \frac{N_{l-1}}{\sqrt{P}} \right) + 2( \alpha \log{(\hat{P})} + \beta \hat{P}t ) \nonumber \\
&\leq 8\alpha \log{(\sqrt{\hat{P}})} + \beta(6t+3)\left(\frac{N_l}{\sqrt{P}}+ \frac{N_{l-1}}{\sqrt{P}} \right)+ 2( \alpha \log{(\hat{P})} + \beta \hat{P}t ) \nonumber \\
& = 6\alpha \log{(\hat{P})} + \beta (6t+3) \left(\frac{N_l}{\sqrt{P}}+ \frac{N_{l-1}}{\sqrt{P}} \right) + 2\beta \hat{P}t .
\end{align}
To compare, the steps of the replication strategy that involve communication are primarily a Reduce and a Broadcast at the matrix-vector products in the feedforward and backpropagation stages, along with additional communication across nodes computing the same output for comparing the outputs (or parts of it) to detect errors. The additional communication for exchange of outputs can be kept low, particularly if, only a part of the whole output and not the whole of it is exchanged. As an optimistic estimate for replication, we therefore ignore the communication cost for exchanging the outputs for error detection, and calculate the communication cost only from the Reduce and Broadcast. The total complexity is thus given by:
\begin{align}
4\alpha \log{(\sqrt{P})} + 2\beta  \left( \frac{N_l}{\sqrt{P} } + \frac{N_{l-1} }{\sqrt{P} } \right) = 2\alpha \log{(P)} + 2\beta  \left( \frac{N_l}{\sqrt{P} } + \frac{N_{l-1} }{\sqrt{P} } \right) .
\end{align}

Thus, an upper-bound on the ratio of the communication complexities of CodeNet to replication for a single layer, in an error free iteration, is given by:
\begin{align}
\frac{\text{Comm. Complexity (CodeNet) } }{ \text{Comm. Complexity (replication) } } \leq \frac{6\alpha \log{(\hat{P})} + \beta (6t+3) \left(\frac{N_l}{\sqrt{P}}+ \frac{N_{l-1}}{\sqrt{P}} \right) + 2\beta  \hat{P}t}{2\alpha \log{(P)} + 2\beta \left( \frac{N_l}{\sqrt{P} } + \frac{N_{l-1} }{\sqrt{P} } \right)  }.
\end{align}
In the limit of $\hat{P},P,N_{l},N_{l-1} \to \infty$, this ratio scales as $\mathcal{O}(3t)$ as long as $\hat{P}^{3/2}=o(\min \{N_l, N_{l-1} \} ) $.

Next, we prove the second part of the theorem which compares the computational complexity of CodeNet and replication. In the feedforward stage for both CodeNet and replication, each node primarily computes a matrix-vector product of complexity $2\gamma \frac{N_l N_{l-1} }{P}$, followed by either an All-Reduce or Reduce operation, both with the same computational complexity of $\gamma \frac{\sqrt{P}-1}{\sqrt{P}} \frac{N_{l}}{\sqrt{P}}$. Then, CodeNet performs $2t$ consistency checks to detect errors which contribute to a computational complexity of $ \gamma (2t)\frac{\sqrt{P}+2t-1}{\sqrt{P}+2t}  \frac{N_{l}}{\sqrt{P}}$. Under Error Model $2$, CodeNet then performs a verification step to check for errors during the detection step of computational complexity $\gamma(P+2t\sqrt{P})t \leq \gamma \hat{P}t $, followed by the additional encoding step of computational complexity $\gamma (2t) \frac{\sqrt{P}+2t-1}{\sqrt{P}+2t}  \frac{N_{l-1}}{\sqrt{P}} $.  

Similarly, in the backpropagation stage for both CodeNet and replication, each node primarily computes a matrix-vector product of complexity $2\gamma \frac{N_l N_{l-1} }{P}$, followed by either an All-Reduce or Reduce operation, both with the same computational complexity of $\gamma \frac{\sqrt{P}-1}{\sqrt{P}} \frac{N_{l-1}}{\sqrt{P}}$. Then, CodeNet performs $2t$ consistency checks to detect errors which contribute to a computational complexity of $ \gamma (2t)\frac{\sqrt{P}+2t-1}{\sqrt{P}+2t}  \frac{N_{l-1}}{\sqrt{P}}$. Under Error Model $2$, CodeNet then performs a verification step to check for errors during the detection step of computational complexity $\gamma (P+2t\sqrt{P})t \leq \gamma \hat{P}t $, followed by the additional encoding step of computational complexity $\gamma (2t) \frac{\sqrt{P}+2t-1}{\sqrt{P}+2t}  \frac{N_{l}}{\sqrt{P}} $.  

Finally, in the update stage, the computational complexity is again $2\gamma \frac{N_l N_{l-1} }{P}$ which is the complexity of performing a rank-$1$ update on a matrix of size $\frac{N_l}{\sqrt{P}}\times \frac{N_{l-1}}{\sqrt{P}}$.

Thus, the ratio of the communication complexities of CodeNet to replication is upper-bounded by:
\begin{align}
\frac{\text{Comp. Complexity (CodeNet)}}{\text{Comp. Complexity (Replication)}}\leq \frac{6\gamma \frac{N_l N_{l-1} }{P} + \gamma(1+4t)\left(\frac{N_l}{\sqrt{P} }+\frac{N_{l-1}}{\sqrt{P}}\right) + 2\gamma\hat{P}t }{6\gamma \frac{N_l N_{l-1} }{P} + \gamma \left(\frac{N_l}{\sqrt{P} }+\frac{N_{l-1}}{\sqrt{P}}\right) }.
\end{align}
In the limit of $\hat{P},P,N_l,N_{l-1}\to \infty$, the ratio scales as $\Theta(1)$ as long as $\hat{P}^{3/2}=o(\min \{N_l, N_{l-1} \} ) $.
\end{proof}

\end{document}